\normalfont\fontsize{10}{17}\bfseries}{\thesubsection}{1em}{}
\journal{TBA}
\newtheorem{theorem}{Theorem}
\newtheorem{corollary}{Corollary}
\newtheorem{lemma}{Lemma}
\newtheorem{proposition}{Proposition}
\newtheorem{definition}{Definition}
\theoremstyle{remark}
\newtheorem{remark}[theorem]{Remark}
\numberwithin{equation}{section}
\newcommand{\Tc}{{\mathcal T}}
\renewcommand{\d}{\mathrm{d}}
\DeclareMathSymbol{\shortminus}{\mathbin}{AMSa}{"39}
\title{
%When to exit a liquidity pool?\\
% Optimal Exit Time from a Liquidity Pool
%Stay or leave: balancing the tradeoff between impermanent loss and collected fees\\
%When should I withdraw liquidity from an AMM?\\
Optimal Exit Time for Liquidity Providers in Automated Market Makers
}
\author[label3]{Philippe Bergault}
\author[label3]{Sébastien Bieber}
 \author[label1,label2]{Leandro S\'{a}nchez-Betancourt}
\address[label3]{Ceremade, Université Paris Dauphine-PSL}
\address[label1]{Mathematical Institute, University of Oxford}
\address[label2]{Oxford-Man Institute of Quantitative Finance}
\begin{document}

\newcommand{\la}{\left \langle}
\newcommand{\ra}{\right\rangle}
\newcommand{\cb}[1]{{\color{blue} #1}}
\newcommand{\norm}[1]{\left\lVert #1 \right\rVert}
\newcommand{\bae}{\begin{equation}\begin{aligned}}
\newcommand{\eae}{\end{aligned}\end{equation}}
\newcommand{\beq}{\begin{equation}}
\newcommand{\eeq}{\end{equation}}
\newcommand{\N}{\mathbb{N}}
\newcommand{\R}{\mathbb{R}}
\newcommand{\E}{\mathbb{E}}
\newcommand{\Pb}{\mathbb{P}}
\newcommand{\PbI}{\mathbb{P}^I}
\newcommand{\PbB}{\mathbb{P}^B}
\newcommand{\Lb}{\mathbb{L}}

\newcommand{\mfT}{{\mathfrak{T}}}
\newcommand{\mfO}{{\mathfrak{O}}}
\newcommand{\tT}{{t\in\mfT}}
\newcommand{\mcA}{{\mathcal{A}}}
\newcommand{\mcC}{{\mathcal{C}}}
\newcommand{\mcF}{{\mathcal{F}}}
\newcommand{\mcN}{{\mathcal{N}}}
\newcommand{\mcB}{{\mathcal{B}}}
\newcommand{\mcH}{{\mathcal{H}}}

\newcommand{\transB}{\mathfrak{t}}
\newcommand{\decayB}{\mathfrak{p}}
\newcommand{\instantB}{\mathfrak{h}}

%%%%%%%%%% Start TeXmacs macros
\newcommand{\assign}{:=}
\newcommand{\nobracket}{}
\newcommand{\tmop}[1]{\ensuremath{\operatorname{#1}}}
\newcommand{\tmtextit}[1]{\text{{\itshape{#1}}}}
%%%%%%%%%% End TeXmacs macros
%

\newcommand{\F}{\mathcal{F}}
\newcommand{\Fb}{\mathbb{F}}
\newcommand{\Prob}{\mathbb{P}}
\newcommand{\X}{\mathbb{X}}
\newcommand{\Ss}{\mathcal{S}}

\newcommand{\Real}{\mathbb{R}}
\newcommand{\Y}{\mathbb{\mathcal{Y}}}
\newcommand{\tmL}{\mathbb{L}}
\newcommand{\Ll}{\mathcal{L}}
\newcommand{\Pp}{\mathcal{P}}
\newcommand{\Bb}{\mathcal{B}}
\newcommand{\Dd}{\mathcal{D}}
\newcommand{\Ee}{\mathcal{E}}
\newcommand{\Nn}{\mathcal{N}}
\newcommand{\Q}{\mathbb{Q}}
\newcommand{\ch}{\mathds{1}}

\newcommand{\wealth}{\text{M}}
\newcommand{\rwealth}{Q}
\newcommand{\rwealthD}{q}

\newcommand{\tempLP}{\mathfrak{a}}
\newcommand{\tempB}{\mathfrak{a}}
\newcommand{\tempI}{\mathfrak{b}}
\newcommand{\tempU}{\mathfrak{c}}
\newcommand{\permB}{\mathfrak{p}}
\newcommand{\termpB}{\phi}
\newcommand{\termpI}{\psi}
\newcommand{\runnB}{r^B}
\newcommand{\runnI}{r^I}

\renewcommand{\d}{\mathrm{d}}

\newcommand{\stateB}{\mathfrak{y}}
\newcommand{\stateI}{\mathfrak{x}}
\newcommand{\state}{\mathfrak{y}}
\newcommand{\statevn}{\mathfrak{y}}

\newcommand{\nuI}{\eta}
\newcommand{\nuB}{\nu}
\newcommand{\nuU}{\xi}

\newcommand{\nuBstar}{\nu^*}
\newcommand{\nuIstar}{\eta^*}

\newcommand{\varI}{\mathbb{V}^I}
\newcommand{\varB}[1][B]{\mathbb{V}^{#1}}

\newcommand{\betaI}{\beta_0^I}
\newcommand{\betaaI}{\beta_1^I}
\newcommand{\rhoI}{\rho_0^I}
\newcommand{\rhooI}{\rho_1^I}

\newcommand{\betaB}{\beta_0^B}
\newcommand{\betaaB}{\beta_1^B}
\newcommand{\rhoB}{\rho_0^B}
\newcommand{\rhooB}{\rho_1^B}

\newcommand{\runcostI}{\rho_0^I + \rho_1^I\, \varI}
\newcommand{\runcostB}{\rho_0^B + \rho_1^B\, \varB}

\newcommand{\termcostI}{\beta_0^I + \beta_1^I\, \varI}
\newcommand{\termcostB}{\beta_0^B + \beta_1^B\, \varB}

\newcommand{\constrB}{\tempB - f_2^2\,\tempI}

\newcommand{\MI}[1][I]{M^{#1}}
\newcommand{\MB}[1][B]{M^{#1}}
\newcommand{\MBa}[1][B]{\tilde{N}^{#1}}
\newcommand{\MBb}[1][B]{\tilde{M}^{#1}}
\newcommand{\MZ}[1][Z]{M^{#1}}

\newcommand{\QI}[1][I]{Q^{#1}}
\newcommand{\QB}[1][B]{Q^{#1}}
\newcommand{\QU}[1][U]{Q^{#1}}
\newcommand{\XI}[1][I]{X^{#1}}
\newcommand{\XB}[1][B]{X^{#1}}
\newcommand{\XU}[1][U]{X^{#1}}

\newcommand{\QIstar}[1][I]{Q^{#1,*}}
\newcommand{\QBstar}[1][B]{Q^{#1, *}}
\newcommand{\XIstar}[1][I]{X^{#1, *}}
\newcommand{\XBstar}[1][B]{X^{#1, *}}

\newcommand{\mcFB}{\mcF^B}
\newcommand{\mcFI}{\mcF^I}
\newcommand{\vfB}{J^{B*}}
\newcommand{\vfI}{J^{I*}}
\newcommand{\pcB}{J^B}
\newcommand{\pcI}{J^I}
\newcommand{\mcAB}{\mcA^B}
\newcommand{\mcAI}{\mcA^I}

\newcommand{\g}{g}
\newcommand{\gI}{g^{I}}
\newcommand{\gB}{g^{B}}
\newcommand{\gZ}{g^{Z}}
\newcommand{\gY}{g^{Y}}
\newcommand{\h}{h}
\newcommand{\hI}{h^{I}}
\newcommand{\hB}{h^{B}}
\newcommand{\hZ}{h^{Z}}
\newcommand{\hY}{h^{Y}}
\newcommand{\f}{f}
\newcommand{\fI}{f^{I}}
\newcommand{\fB}{f^{B}}
\newcommand{\fZ}{f^{Z}}
\newcommand{\fY}{f^{Y}}

\newcommand{\Pn}{P^{\hat{\nu}}}
\newcommand{\Pa}{P^{\hat{\alpha}}}

\newcommand{\sigmaU}{\sigma^U}
\newcommand{\sigmaM}{\sigma^M}
\newcommand{\sigmaS}{\sigma^S}

\newcommand{\vp}{\varphi}

\newcommand{\rew}{\mathfrak{R}}
\newcommand{\fee}{\mathfrak{r}}
\newcommand{\feeV}{\mathfrak{r}}
\newcommand{\cost}{\mathfrak{c}}

\newcommand{\Na}{N^-}%{N^a}
\newcommand{\Nb}{N^+}%{N^b}
\newcommand{\Nab}{N^\pm}
\newcommand{\Ni}{N^i}
\newcommand{\hatNa}{\hat{N}^a}%{\hat N^a}
\newcommand{\hatNb}{\hat{N}^b}%{\hat N^b}
\newcommand{\hatNab}{\hat{N}^{b/a}}
\newcommand{\hatNi}{\hat{N}^i}
\newcommand{\tildeNa}{\tilde{N}^a}%{\tilde N^a}
\newcommand{\tildeNb}{\tilde{N}^b}%{\tilde N^b}
\newcommand{\tildeNab}{\tilde{N}^{b/a}}
\newcommand{\tildeNi}{\tilde{N}^i}

\newcommand{\lambdaa}{\lambda^a}%{\lambda^a}
\newcommand{\lambdab}{\lambda^b}%{\lambda^b}
\newcommand{\lambdaab}{\lambda^{b/a}}
\newcommand{\lambdai}{\lambda^i}
\newcommand{\barlambdaa}{\bar{\lambda}^a}%{\bar\lambda^a}
\newcommand{\barlambdab}{\bar{\lambda}^b}%{\bar\lambda^b}
\newcommand{\barlambdaab}{\bar{\lambda}^{b/a}}
\newcommand{\barlambdai}{\bar{\lambda}^i}

\newcommand{\Aa}{A^-}%{N^a}
\newcommand{\Ab}{A^+}%{N^b}
\newcommand{\Aab}{A^\pm}
\newcommand{\Ai}{A^i}
\newcommand{\AW}{A^W}%{N^a}
\newcommand{\AB}{A^B}%{N^b}

\newcommand{\Deltarwealtha}{\Delta^-} %\rwealth
\newcommand{\Deltarwealthb}{\Delta^+}
\newcommand{\Deltarwealthab}{\Delta^\pm}
\newcommand{\Deltarwealthi}{\Delta^i}

\newcommand{\deltaa}{\delta^-}
\newcommand{\deltab}{\delta^+}
\newcommand{\deltai}{\delta^i}
\newcommand{\deltaab}{\delta^\pm}

\newcommand{\Za}{Z^-}
\newcommand{\Zb}{Z^+}
\newcommand{\Zi}{Z^i}
\newcommand{\Zab}{Z^\pm}

\newcommand{\va}{v^-}
\newcommand{\vb}{v^+}
\newcommand{\vi}{v^i}
\newcommand{\vab}{v^\pm}

\newcommand{\Vvn}{V}  %^{\text{venue}}
\newcommand{\Vvna}{V^-}
\newcommand{\Vvnb}{V^+}
\newcommand{\Vvni}{V^i}
\setlength\parindent{0pt}

\begin{abstract}
We study the problem of optimal liquidity withdrawal for a representative liquidity provider (LP) in an automated market maker (AMM). LPs earn fees from trading activity but are exposed to impermanent loss (IL) due to price fluctuations. While existing work has  focused on static provision and exogenous exit strategies, we characterise the optimal exit time as the solution to a stochastic control problem with an endogenous stopping time. Mathematically, the LP's value function is shown to satisfy a Hamilton–Jacobi–Bellman quasi-variational inequality, for which we establish uniqueness in the viscosity sense. To solve the problem numerically, we develop two complementary approaches: a Euler scheme based on operator splitting and a Longstaff–Schwartz regression method. Calibrated simulations highlight how the LP's optimal exit strategy depends on the oracle price volatility, fee levels, and the behaviour of arbitrageurs and noise traders. Our results show that while arbitrage generates both fees and IL, the LP's optimal decision balances these opposing effects based on the pool state variables and price misalignments. 
Lastly, we find the optimal fee level for the representative LP when they play  the exit strategy we derived.
This work contributes to a deeper understanding of dynamic liquidity provision in AMMs and provides insights into the sustainability of passive LP strategies under different market regimes.

\vspace{0.5cm}

Keywords: automated market makers; decentralised finance; liquidity provision; optimal stopping; Longstaff-Schwartz algorithm; variational inequalities; Hamilton-Jacobi-Bellman equation; viscosity solutions.
\end{abstract}

\maketitle

\tableofcontents

\section{Introduction}

Decentralised finance (DeFi) is evolving rapidly. Since the launch of Uniswap, there is a flurry of innovations  in the DeFi space, many of which revolve around the so-called automated market maker (AMM) technology. AMMs are decentralised trading venues where liquidity providers (LPs) and liquidity takers (LTs) interact under pre-defined trading rules; see \cite{adams2021uniswap,adams2023uniswap,capponi2021adoption}. AMMs such as Uniswap are key to DeFi, enabling users to trade assets without using traditional limit order books. Among the most prominent AMM designs, the constant function market maker (CFM) maintains an invariant of the form \( f(x, y) = k \), where \( x \) and \( y \) denote the reserves of two assets, and \( f \) is a fixed function that determines the trading rules. A key attraction of CFMs is its passive liquidity provision: liquidity providers are able to supply capital to the pool without the need for continuous rebalancing or active order placement, as is required in traditional limit order book markets. In return for providing liquidity, LPs earn a pro-rata share of the fees collected on trades executed in the AMM. However, LPs also face risks, most notably the so-called impermanent loss (IL), which arises due to price fluctuations between the pooled assets and external market prices.\\

Although still in its early stages, the literature on AMMs is rapidly expanding in several key directions. A foundational contribution is provided by \cite{angeris2021analysis}, who analyse the core properties of CFMs; see also \cite{angeris2020improved} for a general multi-coin setting. In particular, they show that, in the presence of arbitrageurs, the exchange rate proposed by a CFM for small trades remains within a band around the external market price, where the width of the band is determined by the level of transaction fees. The authors also establish several fundamental properties of CFMs, such as the no-splitting and no-depletion properties, and they characterise the payoff of LPs as a function of the external asset price in the absence of fees;  see also \cite{clark2020replicating}, and its extension to concentrated liquidity in Uniswap v3 in \cite{clark2021replicating}. Generalisations of AMM designs have been proposed in various directions. \cite{cartea2024strategic} introduce decentralised liquidity pools, generalising CFMs and offering LPs a wide range of dynamic bonding curves to improve capital efficiency, while \cite{bergault2024automated, bergault2024pegged, bergault2024price} propose an oracle-based AMM architecture aimed at mitigating impermanent loss. More broadly, the problem of hedging impermanent loss is studied in \cite{fukasawa2024model}, where the authors analyse how LPs can reduce their exposure through dynamic trading strategies, and in \cite{milionis2022automated, milionis2024automated} where the authors introduce the now well-known concept of Loss-Versus-Rebalancing. The related notion of predictable loss is discussed in \cite{cartea2024decentralized}, where the authors also analyse liquidity provision and trading strategies; see also \cite{milionis2023myersonian}, and \cite{fan2021strategic}. Arbitrage between centralised and decentralised exchanges have been studied in \cite{cartea2023decentralised, he2025arbitrage}, and more generally the competition between limit order books and AMMs is discussed in \cite{aoyagi2025coexisting, barbon2021quality} and \cite{lehar2025decentralized}. Finally, equilibrium  between LPs is studied in \cite{aoyagi2020liquidity} and \cite{hasbrouck2022need}.
\\

It is now well understood that the fees paid to LPs are the main mechanism  to compensate for the IL that arises when arbitrageurs align the pool's reference price and the external market price. More precisely, in the absence of fees, an agent providing liquidity in a standard CFM is exposed to a concave payoff that is strictly dominated by holding the assets outside the pool. Optimal fee structures have been studied in \cite{evans2021optimal, cao2025structural, baggiani2025optimal, campbell2025optimal}, while the problem of incentive design for LPs is addressed in \cite{aqsha2025equilibrium} through a game-theoretic argument. The question of take rate -- the proportion
of the fees kept by the protocol -- has also been studied in \cite{fritsch2022economics}. While fees are ideally meant to offset IL, empirical and theoretical studies have shown that LPs often experience persistent negative returns, particularly in volatile markets; see e.g., \cite{canidio2023arbitrageurs}. This poses a serious challenge to the long-term sustainability of passive liquidity provision; see, e.g., \cite{hasbrouck2025economic}.\\

In this paper, we are interested in studying the optimal exit time from a CFM when the goal is to minimise IL and to maximise the fees collected. More precisely, the LP can withdraw liquidity at any time before a fixed terminal horizon~$T$, and aims to maximise the expected value of their position, accounting for both accrued fees and the realised impermanent loss coming from the  price evolution. This naturally leads to a stochastic control problem, where the exit time is treated as a control variable. Closest to our work is the recent preprint by \cite{zhu2025optimal}, which also studies the optimal exit time from a CFM.\footnote{We are also aware of other concurrent efforts within this general theme. For instance, \cite{ma2025optimal} study allocation and exit decisions in a liquid staking protocol and an AMM. Recently, Zachary Feinstein and Marina Georgiou presented related preliminary results at the 2025 SIAM Conference on Financial Mathematics and Engineering (no preprint available at the time of writing). See also \cite{agarwal2025optimal}.} This paper provides a tractable framework with  closed-form expressions that are easy to implement and to analyse. In that sense, it offers valuable structural insights into the optimal exit problem. While our work addresses a similar question, it departs from \cite{zhu2025optimal} in several key directions. First, the optimal threshold derived in the latter is static, while in our formulation it is dynamic: the exit region evolves over time and depends on the current (stochastic and controlled) state variables. Second, \cite{zhu2025optimal} assumes that ``the price'' follows a geometric Brownian motion, whereas we explicitly distinguish between a fundamental (external) price -- where price formation occurs -- and the pool's internal reference price, which we derive from a detailed model of order flows within the AMM. Third, while \cite{zhu2025optimal} abstracts from strategic liquidity takers, our framework models arbitrage activity explicitly. This allows us to analyse how arbitrageurs align internal and external prices, how their trades generate realised IL, and how this affects the optimal exit decision. In our setting, the optimal strategy depends not only on the external price (e.g., on Binance) but also on the state of the pool (e.g., reserves and internal price). Finally, we extend our analysis to account for risk aversion on the LP’s side — a natural feature in applications and, to the best of our knowledge, not yet addressed in the existing literature.\\

Our key contributions are as follows.
\begin{enumerate}
    \item We characterise the optimal exit time of a liquidity provider from a CFM. Mathematically, we show that the value function of the corresponding optimal stopping problem is the unique viscosity solution to a Hamilton–Jacobi–Bellman quasi-variational inequality (HJB QVI) that delineates the optimal exit region.

    \item We propose two numerical approaches to solve the problem: an Euler scheme based on operator splitting, and a regression-based Longstaff–Schwartz algorithm.

    \item We calibrate the model and test the optimal strategy. Our results reveal several key findings:
    \begin{enumerate}
        \item Both the average fees collected and the average impermanent loss at the optimal exit time are concave functions of the volatility of the oracle price.

        \item The performance criterion (defined as fees collected minus impermanent loss), which is non-negative in expectation since exiting the pool immediately is always admissible, increases approximately linearly with the fee level beyond a certain threshold. This is because, once fees are high enough, LPs optimally choose to remain in the pool until the terminal time.

        \item Higher activity by arbitrageurs and noise traders leads to higher fee revenues. However, arbitrageurs also increase the realised impermanent loss at the exit time, up to the maximum loss implied by the oracle price volatility. In contrast, an increase in noise trader activity raises the fees collected while leaving impermanent loss unchanged on average. 

        \item LPs tend to exit the pool when the misalignment between the oracle price and the pool’s internal reference price becomes too large. At that point, the expected gains from future arbitrage trades are outweighed by the associated impermanent loss, prompting LPs to exit before arbitrageurs align prices.
    \end{enumerate}
\end{enumerate}

The remainder of the paper proceeds as follows. Section \ref{sec: the model} introduces the probabilistic framework and formulates the optimal stopping problem. Section \ref{sec: mathematical analysis} analyses the problem using dynamic programming techniques, and in particular characterises the value function as the unique viscosity solution of an HJB QVI equation. Section \ref{sec: numerical results} studies the solution of the problem and collects insights on the structure of the optimal strategy, and Section \ref{sec: conclusions} concludes. \ref{sec:app_risk_averse} generalises our result to a risk-averse LP with exponential utility, while the proof of the main result is reported in \ref{sec:proof_viscosity}.

\section{The model}\label{sec: the model}

\subsection{Probabilistic framework}

Let $T>0$ be a trading horizon, $\Omega_c$ the set of continuous functions from $\mfT = [0,T]$ into $\mathbb R$, $\Omega_d$ the set of piecewise constant càdlàg functions from $\mfT$ into $\mathbb N$, and $\Omega = \Omega_c \times \big( \Omega_d \big)^2$ with the corresponding Borel algebra $\mathcal F$. The observable state is the canonical process $ (\chi_t)_{t \in \mfT} = \big(W_t,  \hat N^b_t,  \hat N^a_t \big)_{t \in \mfT}$ of the measurable space $(\Omega, \mathcal F)$, with
\[
 W_t(\omega) := w(t), \;  \hat N^b_t(\omega) :=   \hat n^b(t), \;  \hat N^a_t(\omega) :=   \hat n^a(t), \; \text{for all } t \in \mfT,
\]
where $\omega := (w,\hat n^b, \hat n^a) \in \Omega.$\\

We introduce a probability measure $\hat {\mathbb P}$ such that $W$ is a Brownian motions and $ \hat N^a$, $ \hat N^b$ are Poisson processes with intensity $a_0>0$. Under this probability measure, $W$, $ \hat N^a$ and $ \hat N^b$ are independent. We endow the space $(\Omega, \mathcal F )$ with the $\hat{\mathbb P}-$augmented canonical filtration $\mathbb F := \left(\mathcal F_t  \right)_{t\in \mfT} $ generated by $ (\chi_t)_{t \in \mfT}$.\\

We study an automated market maker (AMM) for a pair of assets $X$ and $Y$ (e.g., USDC and ETH) with trading function $f(x,y)=x\,y$. This type of AMM is referred to as constant function market (CFM). We assume that there exists also an external limit order book (LOB) venue for trading in $X$ and $Y$, where the price formation occurs. We let $S=(S_t)_{t\in \mfT}$ be the external mid-price of asset $Y$ in terms of asset $X$. Within the CFM, we let $X_t$ and $Y_t$ be the strictly positive quantities of assets $X$ and $Y$ sitting in the pool at time $t\in\mfT$.  We let the function $\varphi_{c}(y)$ be the level function of $f$ given by
\begin{equation}
    \varphi_{c}(y) = \frac{c}{y}\,.
\end{equation}
The parameter $c>0$ in the pool is known as the depth of the pool which we take to be constant throughout the trading window.\\

The external mid-price follows the dynamics
\begin{equation}
     S_t = S_0 + \sigma \,W_t\,,\qquad \text{with } S_0\in\mathbb{R}^+\,.
\end{equation}
In the CFM, liquidity takers arrive according to the counting processes $N^a$ (number of liquidity taking buys) and $N^b$ (number of liquidity taking sells) that model the number of trades of size $\xi>0$ through time.\footnote{We assume a constant trade size for the sake of parsimony, but more general models with random trade sizes can be considered as well within our framework.} We denote by $Z=(Z_t)_{t\in \mfT}= (X_t/Y_t)_{t\in \mfT}$ the marginal price of $Y$ in terms of $X$ in the CFM. We follow the characterization of CFMs in \cite{cartea2024strategic} that describes the mechanics of the reserves $X$, $Y$, and the instantaneous rate $Z$,  according to the arrival of orders $N^{a}$ and $N^b$.  We have that
\begin{align*}
    \d X_t &=  %\underbrace{
    \left[ \varphi_{c}(Y_{t\shortminus} + \xi) - \varphi_{c}(Y_{t\shortminus}) \right]\, \d N^b_t + \left[ \varphi_{c}(Y_{t\shortminus} - \xi) - \varphi_{c}(Y_{t\shortminus}) \right]\, \d N^a_t 
    %}_{\text{activity from LTs}}
    \, ,\\
    % Y
    \d Y_t &=  %\underbrace{
    \xi\, \d N^b_t - \xi\, \d N^a_t
    %}_{\text{activity from LTs}}
    \, ,\\
    % Z
    \d Z_t &= %\underbrace{
    \left[-(\varphi_{c})'(Y_{t\shortminus}+\xi) +  (\varphi_{c})'(Y_{t\shortminus}) \right]\, \d N^b_t + \left[-(\varphi_{c})'(Y_{t\shortminus}-\xi) +  (\varphi_{c})'(Y_{t\shortminus}) \right]\, \d N^a_t
    %}_{\text{activity from LTs}}
    \, , \\
    \d N^b_t &= \mathds 1_{\{Y_{t\shortminus} +\xi \le \overline Y\}}\d \hat N^b_t \, , \\
    \d N^a_t &= \mathds 1_{\{Y_{t\shortminus} -\xi \ge \underline Y\}}\d \hat N^a_t \, ,
\end{align*}
with $N^b_0=N^a_0 =0$, $X_0 \in \mathbb{R}^+$, $Y_0\in[\underline Y, \overline Y]$, and $Z_0=X_0/Y_0$. The bounds $0< \underline Y < \overline Y <\infty$ are mainly introduced to ease the mathematical analysis. This is not an issue in practice as they can be chosen arbitrarily small and large, respectively. One can also interpret them as risk limits, i.e. the AMM stops trading when the instantaneous prices are too low or too high. For the sake of simplicity, we assume that $Y_0$, $\underline Y$, $\overline{Y}$ are multiple of $\xi$, and introduce the set $Q := \{ \underline Y, \underline Y + \xi, \dots, \overline Y - \xi, \overline Y \} \subset \mathbb{R}$.\\

We introduce $(L_t)_{t\in\mfT}$ the Doléans-Dade exponential 
$$L_t = \exp \left\{ \sum_{i=b,a} \int_0^t \log\left( \frac{\lambda^i_u}{a_0} \right) \d \hat N^i_u - \left(\lambda^i_u - a_0 \right) \d u\right\}\, ,$$
where
\begin{align}
    &\lambdaa_t = \barlambdaa(Y_{t\shortminus}, S_t)\,,\qquad \lambdab_t = \barlambdab(Y_{t\shortminus}, S_t)\,,\\
    &\qquad\quad \d \tilde{N}^{i}_t = \d \hat{N}^{i}_t - \lambda^{i}_t\, \d t\,, \quad \text{for } i=b,a,
\end{align}
with
\begin{align}\label{eq:intensity a}
    \barlambdaa(y, S) &= \max\left\{a_0, a_1 + a_2\,(S - c/y^2)\right\}\,,\\
    \label{eq:intensity b}
    \barlambdab(y, S) &= \max\left\{a_0, a_1 + a_2\,(c/y^2 - S)\right\}\,,\\
    &\qquad a_0>0, \,\,\, a_1, a_2\geq 0\,.
\end{align}
The function $c/y^2$ captures the marginal price in the pool because for $t\in\mfT$, we have that $Z_t = c/Y_t$. Thus, all else being equal, if  the price difference between the external price $S$ and the pool's marginal price $c/y^2$ is positive (resp.~negative), we expect that the intensity of trades that deposit (resp.~withdraw) asset $X$ in order to withdraw (resp.~deposit) asset $Y$ increases; conversely for the other direction.\footnote{ Modelling the stochastic intensities of trades arriving in the AMM has been done in  \cite{cartea2024strategic,aqsha2025equilibrium,baggiani2025optimal}. } \\

We then define the probability measure $\mathbb P$ given by
\[
\frac{\mathrm d\mathbb P}{\mathrm d\hat{\mathbb P}} := L_T\,.
\]
It follows that under $\mathbb P,$ the processes $\hatNb$ and $\hatNa$ have respective intensities $\left(\lambdab_t \right)_{t \in \mfT}$ and $\left(\lambdaa_t \right)_{t \in \mfT}$,  and therefore, the processes $N^b$ and $N^a$ have respective intensities $(\mathds 1_{\{Y_{t\shortminus} +\xi \le \overline y\}}\lambdab_t )_{t \in \mfT}$ and $( \mathds 1_{\{Y_{t\shortminus} -\xi \ge \underline y\}}\lambdaa_t )_{t \in \mfT}$, and $W$ is a Brownian motion. In the remaining of the paper, we only work under the probability measure $\mathbb P$.\\

\subsection{Problem formulation}

The impermanent loss (IL) at time $t\in \mfT$ is defined as
\begin{align}
 \text{IL}_t:=   -\big[X_t + Y_t\,S_t - \left(X_0 + Y_0\,S_t\right) \big] = -\Big[ \underbrace{X_t - X_0}_{P^X_t} + S_t \,\underbrace{\left( Y_t - Y_0 \right)}_{P^Y_t} \Big]\,.
\end{align}

To simplify the setup,\footnote{This approach has been used before, see e.g. \cite{fukasawa2025liquidity}, \cite{aqsha2025equilibrium}.} we consider the case of a representative agent who owns the entire liquidity available in the AMM and wants to close her position before time $T$. This representative liquidity provider (LP) wants to maximise her fee revenues while minimizing her IL.\\

Let $\Tc_{t,T}$ be set of stopping times taking values in $[t,T]$, and let $\Tc:=\Tc_{0,T}$. We are interested in solving the optimal stopping problem
\begin{equation}\label{pb_risk_neutral} 
    \sup_{\tau \in \Tc } \mathbb{E}\Big[ \underbrace{P^X_\tau +S_\tau P^Y_\tau}_{-\text{IL}_\tau} + R_\tau\Big]\,,
\end{equation}
where
$$R_t= \int_0^t \fee(Y_{s-}) \left(\d N^b_s + \d N^a_s \right)$$
with $\fee : \mathbb R_+ \rightarrow \mathbb R_+ $ a fee function representing the fees paid by liquidity takers (LTs) for each trade. In what follows we assume that $\fee$ has linear growth.\footnote{See \cite{baggiani2025optimal} for a study on optimal dynamic fees. One of their findings is that linear functions $\fee(y)$ are excellent approximations to the optimal fee structure. }\\

Notice that we only consider here the case of a risk-neutral liquidity provider. However, the case of a risk-averse liquidity provider can be treated in a very similar way and we provide the corresponding analysis in \ref{sec:app_risk_averse}, for a liquidity provider with an exponential utility function.

\section{Mathematical analysis}\label{sec: mathematical analysis}

In this section we characterise the value function of the optimal stopping problem in \eqref{pb_risk_neutral} as the unique viscosity solution to a Hamilton-Jacobi-Bellman Quasi-Variational Inequality (HJB QVI).

\subsection{Value function}

We write $-\text{IL}_t+R_t$ as
\begin{align*}
    P_t^X + S_t P_t^Y + R_t &= \displaystyle\int_0^t \left( \varphi_c(Y_{s^-} + \xi) - \varphi_c(Y_{s^-}) + \xi S_s + 
    \fee(Y_{s^-}) \right) \d N_s^b \\
    &\quad + \displaystyle\int_0^t \left( \varphi_c(Y_{s^-} - \xi) - \varphi_c(Y_{s^-}) - \xi S_s + \fee(Y_{s^-}) \right) \d N_s^a . \\
\end{align*}

We define the value function $v$ associated with \eqref{pb_risk_neutral}, as

\begin{align}\label{val_func_risk_neutral}
\begin{aligned}
v :\quad & \mfT \times Q \times \mathbb{R} \longrightarrow \mathbb{R} \\
& (t, y, S) \longmapsto \sup_{\tau \in \Tc_{t,T}} \mathbb{E} \left[ \int_{t}^{\tau} \left\{ 
\beta^b\left(Y^{t,y}_{s^-}, S^{t,S}_s\right) \, \lambda^b_s \, \mathds{1}_{\{ Y^{t,y}_{s\shortminus} + \xi \le \overline{Y} \}} 
+ \beta^a\left(Y^{t,y}_{s^-}, S^{t,S}_s\right) \, \lambda^a_s \, \mathds{1}_{\{ Y^{t,y}_{s\shortminus} - \xi \ge \underline{Y} \}} 
\right\} \mathrm{d}s \right],
\end{aligned}
\end{align}
where 
\begin{equation}\label{eq:betas}
    \begin{cases}
        \beta^b(y,S) := \varphi_c(y+ \xi) - \varphi_c(y) + \xi S + \fee(y),\\
        \beta^a(y, S) := \varphi_c(y - \xi) - \varphi_c(y) - \xi S + \fee(y).      
    \end{cases}
\end{equation}

\begin{definition}
    We denote by $\Xi$ the set of functions $u:\mfT\times Q \times \mathbb R \rightarrow \mathbb R$ that are non-negative, bounded in their  second variable, and with at most quadratic growth in their third variable, i.e. there exists $C_0$ such that
    $$0\le u(t,y,S) \le C_0 (1 + S^2)$$
    for all $(t,y,S) \in \mfT\times Q \times \mathbb R.$

\end{definition}

\begin{proposition}\label{v_bounded}
The value function \( v : \mfT \times Q \times \mathbb{R} \to \mathbb{R} \) is in $\Xi$. In particular, it is well-defined.
\end{proposition}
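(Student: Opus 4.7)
\textbf{Proof plan for Proposition \ref{v_bounded}.} The plan is to establish the three properties of $\Xi$ in turn, with the bulk of the work going into the quadratic upper bound.

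\emph{Non-negativity and well-posedness.} First, I would observe that $\tau = t$ belongs to $\Tc_{t,T}$, and the corresponding integral in \eqref{val_func_risk_neutral} is an empty integral equal to $0$. Hence the supremum satisfies $v(t,y,S)\ge 0$, provided the supremum is well-defined. To justify well-posedness, I will show simultaneously that the integrand admits an $L^1$ majorant uniformly in $\tau \in \Tc_{t,T}$, which also gives the upper bound below.

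\emph{Pointwise bound on the integrand.} The key observation is that each of the two summands in the integrand of \eqref{val_func_risk_neutral} can be controlled by a constant times $1 + S_s^2$, uniformly in $y\in Q$. Indeed, $Q$ is finite and bounded away from $0$, so $\varphi_c(y\pm \xi)-\varphi_c(y)$ is uniformly bounded on $Q$; the fee function $\fee(y)$ is bounded on $Q$ by its linear growth; and the $\pm \xi S$ contribution yields linear growth in $|S|$. Therefore $|\beta^a(y,S)| + |\beta^b(y,S)| \le C_1(1+|S|)$ for some constant $C_1$ depending only on $c,\xi,\overline Y, \underline Y$ and $\fee$. Similarly, from \eqref{eq:intensity a}--\eqref{eq:intensity b} and the uniform lower bound on $y^2$, the intensities satisfy $\lambda_s^a + \lambda_s^b \le C_2(1 + |S_s|)$ for some constant $C_2$. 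Combining these and using the trivial bound for the indicators gives
\begin{equation*}
\bigl| \beta^b(Y^{t,y}_{s^-},S^{t,S}_s)\lambda^b_s \mathds 1_{\{\cdots\}} + \beta^a(Y^{t,y}_{s^-},S^{t,S}_s)\lambda^a_s \mathds 1_{\{\cdots\}} \bigr| \le C_3(1+S_s^2),
\end{equation*}
where $C_3$ depends only on model parameters.

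\emph{Global upper bound via moment control of $S$.} Since $S^{t,S}_s = S + \sigma(W_s-W_t)$, one has $\mathbb E[(S^{t,S}_s)^2] = S^2 + \sigma^2(s-t)$. Using $\tau \le T$ and Fubini,
\begin{equation*}
\mathbb E\!\left[\int_t^\tau C_3(1+S_s^2)\,\d s\right] \le C_3 \int_t^T \!\left(1 + S^2 + \sigma^2(s-t)\right)\d s \le C_0(1+S^2),
\end{equation*}
for some constant $C_0 = C_0(T,\sigma,C_3)$. Taking the supremum over $\tau \in \Tc_{t,T}$ and combining with the earlier non-negativity step yields $0 \le v(t,y,S) \le C_0(1+S^2)$. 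Since $Q$ is finite, this gives boundedness in $y$ as a free byproduct. The integrability bound also ensures that the expectations in \eqref{val_func_risk_neutral} are all finite, so $v$ is well-defined and lies in $\Xi$.

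\emph{Expected main obstacle.} No step is genuinely deep, but the only place requiring mild care is controlling the products $\beta^i\lambda^i$: both factors grow linearly in $|S|$, so the product is quadratic, which is precisely what forces the quadratic (and not linear) growth in the definition of $\Xi$. Everything else reduces to finiteness of $Q$, boundedness away from $0$ of the reserves, and the second moment of Brownian motion.
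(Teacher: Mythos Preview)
Your proposal is correct and follows essentially the same approach as the paper: bound each product $\beta^i\lambda^i$ by a quadratic in $|S|$ using the finiteness of $Q$, then control the resulting moment of $S^{t,S}$. The only minor difference is that the paper pulls out $\sup_{u\in[t,\tau]}|S_u^{t,S}|$ and invokes Doob's inequality, whereas you keep the pointwise bound $C_3(1+S_s^2)$ and apply Fubini with $\mathbb E[(S_s^{t,S})^2]=S^2+\sigma^2(s-t)$; your route is slightly more direct but the argument is otherwise the same.
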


\begin{proof}
Let 
$(t,y,S) \in \mfT \times Q \times \mathbb{R}$. The value function is defined as
\begin{align}
    v(t,y,S) = \sup_{\tau \in \Tc_{t,T}} \mathbb{E} \left[ \displaystyle\int_{t}^{\tau} \left\{ \beta^b\left(Y^{t,y}_{u\shortminus}, S^{t,S}_u\right) \, \lambda^b_u \mathds 1_{\{ Y^{t,y}_{u\shortminus} +\xi \le \overline{Y} \}} + \beta^a\left(Y^{t,y}_{u\shortminus}, S^{t,S}_u\right) \, \lambda^a_u  \mathds 1_{\{ Y^{t,y}_{u\shortminus} - \xi \ge \underline{Y} \}} \right\} \d u \right].
\end{align}

Since we assumed that $\fee$ has linear growth and that $(Y_t)_{t \in \mathfrak{T}}$ is bounded, it follows that the value function is bounded in its second variable. Moreover, we can show that both terms inside the integrand exhibit at most quadratic growth with respect to the third variable. Observe that 
\begin{align}
& \beta^a\left(Y^{t,y}_{u\shortminus}, S^{t,S}_u\right) \, \bar{\lambda}^a\left(Y^{t,y}_{u\shortminus}, S^{t,S}_u\right) \, \mathds 1_{\{ Y_{u\shortminus} - \xi \ge \underline{Y} \}} \nonumber \\ 
& \,\,\,\, = \,\left[ \varphi_c\left(Y^{t,y}_{u\shortminus} - \xi\right) - \varphi_c\left(Y^{t,y}_{u\shortminus}\right) - \xi S^{t,S}_u + \fee\left(Y^{t,y}_{u\shortminus}\right) \right] \, \max\left\{a_0, a_1 + a_2\,\left(S^{t,S}_u - \frac{c}{{\left(Y^{t,y}_{u{\shortminus}}\right)}^2}\right)\right\}\, \mathds 1_{\{ Y_{u\shortminus} - \xi \ge \underline{Y} \}} \, \nonumber \\
& \,\,\,\, \leq \, \left[ \frac{c}{\underline{Y}} + \frac{c}{\underline{Y} + \xi} + \xi\left|S^{t,S}_u\right| + \left|\underset{y \in Q}{\max} \, \fee\left(y\right)\right| \right]  \, \left[a_0 + a_1 + a_2\,\left(\left|S^{t,S}_u\right| + \frac{c}{(\underline{Y} + \xi)^2}\right) \right] \, \nonumber \\
& \,\,\,\, \leq \, C^a_0 + C^a_1\sup_{u \in {[t,\tau]}}\left|S^{t,S}_u\right| + C^a_2\sup_{u \in {[t,\tau]}}\left|S^{t,S}_u\right|^2, \nonumber
\end{align}
where $C^a_0$, $C^a_1$, $C^a_2$ are constant depending on $\overline Y$, $\underline Y$, $\xi$, $c$, $a_0$, $a_1$ and $a_2$. Similarly, 
\begin{align}
& \beta^b(Y^{t,y}_{u\shortminus}, S^{t,S}_u) \, \bar{\lambda}^b(Y^{t,y}_{u\shortminus}, S^{t,S}_u) \, \mathds 1_{\{ Y_{u\shortminus} + \xi \le \overline{Y} \}} \nonumber \\
& \,\,\,\, = \, \left[ \varphi_c\left(Y^{t,y}_{u\shortminus} + \xi\right) - \varphi_c\left(Y^{t,y}_{u\shortminus}\right) + \xi S^{t,S}_u + \fee(Y^{t,y}_{u{\shortminus}}) \right] \, \max\left\{a_0, a_1 + a_2\,\left(\frac{c}{{\left(Y^{t,y}_{u{\shortminus}}\right)}^2} - S^{t,S}_u\right)\right\}\ \, \mathds 1_{\{ Y_{u\shortminus} + \xi \le \overline{Y} \}} \, \nonumber \\
& \,\,\,\, \leq \, \left[ \left|\frac{c}{\underline{Y} + \xi}\right| + \left|\frac{c}{\underline{Y}}\right| + \xi \left|S^{t,S}_u\right| + \left|\underset{y \in Q}{\max} \, \fee(y)\right| \right] \, \left[a_0 + a_1 + a_2\,\left(\frac{c}{\underline{Y}^2} + \left|S^{t,S}_u\right|\right)\right] \, \nonumber \\
& \,\,\,\, \leq \, C^b_0 + C^b_1\sup_{u \in {[t,\tau]}}\left|S^{t,S}_u\right| + C^b_2\sup_{u \in {[t,\tau]}}\left|S^{t,S}_u\right|^2, \nonumber
\end{align}
where $C^b_0$, $C^b_1$, $C^b_2$ are constant depending on $\overline Y$, $\underline Y$, $\xi$, $c$, $a_0$, $a_1$ and $a_2$.\\

Using Jensen for concave functions and then Doob's inequality we have that
\begin{align}
\mathbb{E} \left[ \sup_{u \in {[t,T]}} \left|S^{t,S}_u\right| \right] &\leq \, \left|S_0\right| + \sigma\mathbb{E} \left[ \sup_{u \in {\mfT}} \left|W_u\right| \right] = \, \left|S_0\right| + \sigma\mathbb{E} \left[ \sup_{u \in {\mfT}} \sqrt{W_u^2} \right] \leq \, \left|S_0\right| + 2\sigma \sqrt{T}, \nonumber\\
\mathbb{E} \left[ \sup_{u \in {[t,T]}} \left|S^{t,S}_u\right|^2 \right] &\leq \, S_0^2 + 2 \left|S_0\right| \sigma\mathbb{E} \left[ \sup_{u \in {\mfT}} \left|W_u\right| \right] + \sigma^2 \mathbb{E} \left[ \sup_{u \in {\mfT}} W_u^2 \right] \leq \, S_0^2 + 4\left|S_0\right|\sigma \sqrt{T} + 4\sigma^2T. \nonumber
\end{align}

We  conclude that the value function has quadratic growth in its third variable and then is locally bounded from above.\\

Furthermore, it is clear in the definition of the value function in \eqref{val_func_risk_neutral} that $v(t,y,S)\geq0$ (it suffices to consider the stopping time $\tau = t$). Therefore, the value function is bounded from below by $0$.
\end{proof}

Following \cite{touzi2012optimal} (Theorem 4.3), we now state the dynamic programming principle for this optimal stopping problem.

\begin{lemma} \label{lemma_dpp}
     Let $(t,y,S) \in [0,T) \times Q \times \mathbb{R}$, then for all $\theta \in \mathcal{T}_{t,T}$ we have
    \begin{align}
        v(t, y, S) = & \, \sup_{\tau \in \Tc_{t,T}} \mathbb{E}\bigg[\int_{t}^{\tau \wedge \theta}  \left[ \mathds{1}_{\{Y_{u\shortminus} + \xi \geq \underline{Y}\}} \, \beta^b(Y^{t,y}_{u\shortminus},S^{t,S}_u) \lambda^b_u + \mathds{1}_{\{Y_{u\shortminus} - \xi \leq \overline{Y}\}} \, \beta^a(Y^{t,y}_{u\shortminus}, S^{t,S}_u)\lambda^a_u \right] \d u \bigg. \nonumber \\
        & \hspace{9cm} + \, \mathds{1}_{\{\tau \geq \theta\}}v(\theta, Y^{t, y}_{\theta}, S^{t, y}_{\theta}) \bigg. \bigg] \nonumber.
    \end{align}
\end{lemma}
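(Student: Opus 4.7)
The plan is to prove the two inequalities defining the DPP separately, leveraging the time-space strong Markov property of the state $(Y_t, S_t)$ under $\mathbb{P}$. This holds because $S$ is a Brownian motion with constant coefficients and the intensities $\barlambdab, \barlambdaa$ in \eqref{eq:intensity a}--\eqref{eq:intensity b} depend measurably on the current state $(Y_{t\shortminus}, S_t)$ only, so $(Y,S)$ is a jump-diffusion Markov process with respect to $\mathbb F$. Throughout, I denote by $\Phi(y,S)$ the (non-negative) integrand appearing in \eqref{val_func_risk_neutral}.

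First I would tackle $v(t,y,S) \le \text{RHS}$. Given $\tau \in \Tc_{t,T}$, I split $\int_t^\tau \Phi \, \d u = \int_t^{\tau\wedge\theta}\Phi \, \d u + \mathds 1_{\{\tau > \theta\}}\int_\theta^\tau \Phi \, \d u$ and condition the second summand on $\mcF_\theta$. On the event $\{\tau > \theta\}$, the restriction of $\tau$ to $[\theta,T]$ is an admissible stopping time for the sub-problem started at $(\theta, Y^{t,y}_\theta, S^{t,S}_\theta)$, so by the strong Markov property this conditional expectation is bounded above by $v(\theta, Y^{t,y}_\theta, S^{t,S}_\theta)$. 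Taking $\sup_\tau$ closes this direction; the event $\{\tau = \theta\}$ contributes nothing to $\int_\theta^\tau$, which reconciles the indicator $\mathds 1_{\{\tau\ge\theta\}}$ appearing in the statement.

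For the reverse inequality I would use a measurable pasting argument. Fix $\epsilon>0$ and, via a Castaing--Valadier-type selection applied to the set-valued map $(y,S)\mapsto\{\sigma\in\Tc_{\theta,T}:\text{reward}\ge v(\theta,y,S)-\epsilon\}$, construct an $\mcF_\theta$-measurable $\epsilon$-optimal stopping time $\sigma^\epsilon$ for the shifted problem from $(\theta, Y^{t,y}_\theta, S^{t,S}_\theta)$. For any candidate $\tau'\in\Tc_{t,T}$, I concatenate via $\tilde\tau := \tau'\mathds 1_{\{\tau'<\theta\}} + \sigma^\epsilon\mathds 1_{\{\tau'\ge\theta\}}\in\Tc_{t,T}$ and show
\[
\mathbb{E}\!\left[\int_t^{\tilde\tau}\Phi \, \d u\right] \ge \mathbb{E}\!\left[\int_t^{\tau'\wedge\theta}\Phi \, \d u + \mathds 1_{\{\tau'\ge\theta\}}\,v(\theta, Y^{t,y}_\theta, S^{t,S}_\theta)\right] - \epsilon,
\]
by the tower property, the strong Markov property, and the $\epsilon$-optimality of $\sigma^\epsilon$. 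Taking $\sup_{\tau'}$ and then letting $\epsilon\downarrow 0$ yields the opposite inequality.

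The hard part will be producing $\sigma^\epsilon$ jointly measurably so that $\tilde\tau$ is genuinely $\mathbb F$-adapted. A clean shortcut is to invoke \cite{oksendalAppliedStochasticControl2007}, Lemma~7.3 directly, after verifying its hypotheses in our setting: $(Y,S)$ is a strong Markov jump-diffusion on the Borel space $Q\times\mathbb R$, the integrand $\Phi$ is locally bounded, and the quadratic growth bound of Proposition \ref{v_bounded}, together with the $L^2$ estimates on $\sup_{u\in[t,T]}|S^{t,S}_u|$ already used in its proof, supply the uniform integrability needed to pass to the limit $\epsilon\downarrow 0$.
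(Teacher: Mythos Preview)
Your proposal is correct and aligns with the paper's approach: the paper does not give an independent proof but simply states the lemma ``following \cite{oksendalAppliedStochasticControl2007} (Lemma 7.3),'' which is exactly the shortcut you identify at the end. Your additional sketch of the two inequalities via the strong Markov property and $\epsilon$-optimal pasting is the standard argument underlying that reference, so you are supplying more detail than the paper does rather than taking a different route.
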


In particular, the candidate HJB QVI for our problem is
\begin{align}
        0 = \min\bigg\{& -\frac{\partial}{\partial t} v(t, y, S) \, - \frac{1}{2} \sigma^2 \frac{\partial^2}{\partial S^2} v(t, y, S) \,- \mathds 1_{\{y +\xi \le \overline Y\}}\bar{\lambda}^b(y, S) \big[ \beta^b(y,S) + v(t, y +\xi, S) - v(t, y, S) \big] \, \nonumber \\
    & - \mathds 1_{\{y -\xi \ge \underline Y\}} \bar{\lambda}^a(y, S) \big[ \beta^a(y,S) + v(t, y - \xi, S) - v(t, y, S) \big]\,  , \, v(t,y,S) \bigg\} \qquad \text{on } [0,T) \times  Q \times \mathbb R \label{eq:hjb_qvi},
\end{align}
with terminal condition $v(T,y,S)=0$ for all $(y,S) \in Q \times \mathbb R$.

\subsection{Viscosity solution}

 In this section, we study the viscosity properties of the value function \eqref{val_func_risk_neutral}. We start by introducing a set of test functions and defining properly the notion of viscosity solution for the HJB QVI \eqref{eq:hjb_qvi}.

\begin{definition}
We denote by  $\mathcal{C}$ the class of functions $\gamma : (t,y,S) \in \mfT \times Q \times \mathbb{R} \mapsto \gamma (t,y,S) \in \mathbb{R}$
such that $\gamma$ is continuously differentiable with respect to the first variable on \([0,T)\), $\gamma$ is twice continuously differentiable with respect to the third variable on \(\mathbb{R}\) and $\left|\gamma\right|$, $\left|\frac{\partial}{\partial S}\gamma\right|$ are bounded.  
\end{definition}

\begin{definition}
    For a locally bounded function $u : \mfT \times Q \times \mathbb{R} \to \mathbb{R}$, we denote by $u_*$ and $u^*$ the lower and upper semicontinuous envelopes of $u$, defined as follows:
        \begin{equation}
            u_*(t,y,S) = \liminf_{(t', S') \to (t, S)} u(t', y, S'), \qquad u^*(t,y,S) = \limsup_{(t', S') \to (t, S)} u(t', y, S')
        \end{equation}
        for all $(t,y,S) \in \mfT\times Q \times \mathbb R$.
\end{definition}

\begin{definition}
\begin{enumerate}[label=(\roman*)]
\item Let $u$ be an upper semicontinuous (USC) function on~ $\mfT \times Q \times \mathbb{R}$. We say that $u$ is a viscosity subsolution to the HJB QVI on~ $[0,T) \times Q \times \mathbb{R}$ if for all $(\tilde{t},\tilde{y},\tilde{S}) \in [0,T) \times Q \times \mathbb{R}$, and for all $\gamma \in \mathcal{C}$ such that ~$(u-\gamma)(\tilde{t},\tilde{y},\tilde{S})=\max_{(t,y,S) \in [0,T) \times Q \times \mathbb{R}}(u-\gamma)(t,y,S)=0$, we have
\begin{align}
\min \Bigg\{ 
&-\frac{\partial}{\partial t} \gamma(\tilde{t}, \tilde{y}, \tilde{S}) - \frac{1}{2} \sigma^2 \frac{\partial^2}{\partial S^2} \gamma(\tilde{t}, \tilde{y}, \tilde{S}) \nonumber \, \\
&  - \mathds 1_{\{y +\xi \le \overline Y\}}\bar{\lambda}^b(\tilde{y}, \tilde{S}) \left[ \beta^b(\tilde{y},\tilde{S}) + \gamma(\tilde{t}, \tilde{y} + \xi, \tilde{S}) - \gamma(\tilde{t}, \tilde{y}, \tilde{S}) \right]\\ 
& - \mathds 1_{\{y -\xi \ge \underline Y\}}\bar{\lambda}^a(\tilde{y}, \tilde{S}) \left[ \beta^a(\tilde{y},\tilde{S}) + \gamma(\tilde{t}, \tilde{y} - \xi, \tilde{S}) - \gamma(\tilde{t}, \tilde{y}, \tilde{S}) \right], \, \gamma(\tilde{t}, \tilde{y}, \tilde{S}) \Bigg\} \leq 0. \nonumber
\end{align}
\item Let $w$ be a lower semicontinuous (LSC) function on $\mfT \times Q \times \mathbb{R}$. We say that $w$ is a viscosity supersolution to the HJB QVI on $[0,T) \times Q \times \mathbb{R}$ if for all $(\tilde{t},\tilde{y},\tilde{S}) \in \mfT \times Q \times \mathbb{R}$, and for all $\gamma \in \mathcal{C}$ such that $(w-\gamma)(\tilde{t},\tilde{y},\tilde{S})=\min_{(t,y,S) \in [0,T) \times Q \times \mathbb{R}}(w-\gamma)(t,y,S)=0$, we have
\begin{align}
\min \Bigg\{ 
&-\frac{\partial}{\partial t} \gamma(\tilde{t}, \tilde{y}, \tilde{S}) - \frac{1}{2} \sigma^2 \frac{\partial^2}{\partial S^2} \gamma(\tilde{t}, \tilde{y}, \tilde{S}) \nonumber \, \\
&  - \mathds 1_{\{y +\xi \le \overline Y\}}\bar{\lambda}^b(\tilde{y}, \tilde{S}) \left[ \beta^b(\tilde{y},\tilde{S}) + \gamma(\tilde{t}, \tilde{y} + \xi, \tilde{S}) - \gamma(\tilde{t}, \tilde{y}, \tilde{S}) \right]\\ 
& - \mathds 1_{\{y -\xi \ge \underline Y\}}\bar{\lambda}^a(\tilde{y}, \tilde{S}) \left[ \beta^a(\tilde{y},\tilde{S}) + \gamma(\tilde{t}, \tilde{y} - \xi, \tilde{S}) - \gamma(\tilde{t}, \tilde{y}, \tilde{S}) \right], \, \gamma(\tilde{t}, \tilde{y}, \tilde{S}) \Bigg\} \geq 0. \nonumber
\end{align}

\item If $v$ is locally bounded on $\mfT \times Q \times \mathbb{R}$, we say that $v$ is a viscosity solution to HJB QVI on $[0,T) \times Q \times \mathbb{R}$ if its upper semicontinuous envelope $v^*$ and its lower semicontinuous envelope $v_*$ are respectively subsolution on $[0,T) \times Q \times \mathbb{R}$ and supersolution on $[0,T) \times Q \times \mathbb{R}$ to the HJB QVI.
\end{enumerate}
\end{definition}

We state below the main result of this section. Since the proof is long and technical, we include it in \ref{sec:proof_viscosity}.

\begin{theorem}\label{thm_viscosity}
    The value function $v$ defined in  \eqref{val_func_risk_neutral} is the only viscosity solution in $\Xi$ to the HJB QVI \eqref{eq:hjb_qvi} with terminal condition $v(T,y,S) = 0$ for all $(y,S) \in Q \times \mathbb R$. Moreover, it is continuous.
\end{theorem}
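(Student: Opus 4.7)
The plan is the standard three-step programme for optimal stopping problems: show that $v^*$ is a viscosity subsolution of \eqref{eq:hjb_qvi}, that $v_*$ is a viscosity supersolution, and then establish a comparison principle in $\Xi$, which automatically yields both uniqueness and continuity (since $v_* \leq v \leq v^*$ on $[0,T) \times Q \times \mathbb R$ and the comparison forces $v^* \leq v_*$). Throughout, the inventory variable $y$ lives in the finite set $Q$, so the only genuinely viscosity-theoretic arguments concern the $(t,S)$ behaviour; the discrete terms $v(t,y\pm\xi,S)-v(t,y,S)$ enter the equation as pointwise data for each fixed $y$. The terminal condition $v(T,y,S)=0$ is already built into the definition of $v$ and will serve as the common boundary datum in the comparison.

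The sub- and supersolution properties will follow from the dynamic programming principle of Lemma \ref{lemma_dpp}. For the subsolution inequality at a maximum point $(\tilde t,\tilde y,\tilde S)$ of $v^* - \gamma$, I would pick approximating $(t_n,S_n) \to (\tilde t,\tilde S)$ realising $v^*$, apply the DPP with $\tau = \theta_n = (t_n + h) \wedge \tau_n^B$ where $\tau_n^B$ is the first exit of $S^{t_n,S_n}$ from a small neighbourhood of $\tilde S$ (this localisation is needed because test functions in $\mathcal C$ are only globally bounded, not pointwise above $v^*$), apply Itô to $\gamma$, divide by $h$ and let $n \to \infty$, $h \to 0^+$. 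If instead $v^*(\tilde t,\tilde y,\tilde S)=0$, the obstacle term makes the $\min$ trivially nonpositive. The supersolution inequality is symmetric, with the deterministic stopping time $\theta = t_n + h$; the admissible choice $\tau = t_n$ in the DPP immediately yields $v_*(\tilde t,\tilde y,\tilde S) \geq 0$, i.e.~the obstacle alternative of the supersolution $\min$.

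The hard part is the comparison principle. Given an USC subsolution $u$ and an LSC supersolution $w$ in $\Xi$, I would assume for contradiction that $\sup(u-w)>0$. Because admissible test functions in $\mathcal C$ are bounded, one first localises in $S$ by subtracting a small smooth penalty that absorbs the quadratic growth of $u$ and $w$ at infinity, reducing matters to a compact region. Then I would double variables in $(t,S)$ with penalty $\tfrac{1}{2\alpha}(|t-t'|^2 + |S-S'|^2)$, at a fixed $\bar y \in Q$ chosen as a maximiser of $u(\cdot,y,\cdot)-w(\cdot,y,\cdot)$ over the finite set $Q$, and apply the Crandall--Ishii lemma to obtain matching parabolic jets for $u$ and $w$. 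Subtracting the two viscosity inequalities, the local second-order terms cancel at the usual rate in $\alpha$; the intensities $\barlambdaa,\barlambdab$ are Lipschitz in $S$ on bounded sets; and, crucially, the discrete jump contributions are sign-definite because the choice of $\bar y$ as a maximiser yields $u(t,\bar y\pm\xi,S) - u(t,\bar y,S) \leq w(t,\bar y\pm\xi,S) - w(t,\bar y,S)$. The obstacle $\min$ is handled by the standard dichotomy: either the stopping alternative is saturated for $w$ at the limiting point, forcing $w=0$ and hence $u\leq 0$, contradicting $\sup(u-w)>0$; or the PDE alternative is saturated, and passing first to $\alpha \to 0$ and then to the vanishing localisation parameter produces the usual contradiction. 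Applied to $u=v^*$ and $w=v_*$, this yields $v^* \leq v_*$, hence $v = v^* = v_*$ is continuous and the unique viscosity solution in $\Xi$.
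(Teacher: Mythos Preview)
Your overall programme matches the paper's, and your treatment of the sub-/supersolution parts and of the jump terms (via maximising over the finite set $Q$) is essentially correct. But the comparison argument, as you describe it, does not close. Equation \eqref{eq:hjb_qvi} has no zeroth-order term in $v$, so after doubling variables, applying Crandall--Ishii, subtracting the two viscosity inequalities and sending the doubling parameter to zero, you are left with an inequality of the form $0 \leq \partial_t\phi + \tfrac12\sigma^2(\text{second-order terms of }\phi)$ at the limiting point; with a purely spatial penalty $\phi$ of the type you describe this right-hand side is nonnegative, and sending the localisation parameter to zero only gives $0\leq 0$. The paper handles this by first making the change of variable $\tilde v(t,y,S)=e^{\rho t}v(t,y,S)$, which inserts a strict term $\rho\tilde v$ into the PDE part and produces $\rho(\tilde u-\tilde w)>0$ on the left after subtracting; it then takes the penalty $\phi(t,S,R)=\varepsilon e^{-\mu t}(1+S^4+R^4)$, so that for $\mu$ large the right-hand side is strictly negative. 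Without one of these devices (or an equivalent, such as a $\mu/(T-t)$ barrier), your ``usual contradiction'' does not materialise.

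Two further points. First, you say the terminal condition ``is already built into the definition of $v$,'' but the comparison is applied to the envelopes $v^*$ and $v_*$, and $v^*(T,y,S)=\limsup_{(t',S')\to(T,S)}v(t',y,S')$ could a priori be strictly positive; the paper proves $v^*(T,\cdot,\cdot)=0$ separately (Proposition~\ref{prop:terminal}) by taking $\varepsilon$-optimal stopping times along a sequence $t_m\to T$ and bounding the running integral by $\bar C(T-t_m)(1+|S_m|^2)\to 0$. Second, your obstacle dichotomy is on the wrong function: for a supersolution, $\min\{(\text{PDE part}),\,w\}\geq 0$ forces \emph{both} terms to be nonnegative, so the PDE inequality for $w$ always holds and $w\geq 0$ everywhere---there is no dichotomy for $w$. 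The dichotomy is for the subsolution $u$: either $u\leq 0$ at the point, which together with $w\geq 0$ gives $u-w\leq 0$ and contradicts the assumed positive supremum directly, or the PDE part for $u$ is $\leq 0$ and one proceeds to subtract.
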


\section{Numerical results}\label{sec: numerical results}

\subsection{Euler scheme and Longstaff-Schwartz algorithm} 

We design an implicit Euler scheme on a three-dimensional grid -- time $t$, reserve $Y$, and price $S$ -- for the HJB QVI and apply Neumann conditions at the boundaries in $S$. The HJB~QVI~$\eqref{eq:hjb_qvi}$ features a standard diffusion term coupled with a jump component. To handle this structure, we employ an operator splitting method at each time step: the jump terms are treated explicitly, followed by an implicit step for the diffusion part. Finally, the minimum condition is enforced to determine whether it is optimal to stop or continue.\\

The Euler scheme may become unstable for certain parameter values due to the Courant-Friedrichs-Lewy condition. To address this instability and enable the analysis of how different parameters influence the model's output, we also implement the Longstaff--Schwartz algorithm. This approach allows us to  approximate the solution numerically by leveraging the dynamic programming principle presented in Lemma~\ref{lemma_dpp}. It leads to an iterative decision process in which, at each discretised time step, a choice must be made between continuing or stopping. This process is captured by the following recursive scheme

\begin{align}
    \hat{v}(t,y,S) = & \max \Bigg\{ \mathbb{E} \left[ \int_{t}^{t+\delta} \left\{ 
    \beta^b\left(Y^{t,y}_{s^-}, S^{t,S}_s\right) \, \lambda^b_s \, \mathds{1}_{\{ Y^{t,y}_{s^-} + \xi \le \overline{Y} \}} + \beta^a\left(Y^{t,y}_{s^-}, S^{t,S}_s\right) \, \lambda^a_s \, \mathds{1}_{\{ Y^{t,y}_{s^-} - \xi \ge \underline{Y} \}} \right\} \, \mathrm{d}s \right. \nonumber \\
    & \hspace{26em} \left.   + \, \hat{v}\left(t+\delta, Y_{t+\delta}^{t, y}, S_{t+\delta}^{t, S}\right) \right], 0 \Bigg\}, \nonumber
\end{align}

with the terminal condition $\hat{v}(T,y,S) = 0$, since no further fees are earned and no additional impermanent loss is incurred when exiting at $T$. Here $\delta>0$ denotes the time step of the scheme.\\

The expression above involves a conditional expectation that must be evaluated at each time step in order to make an optimal decision. This expectation can be approximated using polynomial regression within a Monte Carlo simulation framework. The approach was originally introduced in~\cite{longstaffValuingAmericanOptions1998} for pricing American options. Algorithm~\ref{algo_ls_amm} outlines the adapted pseudo-code based on this methodology, tailored to our optimal exit time problem.\\

\begin{algorithm}
\caption{Longstaff-Schwartz Algorithm}
\begin{algorithmic}[1]
\Require $n$, $m$, $T$, $d$, $\sigma$, $a_0$, $a_1$, $a_3$, $\xi$, $X_0$, $Y_0$
\State Initialise matrix $\mathbf{A}$, $\mathbf{S}$, $\mathbf{Y}$, $\mathbf{V}$, $\boldsymbol{\tau}$ with zeros and shape $(m,n)$, to store trajectories of the performance $-\mathrm{IL}+R$, oracle price $S$, reserve $Y$, value function $v$, and stopping time $\tau$, respectively.
\State Set the time step size at $T/n$
\State Generate $m$ trajectories for $\mathbf{A}$, $\mathbf{S}$, $\mathbf{Y}$\Comment{Matrix $\mathbf{A}$ stores the trajectories of $\shortminus \mathrm{IL}+R$}
\State Set $\mathbf{V}_n \gets 0$ \Comment{Terminal condition}
\State Set $\boldsymbol{\tau}_n \gets 1$ \Comment{Exit at $T$ if not done before}

\For{$i \gets n-1$ to $1$} for all trajectories
    \State Generate a polynomial $\mathbf{P}_i$ of degree $d$ in $\mathbf{S}_i$ and $\mathbf{Y}_i$
    \State Perform polynomial regression of $\mathbf{V}_{i+1}$ on $\mathbf{P}_i$
    \State Estimate continuation value $\mathbf{C}_i$ \Comment{Approximation of $\mathbb{E}[\mathbf{V}_{i+1}\big| S_i, Y_i]$}
    \State Get trajectories where $\mathbf{C}_i \le 0$ \Comment{Stopping rule}
    \State Update value function:
        \[
        \mathbf{V}_i \gets
        \begin{cases}
        0 & \text{if the trajectory has stopped,} \\
        \mathbf{A}_{i+1} - \mathbf{A}_{i} + \mathbf{V}_{i+1} & \text{otherwise.}
        \end{cases}
        \]
    \State Update stopping time: $\boldsymbol{\tau}_i \gets 1$ if the trajectory has stopped
\EndFor
\State \Return $\mathbf{V}$, $\boldsymbol{\tau}$
\end{algorithmic}
\end{algorithm} \label{algo_ls_amm}

To test the Longstaff–Schwartz algorithm against a well-known Euler scheme, we employ the following model parameters: $n = 1{,}440$, $m = 5{,}000$, $T = 1$ day, $d = 3$, $\sigma = 100\; \$ \cdot \text{day}^{-1/2}$, $a_0 = 4\;\text{day}^{-1}$, $a_1 = 8\;\text{day}^{-1}$, $a_2 = 0.04\;\$^{-1}\cdot\text{day}^{-1}$, $\xi = 1$, $X_0 = 1{,}000\; \$$, and $Y_0 = 1{,}000$ coins. Here we consider a fictitious representative cryptocurrency; in the numerical experiments below we deploy the strategies with calibrated parameters.\\

The right panel in Figure~\ref{fig:3dplot-euler-vs-ls} shows that both methods produce similar overall shapes and identical exit regions. The red curves coincide with different slices of the Euler surface. The black curve shows that the value function is maximised when $S=c/Y^2(=Z)$, that is when the AMM price is equal to the external price. The LP leaves the pool when the value function reaches 0, that is when $|S-Z|$ becomes too large. This can be interpreted in the following way: in our model,  price formation happens on the external venue, thus, when this price deviates from the internal price, this creates a ``potential'' impermanent loss that is only realised once arbitrageurs  trade on the AMM to align the internal price with the external price. Therefore, when large price move happen, LPs might be driven to leave the pool before the IL is realised, that is, before arbitrageurs can act on the AMM. Lastly, on the right panel, we observe that the value function decreases with time and the hold region shrinks as we get close to time $T$. This occurs because as time progresses there is less expected fees to be collected, reducing the incentives for the LP to tolerate high levels of potential impermanent loss.

\begin{figure}[H]
    \centering
    \includegraphics[width=0.49\linewidth]{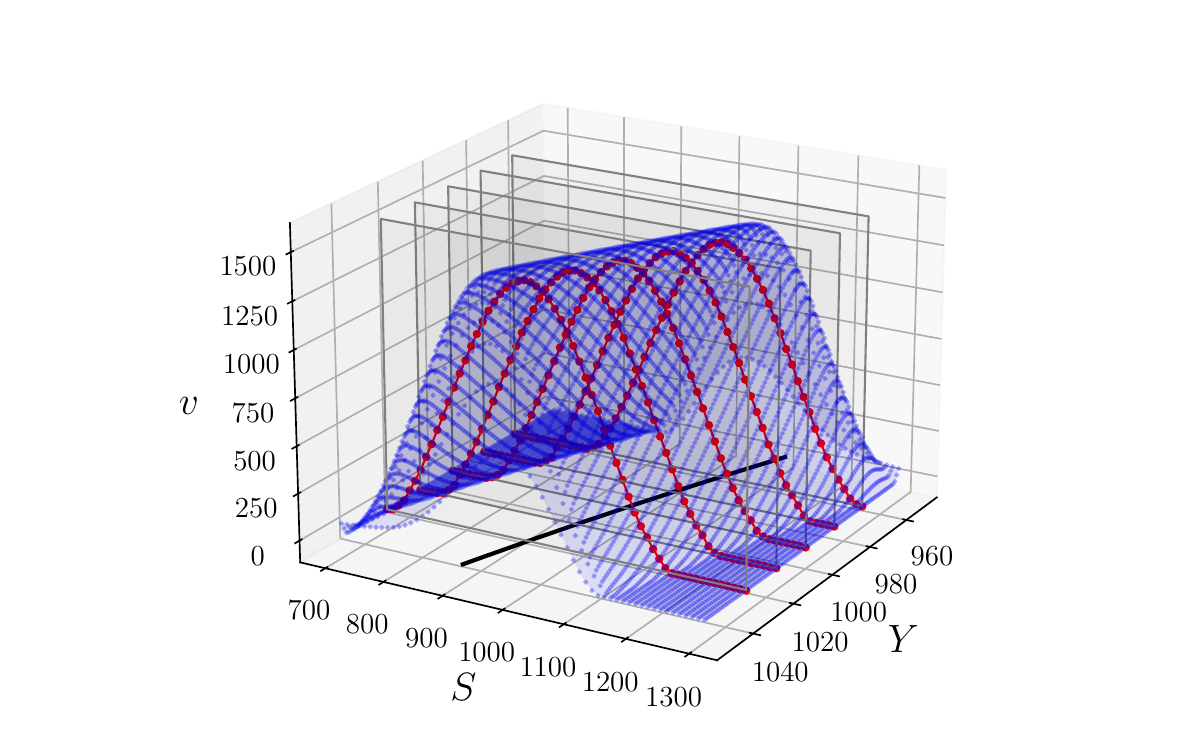}
    \includegraphics[width=0.49\linewidth]{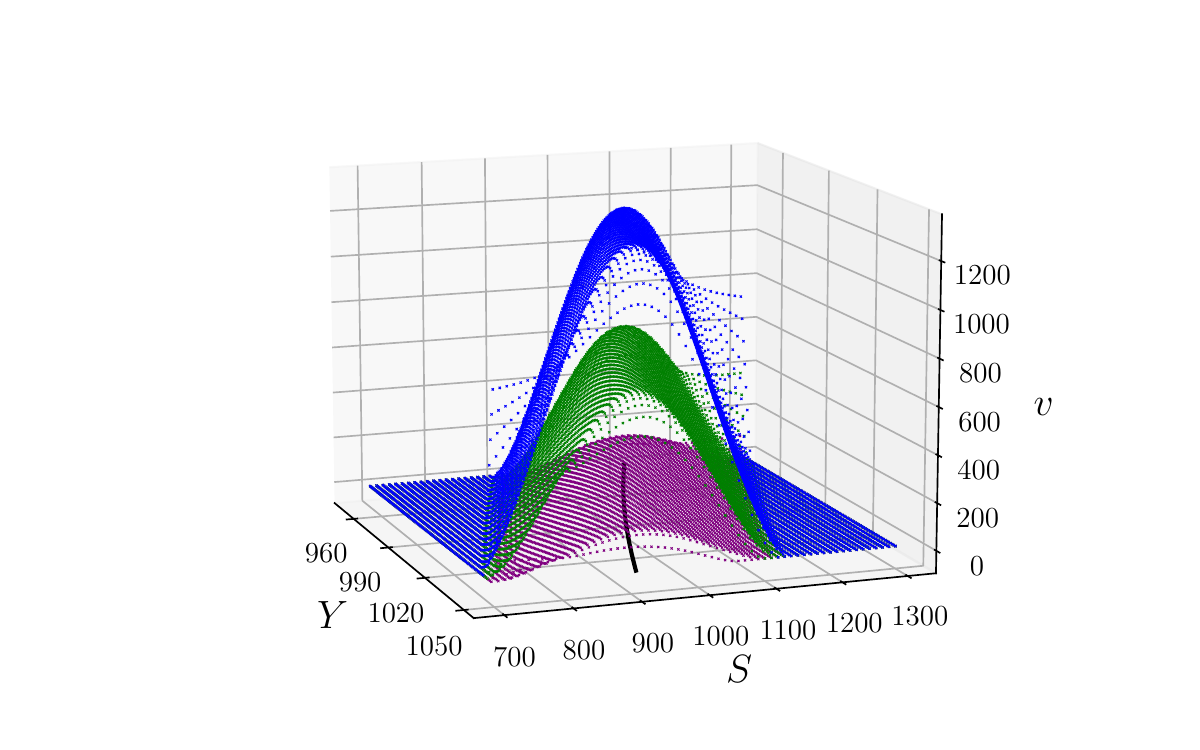}
    \caption{Left shows the value function $v$ at $t = 0$ computed with the Euler method (blue surface) and Longstaff--Schwartz algorithm (red slices) with the black curve in the $S\text{-}Y$ plane representing states where $S = c / Y^2$, while right shows the Euler method value function at $t = 0$ (blue), $t = 0.5$ (green), and $t = 0.9$ (purple).}
    \label{fig:3dplot-euler-vs-ls}
\end{figure}

Figure~\ref{fig:2d-plots-euler-vs-ls}  confirms the previous observation. We also see that the Longstaff--Schwartz method underestimates the value function when compared to the Euler method. This discrepancy can be explained by the discretization of time: the liquidity provider (LP) can only exit the pool at discrete time steps, rather than continuously. However, the stopping regions defined by the two methods are  aligned.\\

\begin{figure}[H]
    \centering
    \includegraphics[width=0.32\linewidth]{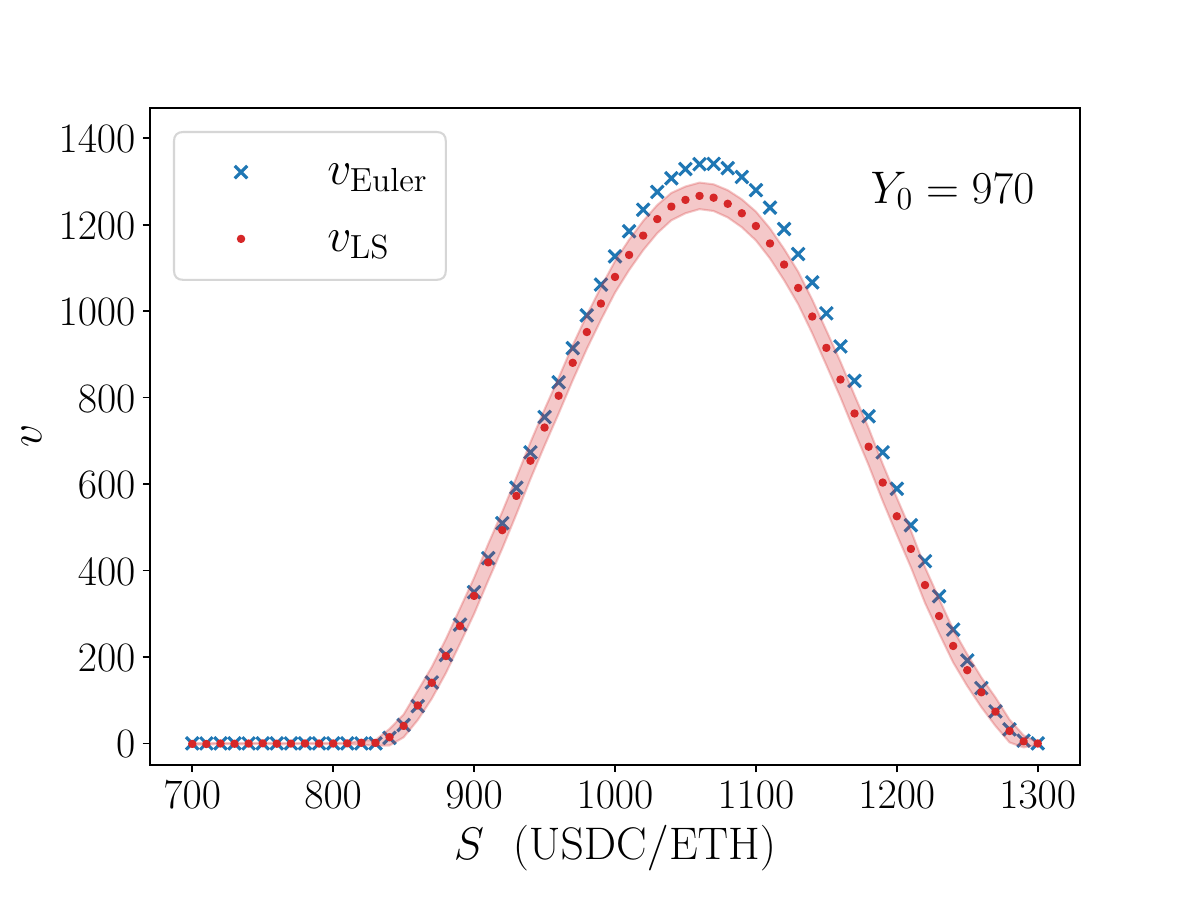}
    \includegraphics[width=0.32\linewidth]{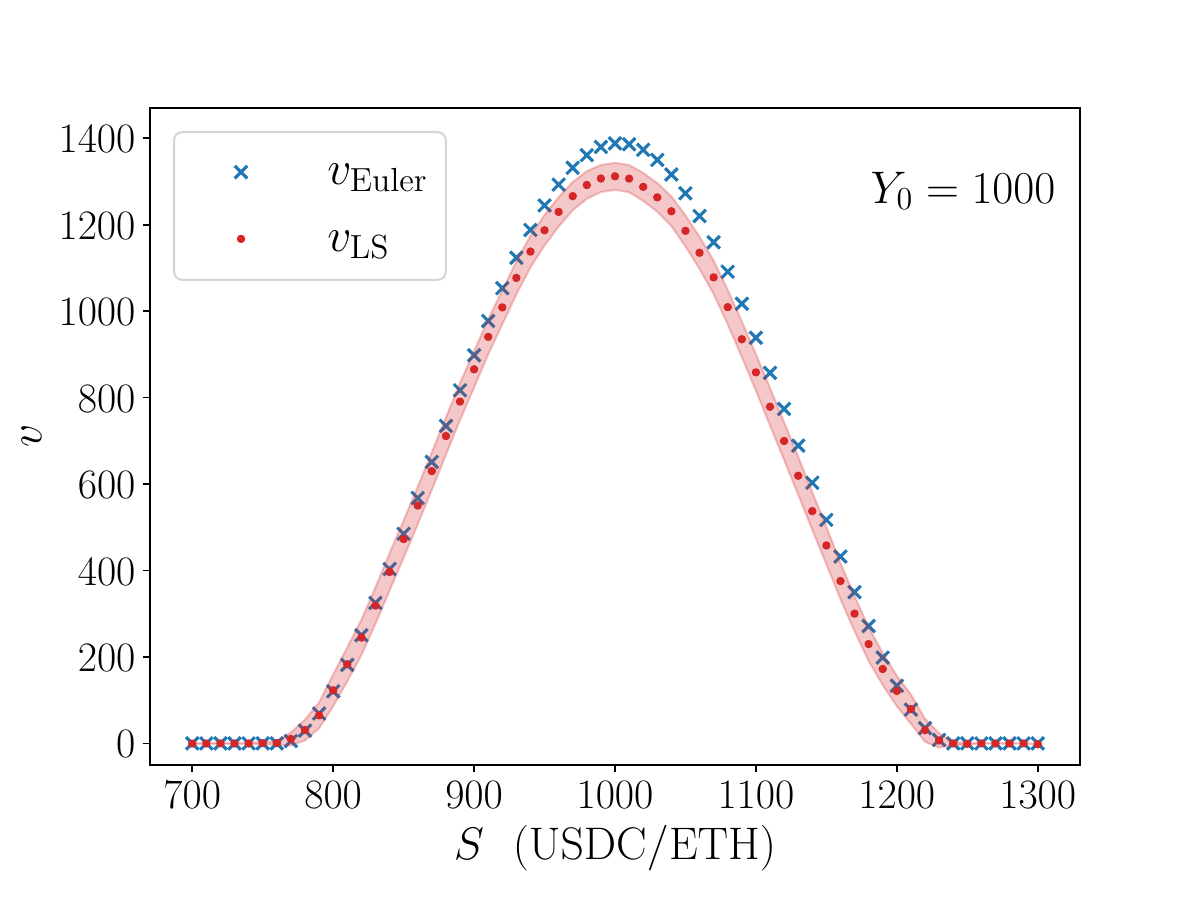}
    \includegraphics[width=0.32\linewidth]{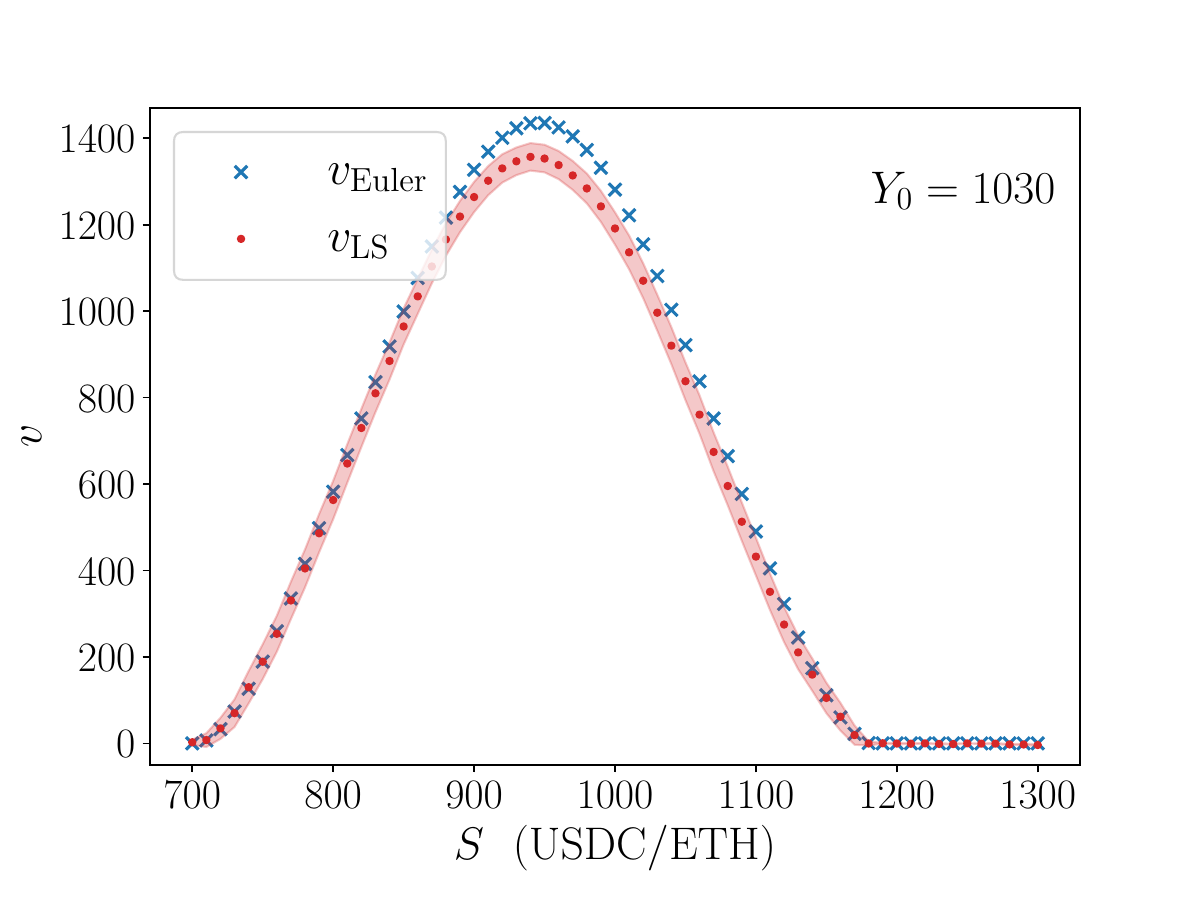}
    \caption{Comparison of the Euler method (blue crosses) and the Longstaff--Schwartz method (red dots) at time $t=0$ for three initial reserve values $Y$ (displayed in the legend). Each plot includes a 5\% confidence interval for the Longstaff--Schwartz method.}
    \label{fig:2d-plots-euler-vs-ls}
\end{figure}

\subsection{Comparative statics}

For the experiments below, we discretise $\mfT$ in 1,440 timesteps with $T = 1$ (one day).  All confidence intervals and averages are computed using 10,000 simulations. We use the model parameters from \cite{aqsha2025equilibrium} who used market data from Binance and Uniswap V2 in the pair ETH-USDC between 1 January 2022 and 30 April 2022 to calibrate model parameters, that is, we take $S_0=Z_0=2820\; \$$, $\sigma = 0.0569\,S_0\; \$ \cdot \text{day}^{-1/2}$, $Y_0=50,000\;\text{ETH}$, and $X_0= Y_0\,Z_0\; \$$. Below, the baseline values for the parameters in the intensities are: $a_0 = 1\;\text{day}^{-1}$, $a_1= 10\;\text{day}^{-1}$, and $a_2=10\;\$^{-1}\cdot\text{day}^{-1}$; below, we carry out robustness checks where we stress the key model parameters $a_1$ (modulating noise traders) and $a_2$ (modulating arbitrageurs). Lastly, we take the fee to be constant and equal to $0.01 \times\xi\times S_0$ with $\xi=100$.\\

Figure \ref{fig:sample-paths} shows a sample path (together with 5\% and 95\% quantile bands) for the key processes involved. As expected, we see that the marginal price in the pool $Z_t$ follows closely the external price $S_t$. This is due to the arbitrageurs that align quotes and are modulated through the value of the model parameter $a_2$. The middle panel shows that increase in prices are accompanied by depletion of reserves and vice-versa. The right panel shows the fee collected is increasing which is the main mechanism to offset the impermanent loss also shown in that panel. 

\begin{figure}[H]
    \centering
    \includegraphics[width=0.32\linewidth]{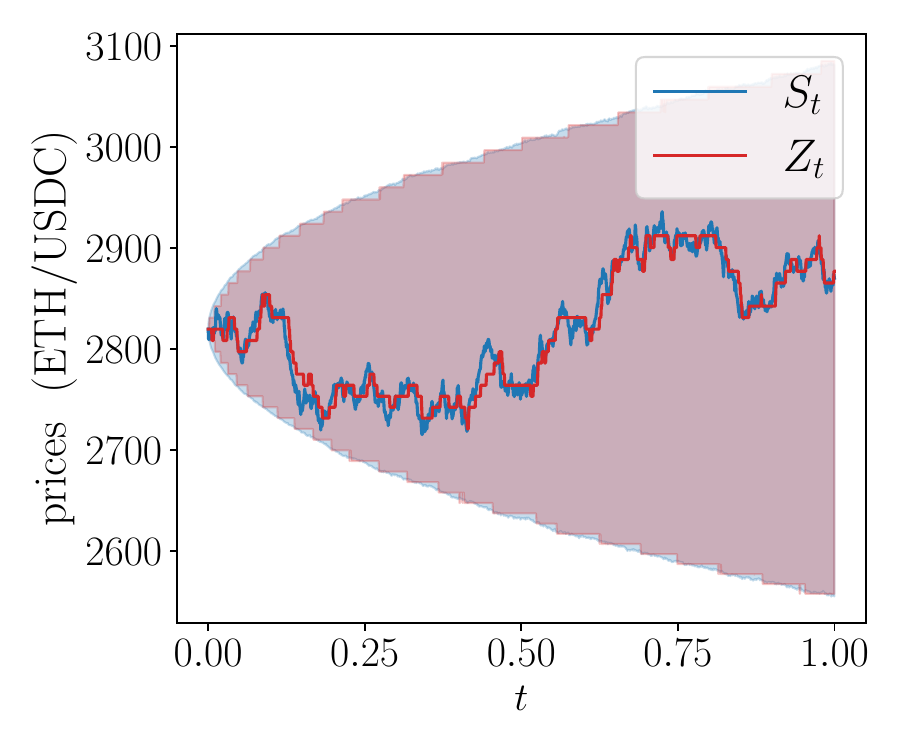}
    \includegraphics[width=0.32\linewidth]{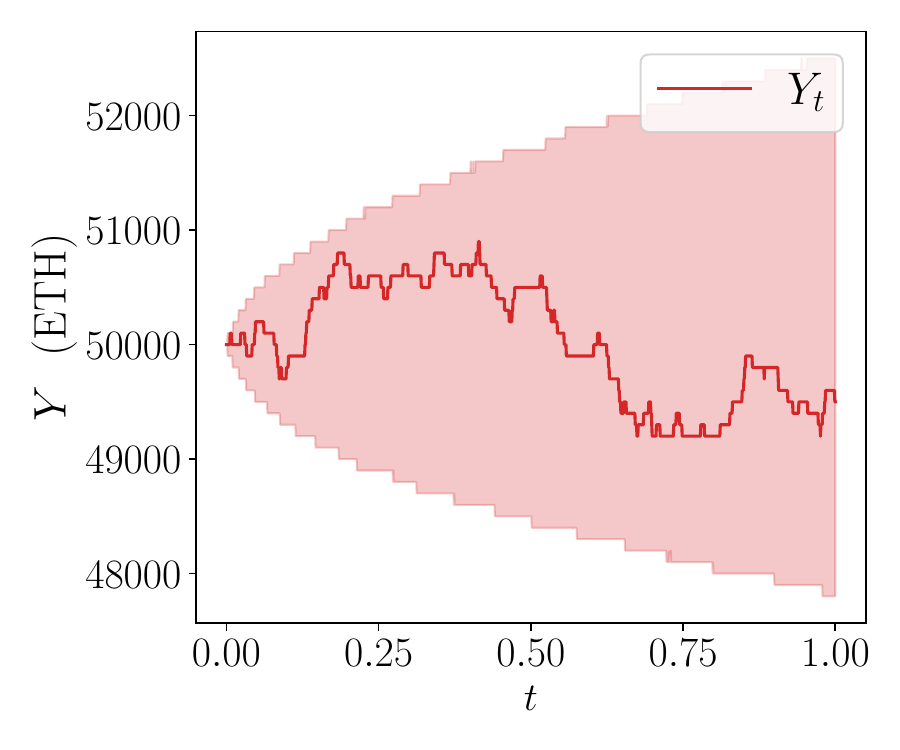}
    \includegraphics[width=0.32\linewidth]{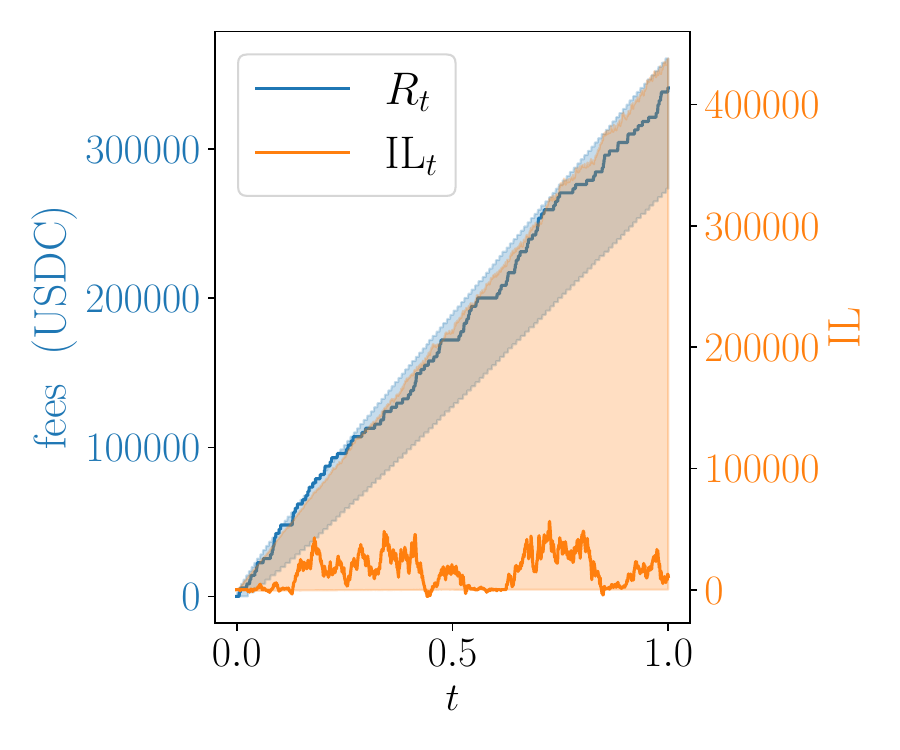}
    \caption{Sample paths for the price processes $S_t$ and $Z_t$, the inventory process $Y_t$ (ETH), and the fees collected by the LP. All processes are accompanied by the running quantiles (5\% and 95\%) across time.}
    \label{fig:sample-paths}
\end{figure}

Next, Table \ref{tab:sigma-table} explores the effects of the volatility of the oracle price in the expected exit time $\mathbb{E}[\tau]$ (here $\tau=T=1$ if the LP does not exit the pool before $T$), the expected total collected fees $\mathbb{E}[R_\tau]$, and the expected impermanent loss $\mathbb{E}[\mathrm{IL}_\tau]$. 

\begin{table}[H]
    \centering
    \begin{tabular}{l|rrr}
\hline
\hline
$\sigma_{\mathrm{new}}$ & $\mathbb{E}[\tau]$ & $\mathbb{E}[R_\tau]$ & $\mathbb{E}[\mathrm{IL}_\tau]$ \\
\hline
\hline
$\sigma$/5 & 1.00 (0.00) & 161,765 (22,320) & 4,364 (6,419) \\
$\sigma$/4 & 1.00 (0.00) & 168,343 (22,217) & 6,912 (10,024) \\
$\sigma$/3 & 1.00 (0.00) & 181,148 (21,806) & 12,422 (17,830) \\
$\sigma$/2 & 1.00 (0.00) & 210,406 (22,535) & 28,177 (40,154) \\
$\sigma$ & 1.00 (0.01) & 316,222 (26,930) & 112,635 (157,844) \\
$2\,\sigma$ & 0.51 (0.27) & 273,003 (144,005) & 202,999 (372,645) \\
$3\,\sigma$ & 0.01 (0.01) & 3,599 (9,363) & 2,449 (13,042) \\
$4\,\sigma$ & 0.00 (0.00) & 1,794 (3,984) & 1,304 (7,411) \\
$5\,\sigma$ & 0.00 (0.00) & 532 (1,198) & 420 (1,401) \\
\hline
\end{tabular}
    \caption{Summary statistics for the expected (i) exit time, (ii) fees collected, and (iii) impermanent loss, as $\sigma$ varies. Mean values (with standard deviation in parenthesis) across 10,000 simulations.  }
    \label{tab:sigma-table}
\end{table}

There is a concave relationship between $\sigma$ and (i) the expected collected fees $\mathbb{E}[R_\tau]$ together with the (ii) impermanent loss $\mathbb{E}[\mathrm{IL}_\tau]$. Indeed, for the first half of the table the higher sigma the more fees there are collected because of the fees paid by arbitrageurs. However,  as $\sigma$ becomes large, this effect disappears because the LPs exits the pool early in all sample paths due to the sharp increase of the impermanent loss (see right hand side column). \\

Next, we investigate the role of the fees in the optimal exit time and the profitability of LPs. All else being equal, as the value of the model parameter $\fee$ increases, we expect the overall order flow in the market to decrease. In our model, this implies that the higher the value of $\fee$, the lower the values of $a_0$, $a_1$, and $a_2$. To convey this stylised fact, when we change the value of $\fee$ to a given $\tilde{\fee}$, we change the value of $a_i$ to be
$$
a_i \,\exp(-\kappa(\tilde\fee - \fee))\,,\qquad i\in\{0,1,2\}\,,
$$
for $\kappa = 10^{-4}\, \$^{-1}$. We take this value because in our experiments the unstressed value of $\fee$ is in the order of $10^{4}$, but the qualitative behaviour we observe is robust to this choice. Similarly, the exponential form is again for simplicity; we observe a similar pattern when employing a linear decay schedule.
In Table \ref{tab:fee-table}, we explore the effect of the fees charged by LPs.

\begin{table}[H]
    \centering
    \begin{tabular}{l|rrr}
\hline
\hline
$\fee_{\mathrm{new}}$ & $\mathbb{E}[\tau]$ & $\mathbb{E}[R_\tau]$ & $\mathbb{E}[\mathrm{IL}_\tau]$ \\
\hline
\hline
$\fee$/5 & 0.01 (0.01) & 373 (750) & 173 (1,308) \\
$\fee$/4 & 0.04 (0.05) & 3,027 (5,099) & 2,239 (11,377) \\
$\fee$/3 & 0.11 (0.12) & 12,578 (14,578) & 8,437 (35,736) \\
$\fee$/2 & 0.96 (0.09) & 166,560 (20,956) & 105,643 (141,534) \\
$\fee$ & 1.00 (0.01) & 316,222 (26,930) & 112,635 (157,844) \\
$2\,\fee$ & 1.00 (0.00) & 530,160 (50,735) & 113,269 (161,299) \\
$3\,\fee$ & 1.00 (0.00) & 670,137 (71,973) & 113,126 (161,712) \\
$4\,\fee$ & 1.00 (0.00) & 755,901 (91,748) & 112,825 (161,819) \\
$5\,\fee$ & 1.00 (0.00) & 802,518 (109,834) & 112,407 (161,782) \\
\hline
\end{tabular}
    \caption{Summary statistics for the expected (i) exit time, (ii) fees collected, and (iii) impermanent loss, as $\fee$ varies.  Mean values (with standard deviation in parenthesis) across 10,000 simulations.}
    \label{tab:fee-table}
\end{table}

In the table above, the fees collected increase monotonically with $\fee$. On the bottom half of the table the scaling appears to be linear because on average the exit time is roughly at $T=1$, and the decay from the exponential term is relatively small, however this is not the case in the upper half of the table due to the early exit from the pool. The impermanent loss measure converges in the second half of the table because the LPs do not exit the pool early. From this table we see that the row corresponding to $\fee/2$ has an average exit time of roughly 0.96 (the row below is 1.0 and the one above drops to 0.11). Figure \ref{fig:3dplot-exit} studies this case in more detail, in particular, we show the points at which the LP exits the pool.

\begin{figure}[H]
    \centering
    \includegraphics[width=0.8\linewidth]{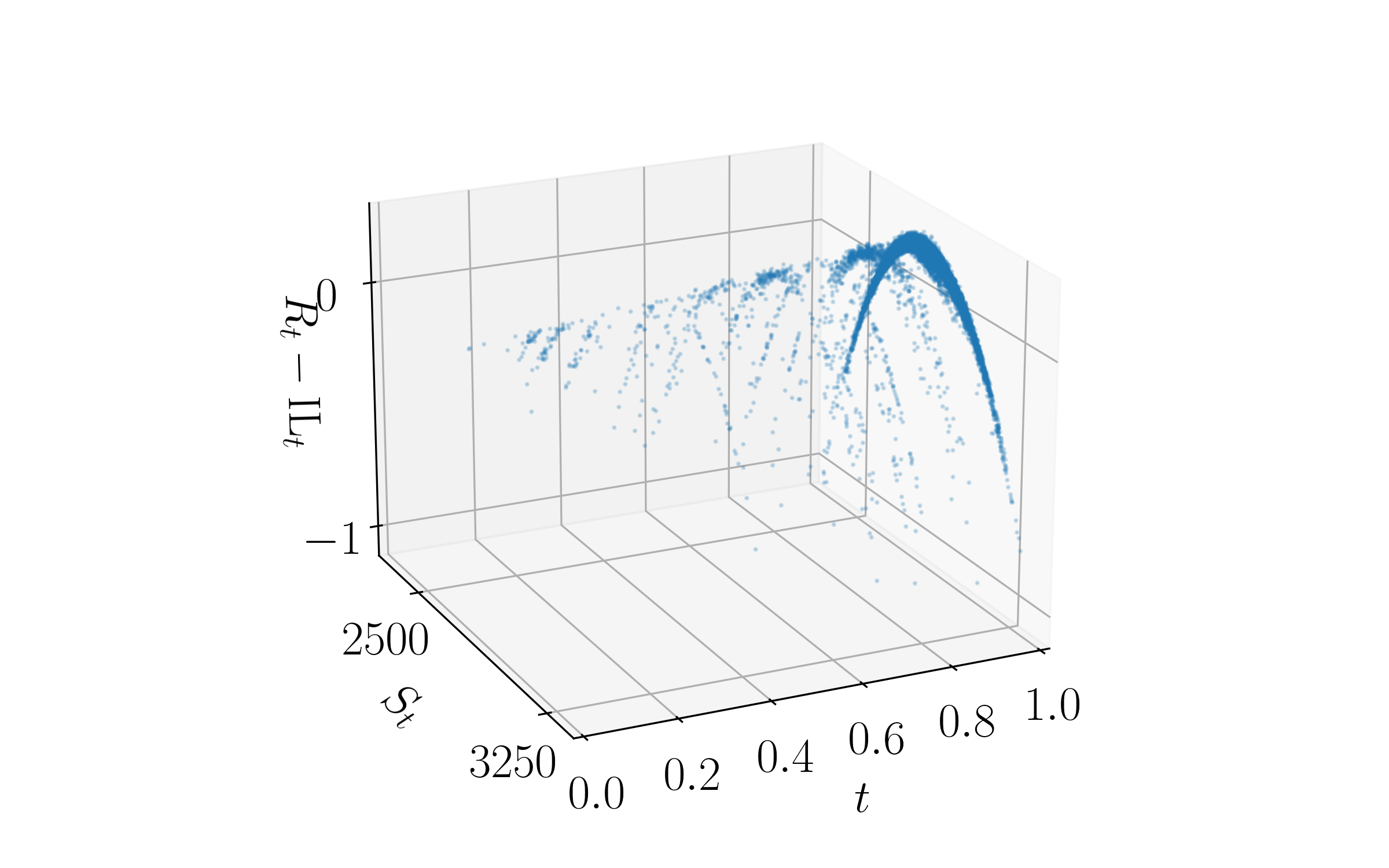}
    \caption{Exit time for the row in Table \ref{tab:fee-table} corresponding to $\fee$/2. The $x$-axis is time, $y$-axis is the oracle price, and $z$-axis is fees minus impermanent loss.  }
    \label{fig:3dplot-exit}
\end{figure}

We observe a quadratic behaviour across various slices of time towards the end of the trading horizon; each dot represents an early exit time. This is due to the impermanent loss taking a dominating role in the performance criterion.  The  plot in Figure \ref{fig:expr-IL} shows the relationship between the fees and $\mathbb{E}[R_\tau - \mathrm{IL}_\tau]$.

\begin{figure}[H]
    \centering
    \includegraphics[width=0.45\linewidth]{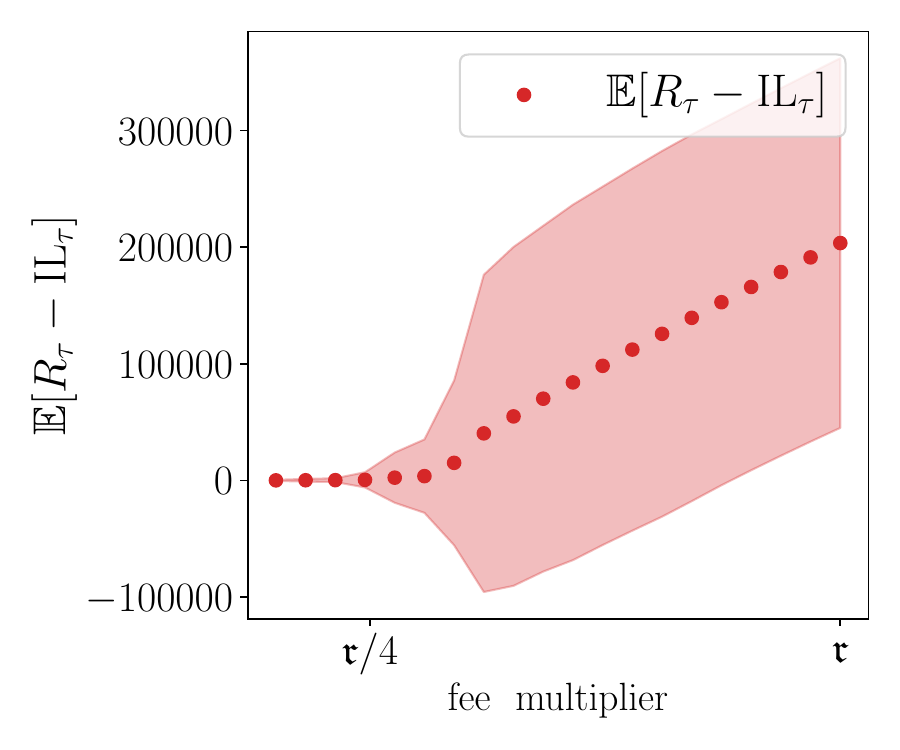}    \includegraphics[width=0.45\linewidth]{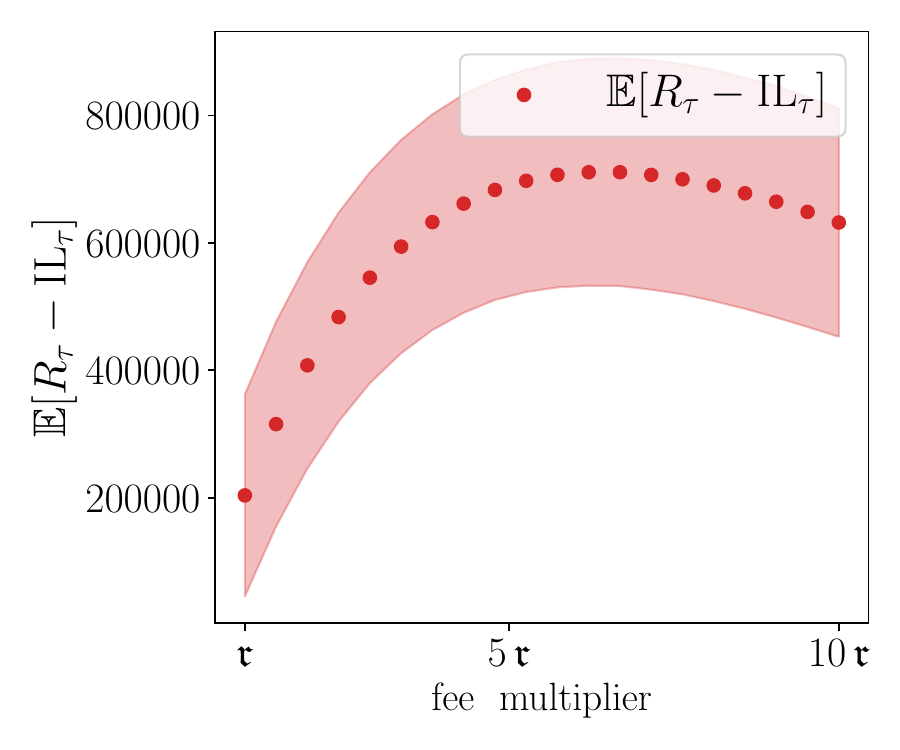}
    \caption{Relationship between $\fee$ and performance, measured through  $\mathbb{E}[R_\tau - \mathrm{IL}_\tau]$. The coloured area shows one standard deviation around the mean. Left panel is zoomed in for $(\fee/10,\fee]$ whereas the right  panel shows $(\fee, 10\,\fee]$. }
    \label{fig:expr-IL}
\end{figure}

From the left panel we see that for some values of $\fee$, the uncertainty may take the performance of the LP to the negative region. Of course, in expectation the performance criterion is always non-negative because the LP has the option to choose $\tau = 0$ as an admissible control, which means that the optimal exit strategy should yield a performance satisfying 
$\mathbb{E}[R_\tau - \mathrm{IL}_\tau]\geq 0$. As the fee increases beyond a given threshold, the decrease in order flow affects the performance criterion of the LP more than the grains from  increasing the values of fee. We also observe an optimal fee multiplier if the objective is that of maximising the criterion of the LPs in the market. Next, in Table \ref{tab:a1-table} we study the effect of noise traders (through $a_1$) and in Table \ref{tab:a2-table} the effect of arbitrageurs (through $a_2$).

\begin{table}[H]
    \centering
    \begin{tabular}{l|rrr}
\hline
\hline
$a_1^{\mathrm{new}}$ & $\mathbb{E}[\tau]$ & $\mathbb{E}[R_\tau]$ & $\mathbb{E}[\mathrm{IL}_\tau]$ \\
\hline
\hline
$a_1$/5 & 1.00 (0.01) & 296,407 (26,305) & 112,619 (157,946) \\
$a_1$/4 & 1.00 (0.01) & 297,518 (26,328) & 112,633 (157,824) \\
$a_1$/3 & 1.00 (0.01) & 299,641 (26,392) & 112,652 (158,199) \\
$a_1$/2 & 1.00 (0.01) & 303,577 (26,514) & 112,625 (157,865) \\
$a_1$ & 1.00 (0.01) & 316,222 (26,930) & 112,635 (157,844) \\
$2\,a_1$ & 1.00 (0.01) & 343,925 (27,871) & 112,757 (158,040) \\
$3\,a_1$ & 1.00 (0.01) & 373,822 (29,214) & 112,910 (158,749) \\
$4\,a_1$ & 1.00 (0.01) & 406,551 (30,439) & 112,767 (159,015) \\
$5\,a_1$ & 1.00 (0.01) & 441,710 (31,530) & 112,592 (158,793) \\
\hline
\end{tabular}
    \caption{Summary statistics for the expected (i) exit time, (ii) fees collected, and (iii) impermanent loss, as $a_1$ varies. Mean values (with standard deviation in parenthesis) across 10,000 simulations. }
    \label{tab:a1-table}
\end{table}

\begin{table}[H]
    \centering
    \begin{tabular}{l|rrr}
\hline
\hline
$a_2^{\mathrm{new}}$ & $\mathbb{E}[\tau]$ & $\mathbb{E}[R_\tau]$ & $\mathbb{E}[\mathrm{IL}_\tau]$ \\
\hline
\hline
$a_2$/5 & 0.86 (0.21) & 118,244 (33,981) & 83,734 (127,349) \\
$a_2$/4 & 0.93 (0.13) & 142,756 (27,971) & 95,493 (139,173) \\
$a_2$/3 & 0.98 (0.07) & 173,770 (24,588) & 104,549 (148,242) \\
$a_2$/2 & 0.99 (0.03) & 216,953 (24,307) & 110,222 (153,239) \\
$a_2$ & 1.00 (0.01) & 316,222 (26,930) & 112,635 (157,844) \\
$2\,a_2$ & 1.00 (0.01) & 476,493 (31,957) & 113,284 (158,588) \\
$3\,a_2$ & 1.00 (0.01) & 618,447 (36,240) & 113,564 (159,533) \\
$4\,a_2$ & 1.00 (0.01) & 750,774 (40,318) & 113,758 (160,513) \\
$5\,a_2$ & 1.00 (0.00) & 878,099 (43,924) & 113,789 (160,830) \\
\hline
\end{tabular}
    \caption{Summary statistics for the expected (i) exit time, (ii) fees collected, and (iii) impermanent loss, as $a_2$ varies. Mean values (with standard deviation in parenthesis) across 10,000 simulations.}
    \label{tab:a2-table}
\end{table}

As expected, the fees collected are monotonically increasing with respect to both, noise traders and arbitrageurs. The key difference is that across all $a_1$ values we consider, the LPs stay in the pool roughly until the end, whereas as we decrease the value of $a_2$ (arbitrageurs), LPs exit the pool early because prices misalign further. Another difference is that with more noise traders the impermanent loss does not change (on average), whereas when there are more arbitrageurs,  LPs realise more of their impermanent loss  up to the maximum level implied by the volatility of the oracle price. \\

\section{Conclusion}\label{sec: conclusions}
We studied the optimal exit problem faced by a liquidity provider in a constant function market, formalised as an optimal stopping problem. Our theoretical analysis characterises the value function as the unique viscosity solution to a Hamilton–Jacobi–Bellman quasi-variational inequality (HJB QVI), and we proposed two numerical approaches to solve the HJB QVI.\\

We found how the LP's optimal exit strategy is driven by the interplay between impermanent loss and fees collected. LPs stay in the pool when the fee income outweighs the potential impermanent loss, which occurs more frequently when volatility is low, fees are high,  or trading activity is high. However, when price dislocations between the AMM price and the oracle price become too large, LPs optimally exit the pool before arbitrageurs realign prices.\\

The value function is maximised when the AMM price equals the oracle price, and declines as the two diverge. In our model, arbitrage activity is not instantaneous: impermanent loss is only realised once arbitrageurs trade to realign prices. Hence, when large price moves occur, LPs may exit the pool pre-emptively to avoid bearing this loss, effectively ``beating'' the arbitrageurs to the exit. While this insight is consistent with the model's dynamics, it is important to note that gas fees would play a key role in this race. In practice, for arbitrageurs to act faster than LPs, they would likely need to pay higher priority fees. The impact of priority fees and more detailed transaction cost modelling is a promising direction for future research, which we are currently investigating.

\appendix

\section{Risk-averse liquidity provider}\label{sec:app_risk_averse} 

Here we provide details for  the problem of a risk averse LP wishing to maximise the expected utility of her PnL. More precisely, we study the problem
\begin{equation}\label{pb_risk_averse}
    \sup_{\tau \in \Tc } \mathbb{E}\left[- \exp \left\{ - \psi \left( P^X_\tau +S_\tau P^Y_\tau + R_\tau \right) \right\}\right]\,,
\end{equation}
where $\psi>0$ represents the absolute risk aversion of the LP. We introduce the process $(P_t)_{t\in \mfT}$ such that for all $t\in \mfT$,
\begin{align*}
    P_t = P^X_t +S_t P^Y_t + R_t.
\end{align*}
In particular, using $\beta^{a,b}$ from  \eqref{eq:betas}, we obtain
\begin{align*}
    \d P_t = \beta^b(Y_{t-},S_t) \d N_t^b  +  \beta^a(Y_{t-},   S_t) \d N_t^a.
\end{align*}
We define the value function $u_\psi$ associated with \eqref{pb_risk_averse}, as

\begin{align}\label{val_func_risk_averse}
\begin{aligned}
u_\psi :\quad & \mfT \times Q \times \mathbb{R} \times \mathbb{R} \longrightarrow \mathbb{R} \\
& (t, y, S, p) \longmapsto \sup_{\tau \in \Tc_{t,T}} \mathbb{E}\Bigg[- \exp \bigg\{ - \psi P^{t,y,S,p}_\tau \bigg\}\Bigg]\,\\
& \qquad \qquad \qquad =\sup_{\tau \in \Tc_{t,T}}  \mathbb{E}\Bigg[- \exp \bigg\{ - \psi \bigg ( p + \displaystyle\int_t^\tau \beta^b(Y^{t,y}_{s\shortminus},S^{t,S}_s) \d N_s^b  + \displaystyle\int_t^\tau \beta^a(Y^{t,y}_{s\shortminus},S^{t,S}_s) \d N_s^a \bigg)\bigg\}\Bigg]\,.
\end{aligned}
\end{align}

The dynamic programming principle for this problem is given in the following lemma.

\begin{lemma}
     Let $(t,y,S,p) \in [0,T) \times Q \times \mathbb{R} \times \mathbb{R}$, then for all $\theta \in \mathcal{T}_{t,T}$ we have
    \begin{align}
        u_\psi(t, y, S,p) = & \, \sup_{\tau \in \Tc_{t,T}} \mathbb{E}\bigg[u_\psi \left(\theta, Y^{t,y}_{\theta }, S^{t,S}_{\theta}, P^{t,p,y,S}_\theta \right) \mathds 1_{\{\theta \le \tau \}} - \exp \left\{-\psi P^{t,y,S,p}_\tau  \right\}\mathds 1_{\{\theta > \tau \}} \bigg] \nonumber.
    \end{align}
\end{lemma}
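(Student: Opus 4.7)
The plan is to prove the identity by establishing both inequalities separately, relying on the strong Markov property of the state triple $(Y, S, \cdot)$ together with the quadratic-growth bound from Proposition \ref{v_bounded}. Throughout, denote the running reward density by
\[
r(u, y, S) := \mathds{1}_{\{y + \xi \le \overline Y\}}\,\beta^b(y,S)\,\bar\lambda^b(y,S) + \mathds{1}_{\{y - \xi \ge \underline Y\}}\,\beta^a(y,S)\,\bar\lambda^a(y,S),
\]
so that the reward from $t$ to $\tau$ under law-governed $(Y^{t,y}, S^{t,S})$ equals $\mathbb{E}\bigl[\int_t^\tau r(u, Y^{t,y}_{u\shortminus}, S^{t,S}_u)\,\d u\bigr]$. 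The proof follows the template of Øksendal Lemma 7.3, with the only substantive verification being the measurable-selection step for the lower bound.

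First I would prove the upper bound $v(t,y,S) \le \text{RHS}$. Fix $\tau \in \mcT_{t,T}$ and split the integral as $\int_t^\tau = \int_t^{\tau\wedge\theta} + \mathds{1}_{\{\tau \ge \theta\}}\int_\theta^\tau$. On the event $\{\tau \ge \theta\}$, the restriction $\tau\vee\theta$ lies in $\mcT_{\theta,T}$, and conditioning on $\mcF_\theta$ I invoke the strong Markov property of $(Y,S)$ at time $\theta$. This property holds because the intensities $\bar\lambda^{a,b}$ and the integrand $r$ depend only on the current state, and because $W, \hat N^a, \hat N^b$ are independent Markov processes under $\hat{\mathbb{P}}$ (and hence, by the construction of $\mathbb{P}$ through the Doléans-Dade exponential $L$, the pair $(Y,S)$ is strong Markov under $\mathbb{P}$). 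The conditional reward on $\{\tau \ge \theta\}$ is therefore bounded above by $v(\theta, Y^{t,y}_\theta, S^{t,S}_\theta)$, giving the claimed inequality, and taking the supremum over $\tau$ concludes.

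For the lower bound $v(t,y,S) \ge \text{RHS}$, I would use an $\varepsilon$-optimal measurable selection argument. Fix $\varepsilon > 0$ and any $\tau_0 \in \mcT_{t,T}$. By definition of the value function, for every $(t',y',S') \in \mfT \times Q \times \mathbb{R}$ there exists a stopping time $\tau^\varepsilon_{t',y',S'} \in \mcT_{t',T}$ whose expected reward is within $\varepsilon$ of $v(t',y',S')$. Using that $Q$ is finite and that $v$ is locally bounded (Proposition \ref{v_bounded}), a classical measurable-selection theorem (Wagner or Aumann) yields such a family depending measurably on $(t',y',S')$. I then define $\tau^\star := \tau_0 \,\mathds{1}_{\{\tau_0 < \theta\}} + \tau^\varepsilon_{\theta, Y^{t,y}_\theta, S^{t,S}_\theta} \,\mathds{1}_{\{\tau_0 \ge \theta\}}$, which is an element of $\mcT_{t,T}$ by the strong Markov property applied in the reverse direction. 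Its expected reward equals
\[
\mathbb{E}\!\left[\int_t^{\tau_0 \wedge \theta} r(u, Y^{t,y}_{u\shortminus}, S^{t,S}_u)\,\d u + \mathds{1}_{\{\tau_0 \ge \theta\}} \mathbb{E}\!\left[\int_\theta^{\tau^\varepsilon_{\theta, \cdot}} r\,\d u\,\Big|\, \mcF_\theta\right]\right]
\ge \mathbb{E}\!\left[\int_t^{\tau_0 \wedge \theta} r\,\d u + \mathds{1}_{\{\tau_0 \ge \theta\}}\bigl(v(\theta, Y^{t,y}_\theta, S^{t,S}_\theta) - \varepsilon\bigr)\right],
\]
which is within $\varepsilon$ of the right-hand side evaluated at $\tau_0$. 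Passing to the supremum over $\tau_0$ and letting $\varepsilon \downarrow 0$ completes the proof.

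The main obstacle is the measurable-selection step, which is nontrivial here because at this stage of the paper the value function $v$ has only been shown to be locally bounded, not continuous (continuity is established only in Theorem \ref{thm_viscosity}, whose proof relies on the present lemma). To avoid circularity, I would first note that the Borel measurability of $v$ can be obtained directly from the definition: the supremum in \eqref{val_func_risk_neutral} can be restricted to stopping times taking values in a countable dense dyadic set (by right-continuity of $r(u, Y_{u\shortminus}, S_u)$ and dominated convergence, justified by the quadratic growth bounds derived in the proof of Proposition \ref{v_bounded}), and a countable supremum of measurable functions is measurable. With measurability of $v$ secured, Wagner's selection theorem yields the required measurable $\varepsilon$-optimal family and Øksendal's Lemma 7.3 applies verbatim.
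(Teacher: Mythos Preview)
Your proposal is carefully argued, but it addresses the wrong statement: you have written out the proof of Lemma~\ref{lemma_dpp} (the risk-neutral DPP for $v$), not the risk-averse DPP for $u_\psi$ that is asked for here. Your running reward density $r(u,y,S)$, the integral $\int_t^\tau r\,\d u$, the inequality ``$v(t,y,S)\le\text{RHS}$'', and the appeal to Proposition~\ref{v_bounded} all pertain to $v$ from \eqref{val_func_risk_neutral}. The risk-averse problem \eqref{pb_risk_averse} has no running reward; its payoff is the single terminal term $-\exp\{-\psi P^{t,y,S,p}_\tau\}$, and the state now carries the extra variable $p$ through the process $P$. Consequently the decomposition you need is not $\int_t^\tau=\int_t^{\tau\wedge\theta}+\mathds 1_{\{\tau\ge\theta\}}\int_\theta^\tau$, but rather the split of the payoff itself as $-e^{-\psi P_\tau}\mathds 1_{\{\tau<\theta\}}+(-e^{-\psi P_\tau})\mathds 1_{\{\tau\ge\theta\}}$, with the strong Markov property applied to the triple $(Y,S,P)$ at $\theta$ on the second event to recover $u_\psi(\theta, Y_\theta, S_\theta, P_\theta)$.

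That said, the paper does not give a proof of this lemma either: it is stated without argument, in parallel with Lemma~\ref{lemma_dpp}, which in turn merely cites \O ksendal--Sulem, Lemma~7.3. Your overall two-inequality scheme (strong Markov for the upper bound, $\varepsilon$-optimal measurable selection and concatenation for the lower bound) is exactly the right template and adapts straightforwardly to $u_\psi$; you just need to rewrite it with the correct payoff and augmented state. The measurability discussion you provide is more than the paper offers, and would carry over once you note that $u_\psi$ is automatically bounded between $-e^{-\psi p}$ and $0$ (take $\tau=t$ for the lower bound; the exponential is positive for the upper), so Proposition~\ref{v_bounded} is not needed here.
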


In particular, the HJB QVI associated with this problem is
\begin{align}\label{eq:hjb_qvi_risk}
        0 = \min\bigg\{& -\frac{\partial}{\partial t} u_\psi(t, y, S, p) \, - \frac{1}{2} \sigma^2 \frac{\partial^2}{\partial S^2} u_\psi(t, y, S,p)\\
        & - \mathds 1_{\{y +\xi \le \overline Y\}}\bar{\lambda}^b(y, S) \big[  u_\psi (t , y +\xi, S, p+\beta^b(y,S)) - u_\psi (t, y, S, p) \big] \, \nonumber \\
    & - \mathds 1_{\{y -\xi \ge \underline Y\}} \bar{\lambda}^a(y, S) \big[  u_\psi(t,y - \xi, S, p+\beta^a(y,S)) - u_\psi(t, y, S,p) \big]\, \nonumber ,\\
    & u_\psi(t,y,S,p) + e^{-\psi p} \bigg\} \qquad \qquad \qquad \qquad \qquad \qquad \qquad \text{on } [0,T) \times  Q \times \mathbb R\times \mathbb R \nonumber,
\end{align}
with terminal condition $u_\psi(T,y,S, p)=-e^{-\psi p}$ for all $(y,S, p) \in Q \times \mathbb R\times \mathbb R$.\\

Next, we use the ansatz
$$u_\psi (t,y,S,p) = -\exp \left\{ - \psi \left( p + v_\psi (t,y,S) \right) \right\},%\qquad \forall\; (t,y,S,p) \in \mfT \times Q\times \mathbb R \times \mathbb R.
$$
to obtain the  following HJB QVI for the function $v_\psi$
\begin{align}\label{eq:hjb_qvi_risk_ansatz}
        0 = \min\Bigg\{& -\frac{\partial}{\partial t} v_\psi(t, y, S) \, - \frac{1}{2} \sigma^2 \left(\frac{\partial^2}{\partial S^2} v_\psi(t, y, S) - \psi \left(\frac{\partial}{\partial S} v_\psi(t, y, S) \right)^2  \right) \\
    &\quad - \mathds 1_{\{y +\xi \le \overline Y\}}\bar{\lambda}^b(y, S) \frac 1{\psi} \bigg[  1 - \exp \Big\{ - \psi \left(\beta^b(y,S) + v_\psi(t, y+\xi, S)- v_\psi(t, y, S) \right) \Big\} \bigg] \, \nonumber \\
    &\quad - \mathds 1_{\{y -\xi \ge \underline Y\}} \bar{\lambda}^a(y, S) \frac 1{\psi} \bigg[  1 - \exp \Big\{ - \psi \left(\beta^a(y,S) + v_\psi(t, y-\xi, S)- v_\psi(t, y, S) \right) \Big\} \bigg]\, \nonumber ,\\
    &\quad v_\psi(t,y,S) \Bigg\} \qquad \qquad \qquad \qquad \qquad \qquad \qquad \text{on } [0,T) \times  Q \times \mathbb R\nonumber,
\end{align}
with terminal condition $v_\psi(T,y,S)=0$ for all $(y,S) \in Q \times \mathbb R$.\\

\begin{remark}
    Notice that, as expected,  taking $\psi \rightarrow 0$ in the HJB QVI \eqref{eq:hjb_qvi_risk_ansatz} we obtain the original risk-neutral HJB QVI \eqref{eq:hjb_qvi}.
\end{remark}

Figure~\ref{fig:3dplot-euler-risk-averse} highlights the difference between a risk-neutral and a risk-averse liquidity provider. When the liquidity provider is risk-averse, the value function shrinks, indicating that she tolerates much smaller deviations of the AMM price from the external price. To obtain a stable and meaningful comparison, the parameters are chosen to match those used in Figure~\ref{fig:3dplot-euler-vs-ls}, with $\psi = 10^{-2}$.

\begin{figure}[H]
    \centering
    \includegraphics[width=0.8\linewidth]{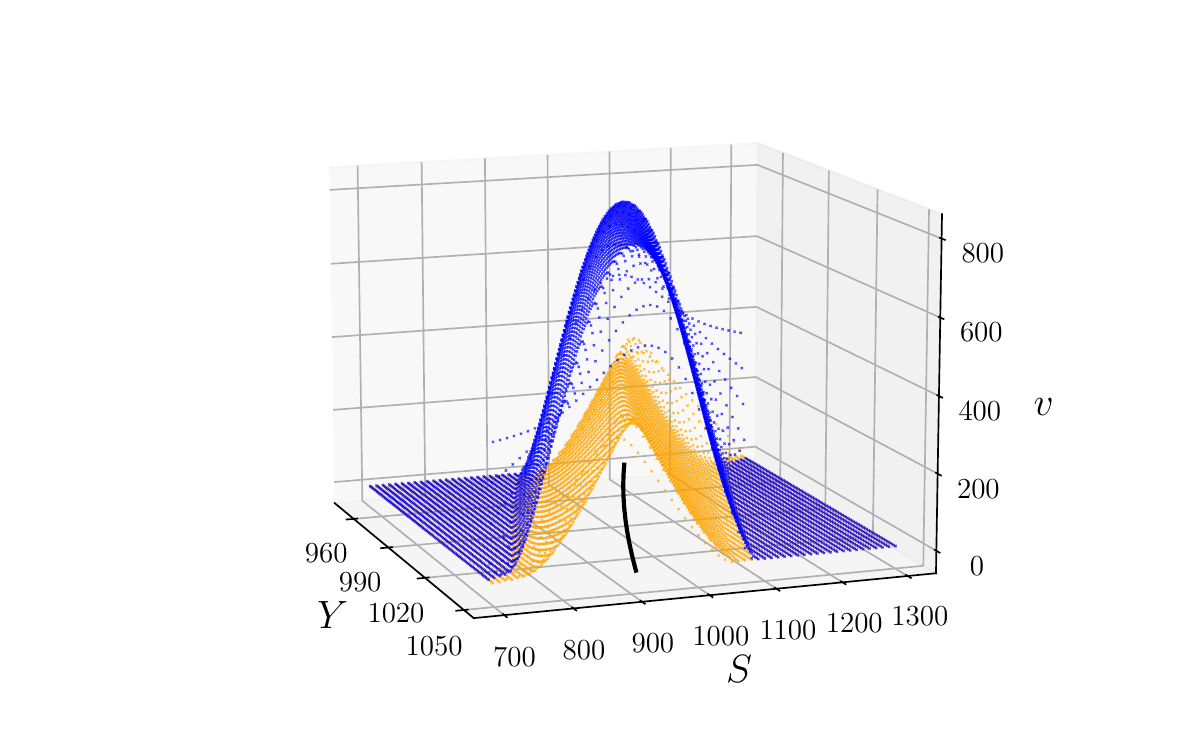}
    \caption{Value function $v$ at $t = 0$ computed using the Euler method for both a risk-neutral (blue surface) and a risk-averse (orange surface) liquidity provider. The black curve in the $S\text{-}Y$ plane represents the states satisfying $S = c / Y^2$.}
    \label{fig:3dplot-euler-risk-averse}
\end{figure}

\section{Proof of Theorem \ref{thm_viscosity}}\label{sec:proof_viscosity}

We report in this section the proof of Theorem \ref{thm_viscosity}. In the first part, we prove that the value function \eqref{val_func_risk_neutral} is a viscosity solution to the HJB QVI \eqref{eq:hjb_qvi} with terminal condition $v(T,y,S) = 0$ for $(y,S)\in Q\times \mathbb R$. In the second part, we prove a comparison principle for the HJB QVI $\eqref{eq:hjb_qvi}$ that allows us to establish  the uniqueness and the continuity of the solution.

\subsection{Existence result}

We first prove that the value function is both a viscosity subsolution and supersolution to the HJB QVI $\eqref{eq:hjb_qvi}$ on $[0,T)\times Q \times \mathbb R$, and then we show that it satisfies the right terminal condition.

\begin{proposition} \label{v_viscosity_subsol}
The function $v$ in \eqref{val_func_risk_neutral} is a viscosity subsolution to the HJB~QVI~\eqref{eq:hjb_qvi} on $[0,T) \times Q \times \mathbb{R}$.
\end{proposition}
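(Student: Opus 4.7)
The plan is to argue by contradiction using the dynamic programming principle of Lemma \ref{lemma_dpp}. Fix a test function $\gamma \in \mathcal{C}$ and a point $(\tilde t, \tilde y, \tilde S) \in [0,T) \times Q \times \mathbb{R}$ realising $(v^* - \gamma)(\tilde t, \tilde y, \tilde S) = 0 = \max(v^* - \gamma)$. Since $\tau = t$ is admissible one has $v \ge 0$ and hence $v^* \ge 0$, so $\gamma(\tilde t, \tilde y, \tilde S) = v^*(\tilde t, \tilde y, \tilde S) \ge 0$. If this common value is $0$ the subsolution inequality is immediate, because the second argument of the $\min$ in \eqref{eq:hjb_qvi} is already non-positive. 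I therefore focus on the case $v^*(\tilde t, \tilde y, \tilde S) > 0$ and aim to show that the differential-jump operator appearing in \eqref{eq:hjb_qvi} applied to $\gamma$ is non-positive at $(\tilde t, \tilde y, \tilde S)$.

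Assume for contradiction that this operator is $\ge \eta$ for some $\eta > 0$. Because $Q$ is a finite discrete set, continuity in $(t,S)$ of $\gamma$, $\partial_t \gamma$, $\partial^2_{SS}\gamma$ and of the maps $\bar\lambda^{a,b}(\tilde y, \cdot)$, $\beta^{a,b}(\tilde y, \cdot)$ yields a radius $r > 0$ small enough (and $\tilde t + r < T$) such that the operator stays $\ge \eta/2$ on $B_r := [\tilde t, \tilde t + r] \times \{\tilde y\} \times B(\tilde S, r)$. Pick a sequence $(t_n, S_n) \to (\tilde t, \tilde S)$ with $v(t_n, \tilde y, S_n) \to v^*(\tilde t, \tilde y, \tilde S)$, and set
$$\theta_n := \inf\bigl\{s \ge t_n : |S_s - \tilde S| \ge r \text{ or } Y_s \ne \tilde y \bigr\} \wedge (t_n + r),$$
so that $(s, Y_{s-}, S_s) \in B_r$ on $[t_n, \theta_n)$.

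Write $R_s := \mathds{1}_{\{Y_{s-}+\xi \le \overline Y\}}\lambda^b_s \beta^b(Y_{s-},S_s) + \mathds{1}_{\{Y_{s-}-\xi \ge \underline Y\}}\lambda^a_s \beta^a(Y_{s-},S_s)$ for brevity. Applying Dynkin's formula to $\gamma$ on $[t_n, \theta_n]$ with the generator of $(s, Y_s, S_s)$ (diffusion in $S$ together with compensated jumps of $N^a, N^b$), then using the $\ge \eta/2$ lower bound inside $B_r$, produces
$$\gamma(t_n, \tilde y, S_n) - \mathbb{E}\bigl[\gamma(\theta_n, Y_{\theta_n}, S_{\theta_n})\bigr] \;\ge\; \frac{\eta}{2}\,\mathbb{E}[\theta_n - t_n] + \mathbb{E}\int_{t_n}^{\theta_n} R_s\,\mathrm{d}s.$$
On the other hand, applying Lemma \ref{lemma_dpp} with the admissible choice $\tau = T \ge \theta_n$ gives
$$v(t_n, \tilde y, S_n) \;\ge\; \mathbb{E}\int_{t_n}^{\theta_n} R_s\,\mathrm{d}s + \mathbb{E}\bigl[v(\theta_n, Y_{\theta_n}, S_{\theta_n})\bigr].$$
Combining these two inequalities and using $v \le v^* \le \gamma$ globally (so $\gamma - v \ge 0$) yields
$$(\gamma - v)(t_n, \tilde y, S_n) \;\ge\; (\gamma - v)(t_n, \tilde y, S_n) - \mathbb{E}\bigl[(\gamma - v)(\theta_n, Y_{\theta_n}, S_{\theta_n})\bigr] \;\ge\; \frac{\eta}{2}\,\mathbb{E}[\theta_n - t_n].$$

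Sending $n \to \infty$ the left-hand side tends to $0$ by the choice of $(t_n, S_n)$, so the contradiction closes as soon as $\liminf_n \mathbb{E}[\theta_n - t_n] > 0$. This last bound follows from boundedness of $\bar\lambda^{a,b}$ on $B_r$ (giving a uniform positive lower bound on the probability that neither $N^a$ nor $N^b$ jumps on $[t_n, t_n + r/2]$) combined with standard exit estimates for Brownian motion from a ball of radius $r/2$ around $\tilde S$, valid for $n$ large so that $S_n \in B(\tilde S, r/4)$. The main technical obstacle is the careful bookkeeping in Dynkin's formula with the state-dependent indicator masks $\mathds{1}_{\{y \pm \xi \in [\underline Y, \overline Y]\}}$, so that the generator of $\gamma$ matches exactly the operator in \eqref{eq:hjb_qvi}; the boundedness of $\gamma$ and $\partial_S \gamma$ built into the definition of $\mathcal{C}$ ensures that all stochastic integrals against $\mathrm{d}W$ and the compensated Poisson martingales are true martingales on $[t_n, \theta_n]$, justifying taking expectations.
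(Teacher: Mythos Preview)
There is a genuine gap in your ``Combining these two inequalities'' step. Writing $a:=\gamma(t_n,\tilde y,S_n)-\mathbb E[\gamma(\theta_n,Y_{\theta_n},S_{\theta_n})]$ and $b:=v(t_n,\tilde y,S_n)-\mathbb E[v(\theta_n,Y_{\theta_n},S_{\theta_n})]$, your Dynkin inequality gives $a\ge \tfrac{\eta}{2}\,\mathbb E[\theta_n-t_n]+\mathbb E\!\int R_s\,\d s$, while the DPP applied with $\tau=T$ gives $b\ge \mathbb E\!\int R_s\,\d s$. These are \emph{both lower bounds}; from $a\ge x$ and $b\ge y$ one cannot conclude $a-b\ge x-y$. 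The inequality you actually need is $b\le \mathbb E\!\int R_s\,\d s$, i.e.\ $v(t_n,\tilde y,S_n)\le \mathbb E\!\int_{t_n}^{\theta_n} R_s\,\d s+\mathbb E[v(\theta_n,Y_{\theta_n},S_{\theta_n})]$, and this does \emph{not} follow from the DPP by choosing one particular stopping time: Lemma~\ref{lemma_dpp} is a supremum over $\tau$, so evaluating at $\tau=T$ only yields the $\ge$ direction.

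The paper's argument avoids this by exploiting the full force of the DPP as an equality. It applies It\^o to $\gamma$ on $[t_m,\tau\wedge\pi_m]$ for an \emph{arbitrary} $\tau\in\Tc_{t_m,T}$ (not just $\tau=T$), and then bounds $\gamma$ at the terminal time in both cases: when $\tau\ge\pi_m$ one is on the parabolic boundary of the localisation ball, where the \emph{strict} maximum of $v^*-\gamma$ forces $\gamma\ge v+\eta$; when $\tau<\pi_m$ one is still inside the ball, where the contradiction hypothesis (now taken on the full $\min$, so in particular $\gamma\ge\eta$) is used. This yields $v(t_m,\tilde y,S_m)\ge \tfrac{\eta}{2}+\mathbb E[\int_{t_m}^{\tau\wedge\pi_m} R_s\,\d s+\mathds 1_{\{\tau\ge\pi_m\}}v(\pi_m,Y_{\pi_m},S_{\pi_m})]$ for \emph{every} $\tau$, and taking the supremum over $\tau$ contradicts the equality in Lemma~\ref{lemma_dpp}. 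Your case split ($v^*=0$ versus $v^*>0$) is fine, but in the $v^*>0$ branch you still need to run Dynkin up to $\tau\wedge\theta_n$ with $\tau$ generic and then sup over $\tau$; the single choice $\tau=T$ cannot close the argument.
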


\begin{proof}
Using Proposition~\ref{v_bounded}, $v$ is locally bounded on $ [0,T) \times Q \times \mathbb{R}$, so we let  $v^*$ be its upper semicontinuous envelope.\\

Let $(\tilde{t},\tilde{y},\tilde{S}) \in [0,T) \times Q \times \mathbb{R}$ and $\gamma \in \mathcal{C}$ such that $0 = (v^*-\gamma)(\tilde{t},\tilde{y},\tilde{S}) = \max_{(t,y,S) \in [0,T) \times Q \times \mathbb{R}}(v^*-\gamma)(t,y,S)$. As usual, without loss of generality we can assume the maximum to be strict; see for example Lemma 6.1 in \cite{fleming2006controlled}. By definition of $v^*(\tilde{t}, \tilde{y}, \tilde{S})$, there exists $(t_m, S_m)_m$ a sequence in $[0,T) \times \mathbb{R}$ such that
\begin{align}
(t_m, S_m) \xlongrightarrow[\substack{m \to \infty}]{} (\tilde{t}, \tilde{S}), \, \nonumber \\
v(t_m, \tilde{y}, S_m) \xlongrightarrow[\substack{m \to \infty}]{} v^{*}(\tilde{t}, \tilde{y}, \tilde{S}) \nonumber.
\end{align}

We prove the result by contradiction. Assume there exists $\eta > 0$ such that
\begin{align}
\min \Bigg\{ 
&-\frac{\partial}{\partial t} \gamma(\tilde{t}, \tilde{y}, \tilde{S}) - \frac{1}{2} \sigma^2 \frac{\partial^2}{\partial S^2} \gamma(\tilde{t}, \tilde{y}, \tilde{S}) \, \nonumber \\
& - \mathds 1_{\{\tilde y +\xi \le \overline Y\}}\bar{\lambda}^b(\tilde{y}, \tilde{S}) \left[ \beta^b(\tilde{y},\tilde{S}) + \gamma(\tilde{t}, \tilde{y} + \xi, \tilde{S}) - \gamma(\tilde{t}, \tilde{y}, \tilde{S}) \right] \, \nonumber \\ 
& -\mathds 1_{\{\tilde y -\xi \ge \underline Y\}} \bar{\lambda}^a(\tilde{y}, \tilde{S}) \left[ \beta^a(\tilde{y},\tilde{S}) + \gamma(\tilde{t}, \tilde{y} - \xi, \tilde{S}) - \gamma(\tilde{t}, \tilde{y}, \tilde{S}) \right], \, \gamma(\tilde{t}, \tilde{y}, \tilde{S}) \Bigg\} > 2\eta. \nonumber
\end{align}

Then, as $\gamma$ belongs to $\mathcal{C}$, we have
\begin{align}\label{inequality_test_func_sub}
\min \Bigg\{ 
&- \frac{\partial}{\partial t} \gamma(t, y, S) - \frac{1}{2} \sigma^2 \frac{\partial^2}{\partial S^2} \gamma(t, y, S) \, \nonumber \\
& - \mathds 1_{\{y +\xi \le \overline Y\}}\bar{\lambda}^b(y, S) \left[ \beta^b(y,S) + \gamma(t, y + \xi, S) - \gamma(t, y, S) \right] \, \\
& - \mathds 1_{\{y -\xi \ge \underline Y\}}\bar{\lambda}^a(y, S) \left[ \beta^a(y,S) + \gamma(t, y - \xi, S) - \gamma(t, y, S) \right] ,\, \gamma(t, y, S) \Bigg\} \geq \, \eta\nonumber
\end{align}
on $B:=B\left\{ (t,\tilde y, S) \left| |t-\tilde t| < \varepsilon, |S-\tilde S|<\varepsilon \right. \right\}$ for $\varepsilon>0$ small enough. Without loss of generality, we  assume that $B$ contains the sequence $(t_m,  S_m)_m$.
Then, by potentially reducing the value of $\eta$, we have $$v \leq v^* \leq \gamma - \eta$$ on the parabolic boundary $\partial_pB$ of $B$, i.e.,
\begin{align*}
    \partial_pB = &\bigg( \Big( (\tilde t- \varepsilon, \tilde t+\varepsilon) \cap [0,T) \Big) \times\{\tilde y \} \times \{\tilde{S}-\varepsilon , \,\tilde{S}+\varepsilon\} \bigg) \cup \left(\{\tilde{t}+\varepsilon\} \times \{\tilde y \} \times \overline{(\tilde S - \varepsilon, \tilde S +\varepsilon)}  \right).
\end{align*}

Without loss of generality, we can assume that \eqref{inequality_test_func_sub} holds on 
\begin{align}
    \tilde{B} = \{(t, y\pm\xi,S)~|~(t,y,S) \in B \} \nonumber
\end{align}

which is also bounded.\\

For $m\in\N$ we introduce a stopping time $\pi_m = \inf \left\{ t \geq t_m \mid (t, Y_t^{t_m, \tilde{y}}, S_t^{t_m, S_m}) \notin B \right\}$ and by It\^o's formula applied to $\gamma$ between $t_m$ and  $\pi_m \wedge \tau$ where $\tau \in \Tc_{t_m,T}$, we have
\begin{align}
&\gamma(\pi_m \wedge \tau, Y^{t_m, \tilde{y}}_{\pi_m \wedge \tau}, S^{t_m, S_m}_{\pi_m \wedge \tau}) \\
&\qquad =  \, \gamma(t_m, \tilde{y}, S_m) + \int_{t_m}^{\pi_m \wedge \tau} \frac{\partial}{\partial t}\gamma(u, Y^{t_m, \tilde{y}}_{u\shortminus}, S^{t_m, S_m}_u) + \frac{1}{2} \sigma^2 \frac{\partial^2}{\partial S^2}\gamma(u, Y^{t_m, \tilde{y}}_{u\shortminus}, S^{t_m, S_m}_u) \, \d u \, \nonumber \\
&\qquad\quad + \int_{t_m}^{\pi_m \wedge \tau} \sigma \frac{\partial}{\partial S}\gamma(u, Y^{t_m, \tilde{y}}_{u\shortminus}, S^{t_m, S_m}_u) \, \d W_u \, \nonumber \\
&\qquad\quad +\int_{t_m}^{\pi_m \wedge \tau} \left[\gamma(u, Y^{t_m, \tilde{y}}_{u\shortminus} + \xi, S^{t_m, S_m}_u) -\gamma(u, Y^{t_m, \tilde{y}}_{u\shortminus}, S^{t_m, S_m}_u)\right] \, \d N^b_u \, \nonumber \\
&\qquad\quad + \int_{t_m}^{\pi_m \wedge \tau}\left[\gamma(u, Y^{t_m, \tilde{y}}_{u\shortminus} - \xi, S^{t_m, S_m}_u) -\gamma(u, Y^{t_m, \tilde{y}}_{u\shortminus}, S^{t_m, S_m}_u)\right] \, \d N^a_u \, \nonumber,\\
&\qquad = \, \gamma(t_m, \tilde{y}, S_m)  + \int_{t_m}^{\pi_m \wedge \tau} \bigg[\frac{\partial}{\partial t} \gamma(u, Y^{t_m,\tilde{y}}_{u\shortminus}, S^{t_m, S_m}_u) + \frac{1}{2} \sigma^2 \frac{\partial^2}{\partial S^2} \gamma(u, Y^{t_m, \tilde{y}}_{u\shortminus}, S^{t_m, S_m}_u) \, \nonumber \\
&\qquad\quad  \quad + \mathds 1_{\{Y_{u\shortminus} +\xi \le \overline Y\}} \, \bar{\lambda}^b(Y^{t_m, \tilde{y}}_{u\shortminus}, S^{t_m, S_m}_u) \, \left[\gamma(u, Y^{t_m, \tilde{y}}_{u\shortminus} + \xi, S^{t_m, S_m}_u) - \gamma(u, Y^{t_m, \tilde{y}}_{u\shortminus}, S^{t_m, S_m}_u)\right] \, \nonumber \\
&\qquad\quad  \quad + \mathds 1_{\{Y_{u\shortminus} -\xi \ge \underline{Y}\}} \, \bar{\lambda}^a(Y^{t_m, \tilde{y}}_{u\shortminus}, S^{t_m, S_m}_u) \, \left[\gamma(u, Y^{t_m, \tilde{y}}_{u\shortminus} - \xi, S^{t_m, S_m}_u) - \gamma(u, Y_{u\shortminus}, S^{t_m, S_m}_u)\right] \bigg] \d u \, \nonumber \\
&\qquad\quad + \int_{t_m}^{\pi_m \wedge \tau} \left[\gamma(u, Y^{t_m, \tilde{y}}_{u\shortminus} + \xi, S^{t_m, S_m}_u) - \gamma(u, Y^{t_m, \tilde{y}}_{u\shortminus}, S^{t_m, S_m}_u)\right] \, \d\tilde{N}^b_u \,\nonumber \\
&\qquad\quad  + \int_{t_m}^{\pi_m \wedge \tau} \left[\gamma(u, Y^{t_m, \tilde{y}}_{u\shortminus} - \xi, S^{t_m, S_m}_u) - \gamma(u, Y^{t_m, \tilde{y}}_{u\shortminus}, S^{t_m, S_m}_u)\right] \, \d\tilde{N}^a_u \, \nonumber \\
&\qquad\quad   + \int_{t_m}^{\pi_m \wedge \tau} \sigma \frac{\partial}{\partial S} \gamma(u, Y^{t_m, \tilde{y}}_{u\shortminus}, S^{t_m, S_m}_u) \, \d W_u, \nonumber
\end{align}
which we  write as
\begin{align}
\gamma(\pi_m \wedge \tau, Y^{t_m, \tilde{y}}_{\pi_m \wedge \tau}, S^{t_m, S_m}_{\pi_m \wedge \tau}) & \, = \gamma(t_m, \tilde{y}, S_m) \, \nonumber \\
& \hspace{-35mm} - \int_{t_m}^{\pi_m \wedge \tau} - \bigg\{\frac{\partial}{\partial t} \gamma(u, Y^{t_m, \tilde{y}}_{u\shortminus}, S^{t_m, S_m}_u) + \frac{1}{2} \sigma^2 \frac{\partial^2}{\partial S^2} \gamma(u, Y^{t_m, \tilde{y}}_{u\shortminus}, S^{t_m, S_m}_u) \nonumber\\
&\hspace{-35mm} \,\,\, + \mathds 1_{\{Y_{u\shortminus} +\xi \le \overline Y\}} \, \bar{\lambda}^b(Y^{t_m, \tilde{y}}_{u\shortminus}, S^{t_m, S_m}_u) \left[\beta^b(Y^{t_m, \tilde{y}}_{u\shortminus},S^{t_m, S_m}_u) + \gamma(u, Y^{t_m, \tilde{y}}_{u\shortminus} + \xi, S^{t_m, S_m}_u) - \gamma(u, Y^{t_m, \tilde{y}}_{u\shortminus}, S^{t_m, S_m}_u)\right] \nonumber\\
&\hspace{-35mm} \,\,\, + \mathds 1_{\{Y_{u\shortminus} -\xi \ge \underline{Y}\}} \, \bar{\lambda}^a(Y^{t_m, \tilde{y}}_{u\shortminus}, S^{t_m, S_m}_u) \left[\beta^a(Y^{t_m, \tilde{y}}_{u\shortminus},S^{t_m, S_m}_u) + \gamma(u, Y^{t_m, \tilde{y}}_{u\shortminus} - \xi, S^{t_m, S_m}_u) - \gamma(u, Y^{t_m, \tilde{y}}_{u\shortminus}, S^{t_m, S_m}_u)\right] \bigg\} \d u \nonumber\\
&\hspace{-35mm} - \int_{t_m}^{\pi_m \wedge \tau}  \bigg\{\beta^b(Y^{t_m, \tilde{y}}_{u\shortminus},S^{t_m, S_m}_u) \bar{\lambda}^b(Y^{t_m, \tilde{y}}_{u\shortminus}, S^{t_m, S_m}_u) \, \mathds 1_{\{Y_{t\shortminus} +\xi \le \overline Y\}} \,\nonumber \\
&\hspace{-35mm} \quad \, + \beta^a(Y^{t_m, \tilde{y}}_{u\shortminus},S^{t_m, S_m}_u) \bar{\lambda}^a(Y^{t_m, \tilde{y}}_{u\shortminus}, S^{t_m, S_m}_u) \, \mathds 1_{\{Y_{t\shortminus} -\xi \ge \underline{Y}\}}\bigg\} \, \d u \nonumber\\
&\hspace{-35mm} + \int_{t_m}^{\pi_m \wedge \tau} \left[\gamma(u, Y^{t_m, \tilde{y}}_{u\shortminus} + \xi, S^{t_m, S_m}_u) - \gamma(u, Y^{t_m, \tilde{y}}_{u\shortminus}, S^{t_m, S_m}_u)\right] \, \d\tilde{N}^b_u \nonumber\\
&\hspace{-35mm} + \int_{t_m}^{\pi_m \wedge \tau} \left[\gamma(u, Y^{t_m, \tilde{y}}_{u\shortminus} - \xi, S^{t_m, S_m}_u) - \gamma(u, Y^{t_m, \tilde{y}}_{u\shortminus}, S^{t_m, S_m}_u)\right] \, \d\tilde{N}^a_u \nonumber \\
&\hspace{-35mm} + \int_{t_m}^{\pi_m \wedge \tau} \sigma \frac{\partial}{\partial S} \gamma(u, Y^{t_m, \tilde{y}}_{u\shortminus}, S^{t_m, S_m}_u) \, \d W_u. \nonumber
\end{align}

Inside the first integrand, we recognise the left part inside the minimum of the HJB QVI, which we have assumed to be positive on $B$. Moreover, the last three terms are martingales under the probability measure $\mathbb{P}$, because $\gamma$ and its first partial derivative with respect to $S$ are bounded.\\

Therefore, taking the expectation of the above quantity leads to

\begin{align}
\mathbb{E}\left[\gamma(\pi_m \wedge \tau, Y^{t_m, \tilde{y}}_{\pi_m \wedge \tau}, S^{t_m, S_m}_{\pi_m \wedge \tau})\right] \nonumber
& \, \leq \gamma(t_m, \tilde{y}, S_m)  \, \nonumber \\
& \qquad - \mathbb{E} \bigg[\int_{t_m}^{\pi_m \wedge \tau}  \bigg[\mathds 1_{\{Y_{u\shortminus} +\xi \le \overline Y\}}\beta^b(Y^{t_m, \tilde{y}}_{u\shortminus},S^{t_m, S_m}_u) \bar{\lambda}^b(Y^{t_m, \tilde{y}}_{u\shortminus}, S^{t_m, S_m}_u) \bigg. \bigg. \, \nonumber \\
& \qquad\qquad\qquad\quad\quad \quad \bigg. \bigg. + \mathds 1_{\{Y_{u\shortminus} -\xi \ge \underline{Y}\}}\beta^a(Y^{t_m, \tilde{y}}_{u\shortminus},S_u) \bar{\lambda}^a(Y^{t_m, \tilde{y}}_{u\shortminus}, S^{t_m, S_m}_u)\bigg] \, \d u \bigg]. \nonumber
\end{align}

We  also have that
\begin{align}
\gamma(t_m, \tilde{y}, S_m) \geq & \, \mathbb{E}\bigg[\gamma(\pi_m \wedge \tau, Y^{t_m, \tilde{y}}_{\pi_m \wedge \tau}, S^{t_m, S_m}_{\pi_m \wedge \tau}) \nonumber \bigg.\\
& \quad + \int_{t_m}^{\pi_m \wedge \tau}  \bigg[\mathds 1_{\{Y_{u\shortminus} +\xi \le \overline Y\}}\beta^b(Y^{t_m, \tilde{y}}_{u\shortminus},S^{t_m, S_m}_u) \bar{\lambda}^b(Y^{t_m, \tilde{y}}_{u\shortminus}, S^{t_m, S_m}_u) \bigg. \bigg. \, \nonumber \\
&\quad\quad\qquad \quad \quad \quad \bigg. \bigg. + \mathds 1_{\{Y_{u\shortminus} -\xi \ge \underline{Y}\}}\beta^a(Y^{t_m, \tilde{y}}_{u\shortminus},S_u) \bar{\lambda}^a(Y^{t_m, \tilde{y}}_{u\shortminus}, S^{t_m, S_m}_u)\bigg] \, \d u \bigg]. \nonumber
\end{align}

Given that $\gamma$ belongs to $\mathcal{C}$, then $\gamma(t_m, \tilde{y}, S_m) \xlongrightarrow[\substack{m \to \infty}]{} \gamma(\tilde{t}, \tilde{y}, \tilde{S}) = v^*(\tilde{t}, \tilde{y}, \tilde{S})$, and we also have that $v(t_m, \tilde{y}, S_m) \xlongrightarrow[\substack{m \to \infty}]{} v^{*}(\tilde{t}, \tilde{y}, \tilde{S})$. Therefore, there exists an $m$ sufficiently large such that $\gamma(t_m,\tilde{y},S_m) - v(t_m, \tilde{y}, S_m) \leq \frac{\eta}{2}$, and it follows that
\begin{align}
v(t_m, \tilde{y}, S_m) \geq & -\frac{\eta}{2} + \, \mathbb{E}\bigg[\gamma(\pi_m \wedge \tau, Y^{t_m, \tilde{y}}_{\pi_m \wedge \tau}, S^{t_m, S_m}_{\pi_m \wedge \tau}) \nonumber \bigg.\\
&\quad \quad \quad \quad + \int_{t_m}^{\pi_m \wedge \tau}  \bigg[\mathds 1_{\{Y_{u\shortminus} +\xi \le \overline Y\}}\beta^b(Y^{t_m, \tilde{y}}_{u\shortminus},S^{t_m, S_m}_u) \bar{\lambda}^b(Y^{t_m, \tilde{y}}_{u\shortminus}, S^{t_m, S_m}_u) \bigg. \bigg. \, \nonumber \\
&\quad \quad\qquad  \quad \quad \quad \quad \quad \quad \bigg. \bigg. + \mathds 1_{\{Y_{u\shortminus} -\xi \ge \underline{Y}\}}\beta^a(Y^{t_m, \tilde{y}}_{u\shortminus},S_u) \bar{\lambda}^a(Y^{t_m, \tilde{y}}_{u\shortminus}, S^{t_m, S_m}_u)\bigg] \, \d u \bigg]. \nonumber
\end{align}

Morevover,
\begin{align}
\gamma(\pi_m \wedge \tau, Y^{t_m, \tilde{y}}_{\pi_m \wedge \tau}, S^{t_m, S_m}_{\pi_m \wedge \tau}) = \underbrace{\gamma(\pi_m, Y^{t_m, \tilde{y}}_{\pi_m}, S^{t_m, S_m}_{\pi_m})}_{\geq v(\pi_m, Y^{t_m, \tilde{y}}_{\pi_m}, S^{t_m, \tilde{y}}_{\pi_m}) + \eta} \mathds{1}_{\{\pi_m < \tau\}} + \underbrace{\gamma(\tau, Y^{t_m, \tilde{y}}_{\tau}, S^{t_m, S_m}_{\tau})}_{\ge \eta}\mathds{1}_{\{\pi_m \geq \tau\}} \nonumber.
\end{align}

Putting all things together we have
\begin{align}
v(t_m, \tilde{y}, S_m) \geq & \, \, \frac{\eta}{2} + \mathbb{E}\left[\int_{t_m}^{\pi_m \wedge \tau}  \left[ \mathds{1}_{\{Y_{u\shortminus} + \xi \geq \underline{Y}\}} \, \beta^b(Y^{t_m,\tilde{y}}_{u\shortminus},S_u) \bar{\lambda}^b(Y^{t_m,\tilde{y}}_{u\shortminus}, S^{t_m,S_m}_u) \right. \right. \nonumber \\
& \quad \qquad \quad \quad \quad \quad \quad \left. + \mathds{1}_{\{Y_{u\shortminus} - \xi \leq \overline{Y}\}} \, \beta^a(Y^{t_m,\tilde{y}}_{u\shortminus}, S^{t_m,S_m}_u) \bar{\lambda}^a(Y^{t_m,\tilde{y}}_{u\shortminus}, S^{t_m,S_m}_u) \right] \d u \nonumber \\
&\hspace{7cm} + v(\pi_m, Y^{t_m, \tilde{y}}_{\pi_m}, S^{t_m, \tilde{y}}_{\pi_m}) \mathds{1}_{\{\pi_m < \tau\}} \bigg] \nonumber.
\end{align}

By taking the supremum over all the stopping time in $\Tc_{t_m,T}$ on the right-hand side, we get
\begin{align}
v(t_m, \tilde{y}, S_m) > & \, \sup_{\tau \in \Tc_{t_m,T}} \mathbb{E}\left[\int_{t_m}^{\pi_m \wedge \tau}  \left[ \mathds{1}_{\{Y_{u\shortminus} + \xi \geq \underline{Y}\}} \, \beta^b(Y^{t_m,\tilde{y}}_{u\shortminus},S^{t_m,S_m}_u) \bar{\lambda}^b(Y^{t_m,\tilde{y}}_{u\shortminus}, S_u) \right. \right. \nonumber \\
& \quad \quad\qquad\qquad  \quad \quad \quad \left. + \mathds{1}_{\{Y_{u\shortminus} - \xi \leq \overline{Y}\}} \, \beta^a(Y^{t_m,\tilde{y}}_{u\shortminus}, S^{t_m,S_m}_u) \bar{\lambda}^a(Y^{t_m,\tilde{y}}_{u\shortminus}, S^{t_m,S_m}_u) \right] \d u \nonumber \\
&\hspace{7cm}  + v(\pi_m, Y^{t_m, \tilde{y}}_{\pi_m}, S^{t_m, \tilde{y}}_{\pi_m}) \mathds{1}_{\{\pi_m < \tau\}} \bigg] \nonumber,
\end{align}
which contradicts the dynamics programming principle.\\

In conclusion, we necessarily have
\begin{align}
\min \Bigg\{ 
&- \frac{\partial}{\partial t} \gamma(t, y, S) - \frac{1}{2} \sigma^2 \frac{\partial^2}{\partial S^2} \gamma(t, y, S) \, \nonumber \\
& - \mathds 1_{\{y +\xi \le \overline Y\}}\bar{\lambda}^b(y, S) \left[ \beta^b(y,S) + \gamma(t, y + \xi, S) - \gamma(t, y, S) \right] \, \nonumber \\
& - \mathds 1_{\{y -\xi \ge \underline Y\}} \bar{\lambda}^a(y, S) \left[ \beta^a(y,S) + \gamma(t, y - \xi, S) - \gamma(t, y, S) \right] ,\, \gamma(t, y, S)  \Bigg\} \leq 0 \nonumber,
\end{align}
and $v$ is a viscosity subsolution to the HJB QVI on $[0,T) \times Q \times \mathbb{R}$.
\end{proof}

\begin{proposition}\label{v_viscosity_supersol}
The function $v$ in \eqref{val_func_risk_neutral} is a viscosity supersolution to the HJB QVI \eqref{eq:hjb_qvi} on $[0,T) \times Q \times \mathbb{R}$.
\end{proposition}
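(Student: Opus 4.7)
The plan is to mirror the subsolution argument of Proposition~\ref{v_viscosity_subsol}, now working with the lower semicontinuous envelope $v_*$ and the one-sided side of the dynamic programming principle of Lemma~\ref{lemma_dpp}. Let $(\tilde t, \tilde y, \tilde S) \in [0,T) \times Q \times \mathbb{R}$ and $\gamma \in \mathcal{C}$ with $(v_* - \gamma)(\tilde t, \tilde y, \tilde S) = 0 = \min_{[0,T) \times Q \times \mathbb R}(v_* - \gamma)$, the minimum being strict without loss of generality. Since $v \geq 0$ by Proposition~\ref{v_bounded}, the envelope inherits $v_* \geq 0$, so $\gamma(\tilde t, \tilde y, \tilde S) = v_*(\tilde t, \tilde y, \tilde S) \geq 0$: the obstacle branch of the HJB QVI minimum holds automatically at the test point, and the whole task reduces to showing that the differential-jump branch is nonnegative as well.

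I would argue by contradiction: assume that branch is strictly less than $-2\eta$ for some $\eta > 0$. By continuity of $\gamma, \partial_t\gamma, \partial_S \gamma, \partial_{SS}\gamma$ and of the coefficients $\bar\lambda^{a,b}(\tilde y, \cdot), \beta^{a,b}(\tilde y, \cdot)$ in $S$, the same inequality with slack $-\eta$ holds on a parabolic ball $B$ of the form used in the subsolution proof, and by strictness of the minimum (shrinking $\eta$ if necessary) one gets $v_* \geq \gamma + \eta$ on the parabolic boundary $\partial_p B$ and also at the shifted states $(t, \tilde y \pm \xi, S)$ where $Y$ lands after a jump. Choose a sequence $(t_m, S_m) \to (\tilde t, \tilde S)$ with $v(t_m, \tilde y, S_m) \to v_*(\tilde t, \tilde y, \tilde S)$, eventually inside $B$, and define $\pi_m$ as the first exit time of $(u, Y_u^{t_m, \tilde y}, S_u^{t_m, S_m})$ from $B$. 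Applying Lemma~\ref{lemma_dpp} with $\tau \equiv T$ and $\theta = \pi_m$ yields the one-sided inequality
\begin{equation*}
v(t_m, \tilde y, S_m) \geq \mathbb E\!\left[\int_{t_m}^{\pi_m} \!\big\{\mathds 1_{\{Y_{u-}+\xi \leq \overline Y\}}\beta^b \bar\lambda^b + \mathds 1_{\{Y_{u-} - \xi \geq \underline Y\}}\beta^a \bar\lambda^a\big\}\,\d u + v\big(\pi_m, Y_{\pi_m}^{t_m, \tilde y}, S_{\pi_m}^{t_m, S_m}\big)\right].
\end{equation*}

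Next, It\^o's formula applied to $\gamma$ between $t_m$ and $\pi_m$ (the Brownian and compensated-Poisson martingales vanish in expectation thanks to the boundedness of $\gamma$ and $\partial_S \gamma$) rewrites the drift of $\gamma$ in terms of the differential-jump operator plus the running reward density; combining this identity with the DPP inequality above and the boundary control $v \geq v_* \geq \gamma + \eta$ on $\partial_p B$ leads to
\begin{equation*}
\gamma(t_m, \tilde y, S_m) \leq v(t_m, \tilde y, S_m) - \eta - \eta\, \mathbb E[\pi_m - t_m].
\end{equation*}
Letting $m \to \infty$, both $\gamma(t_m, \tilde y, S_m)$ and $v(t_m, \tilde y, S_m)$ converge to $v_*(\tilde t, \tilde y, \tilde S)$, which produces $0 \leq -\eta$ and the desired contradiction.

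The main subtlety is the treatment of the jumps: since $Q$ is discrete and $Y$ moves by $\pm\xi$, any jump instantaneously leaves the slice $\{y = \tilde y\}$, so $\pi_m$ must be constructed to register jump-exits as well as continuous exits in $(t, S)$, and the boundary inequality $v_* \geq \gamma + \eta$ has to be propagated uniformly to the neighbouring slices $\{\tilde y \pm \xi\}$. The uniform slack $\eta$ on these shifted slices is obtained, as on $\partial_p B$ itself, from the strict global minimum of $v_* - \gamma$ at $(\tilde t, \tilde y, \tilde S)$ together with lower semicontinuity of $v_* - \gamma$ and compactness in $S$; once this is set up, the remaining steps are a direct analogue of the subsolution computation.
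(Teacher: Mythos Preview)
Your proof is correct and follows essentially the same strategy as the paper: argue by contradiction on the differential-jump branch (the obstacle branch being trivially satisfied since $v_*\ge 0$), localise to a parabolic ball $B$, apply It\^o to the test function $\gamma$, and close via the dynamic programming principle.

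The assembly differs slightly. The paper introduces an auxiliary stopping time $\tau = \inf\{t\ge t_m : \gamma(t, Y_t, S_t) \le -\eta\}\wedge T$, applies It\^o up to $\pi_m\wedge\tau$, and eventually produces a strict inequality $v(t_m,\tilde y, S_m) < \sup_{\tau}\mathbb E[\cdots]$ that contradicts the DPP \emph{equality} in Lemma~\ref{lemma_dpp}. You instead plug $\tau\equiv T$ into the DPP to extract the one-sided bound $v(t_m,\tilde y,S_m)\ge \mathbb E[\int_{t_m}^{\pi_m}\cdots + v(\pi_m,\cdots)]$, combine it with It\^o on $[t_m,\pi_m]$, and reach the numerical contradiction $0\le -\eta$ in the limit. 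Your route is a touch more streamlined since it avoids the auxiliary $\tau$; the paper's route keeps the sub- and supersolution proofs formally symmetric. Your explicit discussion of the jump-exit case (propagating $v_*-\gamma\ge\eta$ to the neighbouring slices $\{\tilde y\pm\xi\}$ via the strict global minimum and compactness) is a welcome clarification that the paper leaves mostly implicit.
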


\begin{proof}
Using Proposition~\ref{v_bounded}, $v$ is locally bounded on $ [0,T) \times Q \times \mathbb{R}$, so we define $v_*$ its lower semicontinuous envelope.\\

Let $(\tilde{t},\tilde{y},\tilde{S}) \in [0,T) \times Q \times \mathbb{R}$ and $\gamma \in \mathcal{C}$ such that $0 = (v_*-\gamma)(\tilde{t},\tilde{y},\tilde{S}) = \min_{(t,y,S) \in [0,T) \times Q \times \mathbb{R}}(v_*-\gamma)(t,y,S)$ and  assume the minimum to be strict. By definition of $v_*(\tilde{t}, \tilde{y}, \tilde{S})$, their exists $(t_m,  S_m)_m$ a sequence of $[0,T)  \times \mathbb{R}$ such that

\begin{align}
(t_m,  S_m) \xlongrightarrow[\substack{m \to \infty}]{} (\tilde{t},\tilde{S}), \, \nonumber \\
v(t_m, \tilde{y}, S_m) \xlongrightarrow[\substack{m \to \infty}]{} v_{*}(\tilde{t}, \tilde{y}, \tilde{S}) \nonumber.
\end{align}

We prove the result by contradiction. Assume there exists $\eta > 0$ such that
\begin{align} \label{inequality_test_func_super}
\min \Bigg\{ 
&-\frac{\partial}{\partial t} \gamma(\tilde{t}, \tilde{y}, \tilde{S}) - \frac{1}{2} \sigma^2 \frac{\partial^2}{\partial S^2} \gamma(\tilde{t}, \tilde{y}, \tilde{S}) \nonumber \, \\
& - \mathds 1_{\{\tilde y +\xi \le \overline Y\}}\bar{\lambda}^b(\tilde{y}, \tilde{S}) \left[ \beta^b(\tilde{y},\tilde{S}) + \gamma(\tilde{t}, \tilde{y} + \xi, \tilde{S}) - \gamma(\tilde{t}, \tilde{y}, \tilde{S}) \right] \, \\
& -\mathds 1_{\{\tilde y -\xi \le \underline Y\}} \bar{\lambda}^a(\tilde{y}, \tilde{S}) \left[ \beta^a(\tilde{y},\tilde{S}) + \gamma(\tilde{t}, \tilde{y} - \xi, \tilde{S}) - \gamma(\tilde{t}, \tilde{y}, \tilde{S}) \right], \, \gamma(\tilde{t}, \tilde{y}, \tilde{S}) \Bigg\} < -2\eta \nonumber
\end{align}

Since $v$ is non-negative, $v_*$ is also non-negative. This implies that $\gamma(\tilde{t}, \tilde{y}, \tilde{S})$ should be non-negative to satisfy $(v_*-\gamma)(\tilde{t},\tilde{y},\tilde{S})=0$. Therefore,  \eqref{inequality_test_func_super} reduces to
\begin{align}
 -\frac{\partial}{\partial t} \gamma(\tilde{t}, \tilde{y}, \tilde{S}) - \frac{1}{2} \sigma^2 \frac{\partial^2}{\partial S^2} \gamma(\tilde{t}, \tilde{y}, \tilde{S}) & - \mathds 1_{\{\tilde y +\xi \le \overline Y\}}\bar{\lambda}^b(\tilde{y}, \tilde{S}) \left[ \beta^b(\tilde{y},\tilde{S}) + \gamma(\tilde{t}, \tilde{y} + \xi, \tilde{S}) - \gamma(\tilde{t}, \tilde{y}, \tilde{S}) \right]\nonumber \, \\
&  -  \mathds 1_{\{\tilde y -\xi \ge \underline Y\}}\bar{\lambda}^a(\tilde{y}, \tilde{S}) \left[ \beta^a(\tilde{y},\tilde{S}) + \gamma(\tilde{t}, \tilde{y} - \xi, \tilde{S}) - \gamma(\tilde{t}, \tilde{y}, \tilde{S}) \right] \, < -2\eta. \nonumber 
\end{align}

Then, as $\gamma$ belongs to $\mathcal{C}$, we have
\begin{align}\label{inequality_test_func_super_reduced}
 -\frac{\partial}{\partial t} \gamma(t, \tilde{y}, S) - \frac{1}{2} \sigma^2 \frac{\partial^2}{\partial S^2} \gamma(t, \tilde{y}, S) 
& - \mathds 1_{\{\tilde y +\xi \le \overline Y\}}\bar{\lambda}^b(\tilde{y}, S) \left[ \beta^b(\tilde{y},S) + \gamma(t, \tilde{y} + \xi, S) - \gamma(t, \tilde{y}, S) \right] \, \\
& - \mathds 1_{\{\tilde y -\xi \ge \underline Y\}}\bar{\lambda}^a(\tilde{y}, S) \left[ \beta^a(\tilde{y},S) + \gamma(t, \tilde{y} - \xi, S) - \gamma(t, \tilde{y}, S) \right] \, \leq \, -\eta \nonumber
\end{align}
on $B:=B\left\{ (t,\tilde y, S) \left| |t-\tilde t| < \varepsilon, |S-\tilde S|<\varepsilon \right. \right\}$ for $\varepsilon>0$ small enough. Without loss of generality, we  assume that $B$ contains the sequence $(t_m,  S_m)_m$.\\

Then, as the minimum of $v_*-\gamma$ is assumed to be strict at $(\tilde{t}, \tilde{y}, \tilde{S})$,  by potentially reducing the value of $\eta$, we have $$v \geq v_* \geq \gamma + \eta$$ on the parabolic boundary $\partial_pB$ of $B$, i.e.,
\begin{align*}
    \partial_pB = &\bigg( \Big( (\tilde t- \varepsilon, \tilde t+\varepsilon) \cap [0,T) \Big) \times\{\tilde y \} \times \{\tilde{S}-\varepsilon , \,\tilde{S}+\varepsilon\} \bigg) \cup \left(\{\tilde{t}+\varepsilon\} \times \{\tilde y \} \times \overline{(\tilde S - \varepsilon, \tilde S +\varepsilon)}  \right).
\end{align*}
Without loss of generality, we  assume that \eqref{inequality_test_func_super_reduced} holds on 
\begin{align}
    \tilde{B} = \{(t, y\pm\xi,S)~|~(t,y,S) \in B \} \nonumber
\end{align}
which is also bounded.\\

For $m\in\N$ we introduce a stopping time $\pi_m = \inf \left\{ t \geq t_m \mid (t, Y_t^{t_m, \tilde{y}}, S_t^{t_m, S_m}) \notin B \right\}$ and by It\^o's formula applied to $\gamma$ between $t_m$ and  $\pi_m \wedge \tau$ where $\tau = \inf \left\{ t \geq t_m \mid \gamma(t, Y_t^{t_m, \tilde{y}}, S_t^{t_m, S_m}) \leq -\eta \right\}\wedge T \in \Tc_{t_m,T}$, we have
\begin{align}
&\gamma(\pi_m \wedge \tau, Y^{t_m, \tilde{y}}_{\pi_m \wedge \tau}, S^{t_m, S_m}_{\pi_m \wedge \tau}) \\
&\qquad =  \, \gamma(t_m, \tilde{y}, S_m) + \int_{t_m}^{\pi_m \wedge \tau} \frac{\partial}{\partial t}\gamma(u, Y^{t_m, \tilde{y}}_{u\shortminus}, S^{t_m, S_m}_u) + \frac{1}{2} \sigma^2 \frac{\partial^2}{\partial S^2}\gamma(u, Y^{t_m, \tilde{y}}_{u\shortminus}, S^{t_m, S_m}_u) \, \d u \, \nonumber \\
&\qquad\quad + \int_{t_m}^{\pi_m \wedge \tau} \sigma \frac{\partial}{\partial S}\gamma(u, Y^{t_m, \tilde{y}}_{u\shortminus}, S^{t_m, S_m}_u) \, \d W_u \, \nonumber \\
&\qquad\quad +\int_{t_m}^{\pi_m \wedge \tau} \left[\gamma(u, Y^{t_m, \tilde{y}}_{u\shortminus} + \xi, S^{t_m, S_m}_u) -\gamma(u, Y^{t_m, \tilde{y}}_{u\shortminus}, S^{t_m, S_m}_u)\right] \, \d N^b_u \, \nonumber \\
&\qquad\quad + \int_{t_m}^{\pi_m \wedge \tau}\left[\gamma(u, Y^{t_m, \tilde{y}}_{u\shortminus} - \xi, S^{t_m, S_m}_u) -\gamma(u, Y^{t_m, \tilde{y}}_{u\shortminus}, S^{t_m, S_m}_u)\right] \, \d N^a_u \, \nonumber,\\
&\qquad = \, \gamma(t_m, \tilde{y}, S_m)  + \int_{t_m}^{\pi_m \wedge \tau} \bigg[\frac{\partial}{\partial t} \gamma(u, Y^{t_m,\tilde{y}}_{u\shortminus}, S^{t_m, S_m}_u) + \frac{1}{2} \sigma^2 \frac{\partial^2}{\partial S^2} \gamma(u, Y^{t_m, \tilde{y}}_{u\shortminus}, S^{t_m, S_m}_u) \, \nonumber \\
&\qquad\quad  \quad + \mathds 1_{\{Y_{u\shortminus} +\xi \le \overline Y\}} \, \bar{\lambda}^b(Y^{t_m, \tilde{y}}_{u\shortminus}, S^{t_m, S_m}_u) \, \left[\gamma(u, Y^{t_m, \tilde{y}}_{u\shortminus} + \xi, S^{t_m, S_m}_u) - \gamma(u, Y^{t_m, \tilde{y}}_{u\shortminus}, S^{t_m, S_m}_u)\right] \, \nonumber \\
&\qquad\quad  \quad + \mathds 1_{\{Y_{u\shortminus} -\xi \ge \underline{Y}\}} \, \bar{\lambda}^a(Y^{t_m, \tilde{y}}_{u\shortminus}, S^{t_m, S_m}_u) \, \left[\gamma(u, Y^{t_m, \tilde{y}}_{u\shortminus} - \xi, S^{t_m, S_m}_u) - \gamma(u, Y_{u\shortminus}, S^{t_m, S_m}_u)\right] \bigg] \d u \, \nonumber \\
&\qquad\quad + \int_{t_m}^{\pi_m \wedge \tau} \left[\gamma(u, Y^{t_m, \tilde{y}}_{u\shortminus} + \xi, S^{t_m, S_m}_u) - \gamma(u, Y^{t_m, \tilde{y}}_{u\shortminus}, S^{t_m, S_m}_u)\right] \, \d\tilde{N}^b_u \,\nonumber \\
&\qquad\quad  + \int_{t_m}^{\pi_m \wedge \tau} \left[\gamma(u, Y^{t_m, \tilde{y}}_{u\shortminus} - \xi, S^{t_m, S_m}_u) - \gamma(u, Y^{t_m, \tilde{y}}_{u\shortminus}, S^{t_m, S_m}_u)\right] \, \d\tilde{N}^a_u \, \nonumber \\
&\qquad\quad   + \int_{t_m}^{\pi_m \wedge \tau} \sigma \frac{\partial}{\partial S} \gamma(u, Y^{t_m, \tilde{y}}_{u\shortminus}, S^{t_m, S_m}_u) \, \d W_u, \nonumber
\end{align}
which we write as
\begin{align}
\gamma(\pi_m \wedge \tau, Y^{t_m, \tilde{y}}_{\pi_m \wedge \tau}, S^{t_m, S_m}_{\pi_m \wedge \tau}) & \, = \gamma(t_m, \tilde{y}, S_m) \, \nonumber \\
& \hspace{-35mm} - \int_{t_m}^{\pi_m \wedge \tau} - \bigg\{\frac{\partial}{\partial t} \gamma(u, Y^{t_m, \tilde{y}}_{u\shortminus}, S^{t_m, S_m}_u) + \frac{1}{2} \sigma^2 \frac{\partial^2}{\partial S^2} \gamma(u, Y^{t_m, \tilde{y}}_{u\shortminus}, S^{t_m, S_m}_u) \nonumber\\
&\hspace{-35mm} \,\,\, + \mathds 1_{\{Y_{u\shortminus} +\xi \le \overline Y\}} \, \bar{\lambda}^b(Y^{t_m, \tilde{y}}_{u\shortminus}, S^{t_m, S_m}_u) \left[\beta^b(Y^{t_m, \tilde{y}}_{u\shortminus},S^{t_m, S_m}_u) + \gamma(u, Y^{t_m, \tilde{y}}_{u\shortminus} + \xi, S^{t_m, S_m}_u) - \gamma(u, Y^{t_m, \tilde{y}}_{u\shortminus}, S^{t_m, S_m}_u)\right] \nonumber\\
&\hspace{-35mm} \,\,\, + \mathds 1_{\{Y_{u\shortminus} -\xi \ge \underline{Y}\}} \, \bar{\lambda}^a(Y^{t_m, \tilde{y}}_{u\shortminus}, S^{t_m, S_m}_u) \left[\beta^a(Y^{t_m, \tilde{y}}_{u\shortminus},S^{t_m, S_m}_u) + \gamma(u, Y^{t_m, \tilde{y}}_{u\shortminus} - \xi, S^{t_m, S_m}_u) - \gamma(u, Y^{t_m, \tilde{y}}_{u\shortminus}, S^{t_m, S_m}_u)\right] \bigg\} \d u \nonumber\\
&\hspace{-35mm} - \int_{t_m}^{\pi_m \wedge \tau}  \bigg\{\beta^b(Y^{t_m, \tilde{y}}_{u\shortminus},S^{t_m, S_m}_u) \bar{\lambda}^b(Y^{t_m, \tilde{y}}_{u\shortminus}, S^{t_m, S_m}_u) \, \mathds 1_{\{Y_{t\shortminus} +\xi \le \overline Y\}} \,\nonumber \\
&\hspace{-35mm} \quad \, + \beta^a(Y^{t_m, \tilde{y}}_{u\shortminus},S^{t_m, S_m}_u) \bar{\lambda}^a(Y^{t_m, \tilde{y}}_{u\shortminus}, S^{t_m, S_m}_u) \, \mathds 1_{\{Y_{t\shortminus} -\xi \ge \underline{Y}\}}\bigg\} \, \d u \nonumber\\
&\hspace{-35mm} + \int_{t_m}^{\pi_m \wedge \tau} \left[\gamma(u, Y^{t_m, \tilde{y}}_{u\shortminus} + \xi, S^{t_m, S_m}_u) - \gamma(u, Y^{t_m, \tilde{y}}_{u\shortminus}, S^{t_m, S_m}_u)\right] \, \d\tilde{N}^b_u \nonumber\\
&\hspace{-35mm} + \int_{t_m}^{\pi_m \wedge \tau} \left[\gamma(u, Y^{t_m, \tilde{y}}_{u\shortminus} - \xi, S^{t_m, S_m}_u) - \gamma(u, Y^{t_m, \tilde{y}}_{u\shortminus}, S^{t_m, S_m}_u)\right] \, \d\tilde{N}^a_u \nonumber \\
&\hspace{-35mm} + \int_{t_m}^{\pi_m \wedge \tau} \sigma \frac{\partial}{\partial S} \gamma(u, Y^{t_m, \tilde{y}}_{u\shortminus}, S^{t_m, S_m}_u) \, \d W_u. \nonumber
\end{align}

Inside the first integrand, we recognise the left part inside the minimum of the HJB QVI, which we have assumed to be negative on $B$. Moreover, the last three terms are martingales under the probability measure $\mathbb{P}$, because $\gamma$ and its first partial derivative with respect to $S$ are bounded.\\

Therefore, taking the expectation of the above quantity leads to
\begin{align}
\mathbb{E}\left[\gamma(\pi_m \wedge \tau, Y^{t_m, \tilde{y}}_{\pi_m \wedge \tau}, S^{t_m, S_m}_{\pi_m \wedge \tau})\right] \nonumber
& \, \geq \gamma(t_m, \tilde{y}, S_m)  \, \nonumber \\
& \qquad - \mathbb{E} \bigg[\int_{t_m}^{\pi_m \wedge \tau}  \bigg[\mathds 1_{\{Y_{u\shortminus} +\xi \le \overline Y\}}\beta^b(Y^{t_m, \tilde{y}}_{u\shortminus},S^{t_m, S_m}_u) \bar{\lambda}^b(Y^{t_m, \tilde{y}}_{u\shortminus}, S^{t_m, S_m}_u) \bigg. \bigg. \, \nonumber \\
&\qquad \qquad\qquad \quad \quad \quad \bigg. \bigg. + \mathds 1_{\{Y_{u\shortminus} -\xi \ge \underline{Y}\}}\beta^a(Y^{t_m, \tilde{y}}_{u\shortminus},S^{t_m, S_m}_u) \bar{\lambda}^a(Y^{t_m, \tilde{y}}_{u\shortminus}, S^{t_m, S_m}_u)\bigg] \, \d u \bigg]. \nonumber
\end{align}

We write the above  as
\begin{align}
\gamma(t_m, \tilde{y}, S_m) \leq & \, \, \mathbb{E}\bigg[\gamma(\pi_m \wedge \tau, Y^{t_m, \tilde{y}}_{\pi_m \wedge \tau}, S^{t_m, S_m}_{\pi_m \wedge \tau}) \nonumber \bigg.\\
& \quad + \int_{t_m}^{\pi_m \wedge \tau}  \bigg[\mathds 1_{\{Y_{u\shortminus} +\xi \le \overline Y\}}\beta^b(Y^{t_m, \tilde{y}}_{u\shortminus},S^{t_m, S_m}_u) \bar{\lambda}^b(Y^{t_m, \tilde{y}}_{u\shortminus}, S^{t_m, S_m}_u) \bigg. \bigg. \, \nonumber \\
&\quad \qquad\quad \quad \quad \quad \bigg. \bigg. + \mathds 1_{\{Y_{u\shortminus} -\xi \ge \underline{Y}\}}\beta^a(Y^{t_m, \tilde{y}}_{u\shortminus},S^{t_m, S_m}_u) \bar{\lambda}^a(Y^{t_m, \tilde{y}}_{u\shortminus}, S^{t_m, S_m}_u)\bigg] \, \d u \bigg]. \nonumber
\end{align}

Given that $\gamma$ belongs to $\mathcal{C}$, then $\gamma(t_m, \tilde{y}, S_m) \xlongrightarrow[\substack{m \to \infty}]{} \gamma(\tilde{t}, \tilde{y}, \tilde{S}) = v_*(\tilde{t}, \tilde{y}, \tilde{S})$, and we also have that $v(t_m, \tilde{y}, S_m) \xlongrightarrow[\substack{m \to \infty}]{} v_{*}(\tilde{t}, \tilde{y}, \tilde{S})$. Therefore, there exists an $m$ sufficiently large such that $v(t_m, \tilde{y}, S_m) - \gamma(t_m,\tilde{y},S_m) \leq \frac{\eta}{2}$. It follows that 
\begin{align}
v(t_m, \tilde{y}, S_m) \leq & \, \frac{\eta}{2} + \, \mathbb{E}\bigg[\gamma(\pi_m \wedge \tau, Y^{t_m, \tilde{y}}_{\pi_m \wedge \tau}, S^{t_m, S_m}_{\pi_m \wedge \tau}) \nonumber \bigg.\\
&\quad \quad \quad + \int_{t_m}^{\pi_m \wedge \tau}  \bigg[\mathds 1_{\{Y_{u\shortminus} +\xi \le \overline Y\}}\beta^b(Y^{t_m, \tilde{y}}_{u\shortminus},S^{t_m, S_m}_u) \bar{\lambda}^b(Y^{t_m, \tilde{y}}_{u\shortminus}, S^{t_m, S_m}_u) \bigg. \bigg. \, \nonumber \\
&\quad \qquad \quad \quad \quad \quad \quad \quad \bigg. \bigg. + \mathds 1_{\{Y_{u\shortminus} -\xi \ge \underline{Y}\}}\beta^a(Y^{t_m, \tilde{y}}_{u\shortminus},S^{t_m, S_m}_u) \bar{\lambda}^a(Y^{t_m, \tilde{y}}_{u\shortminus}, S^{t_m, S_m}_u)\bigg] \, \d u \bigg], \nonumber
\end{align}
moreover,
\begin{align}
\gamma(\pi_m \wedge \tau, Y^{t_m, \tilde{y}}_{\pi_m \wedge \tau}, S^{t_m, S_m}_{\pi_m \wedge \tau}) = \underbrace{\gamma(\pi_m, Y^{t_m, \tilde{y}}_{\pi_m}, S^{t_m, S_m}_{\pi_m})}_{\leq v(\pi_m, Y^{t_m, \tilde{y}}_{\pi_m}, S^{t_m, \tilde{y}}_{\pi_m}) - \eta} \mathds{1}_{\{\pi_m < \tau\}} + \underbrace{\gamma(\tau, Y^{t_m, \tilde{y}}_{\tau}, S^{t_m, S_m}_{\tau})}_{\leq -\eta}\mathds{1}_{\{\pi_m \geq \tau\}} \nonumber.
\end{align}

Combining the above inequalities we have
\begin{align}
v(t_m, \tilde{y}, S_m) \leq & \, -\frac{\eta}{2} + \mathbb{E}\bigg[\int_{t_m}^{\pi_m \wedge \tau}  \Big[ \mathds{1}_{\{Y_{u\shortminus} + \xi \geq \underline{Y}\}} \, \beta^b(Y^{t_m,\tilde{y}}_{u\shortminus},S_u) \bar{\lambda}^b(Y^{t_m,\tilde{y}}_{u\shortminus}, S^{t_m,S_m}_u) \nonumber \\
&\qquad  \quad \quad \quad \quad \quad \quad \quad  + \mathds{1}_{\{Y_{u\shortminus} - \xi \leq \overline{Y}\}} \, \beta^a(Y^{t_m,\tilde{y}}_{u\shortminus}, S^{t_m,S_m}_u) \bar{\lambda}^a(Y^{t_m,\tilde{y}}_{u\shortminus}, S^{t_m,S_m}_u) \Big] \d u \nonumber \\
& \hspace{7cm} + v(\pi_m, Y^{t_m, \tilde{y}}_{\pi_m}, S^{t_m, \tilde{y}}_{\pi_m}) \mathds{1}_{\{\pi_m < \tau\}} \bigg] \nonumber.
\end{align}

By taking the supremum over all the stopping time in $\Tc_{t_m,T}$ on the right-hand side, we get
\begin{align}
v(t_m, \tilde{y}, S_m) < & \, \sup_{\tau \in \Tc_{t_m,T}} \mathbb{E}\bigg[\int_{t_m}^{\pi_m \wedge \tau}  \Big[ \mathds{1}_{\{Y_{u\shortminus} + \xi \geq \underline{Y}\}} \, \beta^b(Y^{t_m,\tilde{y}}_{u\shortminus},S^{t_m,S_m}_u) \bar{\lambda}^b(Y^{t_m,\tilde{y}}_{u\shortminus}, S_u)  \nonumber \\
& \quad\qquad \quad \quad \quad \quad \quad \quad  + \mathds{1}_{\{Y_{u\shortminus} - \xi \leq \overline{Y}\}} \, \beta^a(Y^{t_m,\tilde{y}}_{u\shortminus}, S^{t_m,S_m}_u) \bar{\lambda}^a(Y^{t_m,\tilde{y}}_{u\shortminus}, S^{t_m,S_m}_u) \Big] \d u \nonumber \\
&  \hspace{7cm} + v(\pi_m, Y^{t_m, \tilde{y}}_{\pi_m}, S^{t_m, \tilde{y}}_{\pi_m}) \mathds{1}_{\{\pi_m < \tau\}} \bigg] \nonumber,
\end{align}
which contradicts the dynamics programming principle.\\

In conclusion, we necessarily have that
\begin{align}
\min \Bigg\{ 
&- \frac{\partial}{\partial t} \gamma(t, y, S) - \frac{1}{2} \sigma^2 \frac{\partial^2}{\partial S^2} \gamma(t, y, S) \, \nonumber \\
& - \mathds 1_{\{ y +\xi \le \overline Y\}}\bar{\lambda}^b(y, S) \left[ \beta^b(y,S) + \gamma(t, y + \xi, S) - \gamma(t, y, S) \right] \, \nonumber \\
& - \mathds 1_{\{ y -\xi \ge \underline Y\}}\bar{\lambda}^a(y, S) \left[ \beta^a(y,S) + \gamma(t, y - \xi, S) - \gamma(t, y, S) \right] ,\, \gamma(t, y, S) \Bigg\} \geq 0 \nonumber,
\end{align}
and it follows that  $v$ is a viscosity supersolution to the HJB QVI on $[0,T) \times Q \times \mathbb{R}$.
\end{proof}

\begin{proposition}\label{prop:terminal}
    For all $(y,S) \in Q \times \mathbb R$, we have $v_*(T,y,S) = v^*(T,y,S) = 0.$
\end{proposition}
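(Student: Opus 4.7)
The target is an equality of two envelopes, so I split it into two inequalities: $v_*(T,y,S)\ge 0$ and $v^*(T,y,S)\le 0$, which together with the obvious ordering $v_*\le v^*$ yield the result. The first bound is essentially free: since $\tau=t$ is always admissible, the value function is non-negative on $\mfT\times Q\times\mathbb R$ (this was already observed in Proposition~\ref{v_bounded}), hence its lower semicontinuous envelope is non-negative everywhere, in particular at $t=T$.

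\textbf{The nontrivial direction.} For $v^*(T,y,S)\le 0$, I would take any sequence $(t_m,S_m)\to(T,S)$ with $t_m<T$ and control $v(t_m,y,S_m)$ uniformly. For any $\tau\in\Tc_{t_m,T}$, the reward admits the pointwise bound
\begin{align*}
\beta^b(Y^{t_m,y}_{u\shortminus},S^{t_m,S_m}_u)\bar\lambda^b(Y^{t_m,y}_{u\shortminus},S^{t_m,S_m}_u)&\mathds 1_{\{\cdot\}} +\beta^a(Y^{t_m,y}_{u\shortminus},S^{t_m,S_m}_u)\bar\lambda^a(Y^{t_m,y}_{u\shortminus},S^{t_m,S_m}_u)\mathds 1_{\{\cdot\}} \\
&\le C_0+C_1|S^{t_m,S_m}_u|+C_2|S^{t_m,S_m}_u|^2,
\end{align*}
which is exactly the estimate already derived in the proof of Proposition~\ref{v_bounded}. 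Replacing $\tau$ by $T$ (which only enlarges the integrand since it is non-negative) gives
\[
0\le v(t_m,y,S_m)\le \mathbb E\!\left[\int_{t_m}^{T}\bigl(C_0+C_1|S^{t_m,S_m}_u|+C_2|S^{t_m,S_m}_u|^2\bigr)\,\mathrm du\right].
\]
Using the Brownian dynamics of $S$ and Doob's inequality as in Proposition~\ref{v_bounded} to bound $\mathbb E[\sup_{u\in[t_m,T]}|S^{t_m,S_m}_u|^k]$ uniformly in $m$ (recall $S_m\to S$, so the initial values lie in a bounded set), the right-hand side is dominated by $(T-t_m)\,K$ for a constant $K$ depending only on $|S|$, $\sigma$, $T$ and the model parameters.

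\textbf{Conclusion.} Letting $m\to\infty$, we obtain $\limsup_m v(t_m,y,S_m)\le 0$. Taking the supremum over all such sequences yields $v^*(T,y,S)\le 0$. Combined with $v_*(T,y,S)\ge 0$ and $v_*\le v^*$, this forces $v_*(T,y,S)=v^*(T,y,S)=0$. I do not anticipate any real obstacle: the only subtle point is to make sure the bound on the integrand is uniform enough in $m$ to allow passing the limit inside the expectation, and this is handled precisely by the moment estimates already established for $\sup_{u\in[t_m,T]}|S^{t_m,S_m}_u|^k$, $k=1,2$, in Proposition~\ref{v_bounded}.
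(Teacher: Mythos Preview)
Your proposal is correct and follows essentially the same route as the paper: non-negativity of $v$ handles the lower envelope, and for $v^*(T,y,S)\le 0$ you bound the integrand by the quadratic-in-$S$ majorant from Proposition~\ref{v_bounded} and use that the integration window $[t_m,T]$ shrinks to zero. The paper phrases the second step via $\varepsilon$-optimal stopping times $\tau_m$, whereas you simply dominate the reward integrand by a non-negative function and extend the integral to $T$; your version is a mild streamlining of the same estimate. One wording nit: ``only enlarges the integrand'' should read ``only enlarges the integral''---what is non-negative is the majorant $C_0+C_1|S|+C_2|S|^2$, not the original integrand $\beta^b\bar\lambda^b+\beta^a\bar\lambda^a$, and that is precisely why you must pass to the majorant before extending $\tau$ to $T$.
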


\begin{proof}
By Proposition \ref{v_bounded}, we know that $v$ is bounded from below by $0$. Therefore, it is also the case for $v_*$ and $v^*$. In particular, we have $v_*(T,y,S) \ge 0$ and $v^*(T,y,S) \ge 0$ for all $(y,S)\in Q \times \mathbb R$. \\

Moreover, by definition, we have $v_* \le v \le v^*$ on $\mfT\times Q \times \mathbb R$. In particular, since $v(T,y,S) = 0$ for all $(y,S)\in Q \times \mathbb R$, we have $v_*(T,y,S) \le 0$ for all $(y,S)\in Q \times \mathbb R$, and we can conclude that $v_*(T,y,S) = 0$.\\

Finally, we just have to prove that $v^*(T,y,S) \le 0$ for all $(y,S)\in Q \times \mathbb R$. Let $(y,S) \in Q \times \mathbb R$. Let $(t_m)_m$ and $(S_m)_m$ be two sequences, respectively in $[0,T)$ and in $\mathbb R$, such that
$$t_m \underset{m\rightarrow \infty}{\longrightarrow} T, \qquad S_m \underset{m\rightarrow \infty}{\longrightarrow} S$$
and
$$v(t_m, y, S_m) \underset{m\rightarrow \infty}{\longrightarrow} v^*(T, y, S).$$
Let $\varepsilon>0$. We introduce a sequence of  stopping times $(\tau_m)_m$ such that for all $m\ge 0$, $\tau_m \in \mathcal T_{t_m, T}$ and $\tau_m$ is $\varepsilon-$optimal in the sense that
\begin{align}\label{lsc_terminal}
v(t_m,y,S_m)\! \le \!\varepsilon \!+\!  \mathbb{E} \!\Bigg[ \!\displaystyle\int_{t_m}^{\tau_m} \!\bigg\{ \!\beta^b\left(Y^m_{u\shortminus}, S^m_u\right)& \, \bar \lambda^b\left(Y^m_{u\shortminus}, S^m_u\right) \mathds 1_{\{ Y^m_{u\shortminus} +\xi \le \overline{Y} \}} 
\!+ \!\beta^a\left(Y^m_{u\shortminus}, S^m_u\right) \, \bar \lambda^a\left(Y^m_{u\shortminus}, S^m_u\right)  \mathds 1_{\{ Y^m_{u\shortminus} - \xi \ge \underline{Y} \}} \!\bigg\} \d u \Bigg]
\end{align}
where we denoted $Y^m = Y^{t_m, y}$ and $S^m = S^{t_m,S_m}$.\\

Using  that $\tau_m$ takes value in $[t_m, T]$, $Y^m$ takes value in the finite set $Q$, and that the functions $\beta^b,\bar \lambda^b, \beta^a, \bar \lambda^a$ have at most linear growth in their second variable, we see that there exists a $\bar C>0$ such that, almost surely,
\begin{align*}
    &\left| \displaystyle\int_{t_m}^{\tau_m} \bigg\{ \beta^b\left(Y^m_{u\shortminus}, S^m_u\right) \, \bar \lambda^b\left(Y^m_{u\shortminus}, S^m_u\right) \mathds 1_{\{ Y^m_{u\shortminus} +\xi \le \overline{Y} \}} 
+ \beta^a\left(Y^m_{u\shortminus}, S^m_u\right) \, \bar \lambda^a\left(Y^m_{u\shortminus}, S^m_u\right)  \mathds 1_{\{ Y^m_{u\shortminus} - \xi \ge \underline{Y} \}} \bigg\} \d u \right|\\
\le & \, \bar C (T-t_m) + \int_{t_m}^T \bar C \left|S^m_u \right|^2 \d u.
\end{align*}
Using Doob's inequality and potentially increasing $\bar C$, we obtain
\begin{align*}
  &\left| \mathbb{E} \Bigg[ \displaystyle\int_{t_m}^{\tau_m} \bigg\{ \beta^b\left(Y^m_{u\shortminus}, S^m_u\right)\, \bar \lambda^b\left(Y^m_{u\shortminus}, S^m_u\right) \mathds 1_{\{ Y^m_{u\shortminus} +\xi \le \overline{Y} \}} 
+ \beta^a\left(Y^m_{u\shortminus}, S^m_u\right) \, \bar \lambda^a\left(Y^m_{u\shortminus}, S^m_u\right)  \mathds 1_{\{ Y^m_{u\shortminus} - \xi \ge \underline{Y} \}} \bigg\} \d u \Bigg] \right| \\
\le & \, \bar C (T-t_m) \left( 1 + |S_m|^2\right).
\end{align*}
Therefore the right-hand side in \eqref{lsc_terminal} goes to $\varepsilon$ when $m\rightarrow \infty$, but since $v(t_m, y, S_m) \underset{m\rightarrow \infty}{\longrightarrow} v^*(T, y, S)$, we obtain 
$$v^*(T, y, S) \le \varepsilon.$$
We conclude the proof by sending $\varepsilon$ to $0$.
\end{proof}

\subsection{Uniqueness result and continuity}

Let us first make a change of variable before proving the main result. For a given $\rho>0$, we define the function $\tilde v : \mfT\times Q \times \mathbb R \rightarrow \mathbb R$ by
$$\tilde v(t,y,S) = e^{\rho t} v(t,y,s)$$
for all $(t,y,S) \in \mfT\times Q \times \mathbb R$.\\

From what precedes, it is clear that $\tilde v$ is a viscosity solution to the HJB QVI
\begin{align}\label{eq:hjb_qvi_tilde}
        0 = \min\Bigg\{& \rho \tilde v (t,y,S)-\frac{\partial}{\partial t} \tilde v(t, y, S) \, - \frac{1}{2} \sigma^2 \frac{\partial^2}{\partial S^2} \tilde v(t, y, S)\\
        & - \mathds 1_{\{y +\xi \le \overline Y\}}\bar{\lambda}^b(y, S) \big[ \tilde \beta^b(t, y,S) + \tilde v(t, y +\xi, S) - \tilde v(t, y, S) \big] \, \nonumber \\
    & - \mathds 1_{\{y -\xi \ge \underline Y\}} \bar{\lambda}^a(y, S) \big[ \tilde \beta^a(t, y,S) + \tilde v(t, y - \xi, S) - \tilde v(t, y, S) \big]\,  , \, \tilde v(t,y,S) \Bigg\} \qquad \text{on } [0,T) \times  Q \times \mathbb R ,\nonumber
\end{align}
with terminal condition $\tilde v(T,y,S)=0$ for all $(y,S) \in Q \times \mathbb R$, where 
$$\tilde \beta^b(t,y,S) = e^{\rho t} \beta^b(y,S) \qquad \text{and} \qquad \tilde \beta^a(t,y,S) = e^{\rho t} \beta^a(y,S).$$

We then define the concepts of parabolic superdifferential and subdifferential before introducing an equivalent characterization of viscosity solution.

\begin{definition}
Let $u:\mfT\times Q \times \mathbb R \rightarrow \mathbb R$ be a USC function, $w:\mfT\times Q \times \mathbb R \rightarrow \mathbb R$ a LSC function, and let $(\tilde t, \tilde y, \tilde S) \in [0,T) \times Q \times \mathbb R$.
    \begin{enumerate}[label=(\roman*)]
        \item  We say that $(q,p,A)\in \mathbb R^3$ is in $\mathcal P^+u(\tilde t, \tilde y, \tilde S)$ if 
        \begin{align*}
            u(t, \tilde y, S) \le u(\tilde t, \tilde y, \tilde S) + q(t-\tilde t) + p(S- \tilde S) + \frac 12 A(S-\tilde S)^2 + o \left( |t-\tilde t| + |S-\tilde S|^2 \right)
        \end{align*}
        for all $(t,S) \in [0,T) \times \mathbb R.$
        \item  We say that $(q,p,A)\in \mathbb R^3$ is in $\mathcal P^-w(\tilde t, \tilde y, \tilde S)$ if 
        \begin{align*}
            w(t, \tilde y, S) \ge w(\tilde t, \tilde y, \tilde S) + q(t-\tilde t) + p(S- \tilde S) + \frac 12 A(S-\tilde S)^2 + o \left( |t-\tilde t| + |S-\tilde S|^2 \right)
        \end{align*}
        for all $(t,S) \in [0,T) \times \mathbb R.$
        \item We also define $\bar {\mathcal P}^+u(\tilde t, \tilde y, \tilde S)$ as the set of $(q,p,A)\in \mathbb R^3$ such that the exists a sequence $(t_n, S_n)_n$ in $[0,T)\times \mathbb R$ and a sequence $(q_n, p_n, A_n)_n$ in $\mathbb R^3$ such that we have $(q_n, p_n, A_n) \in \mathcal P^+ u(t_n, \tilde y, S_n)$ for all $n\in \mathbb N^*$, and
        $$(q_n, p_n,A_n) \underset{n\rightarrow \infty}{\longrightarrow} (q,p,A).$$
        \item Similarly, $\bar {\mathcal P}^-w(\tilde t, \tilde y, \tilde S)$ is the set of $(q,p,A)\in \mathbb R^3$ such that the exists a sequence $(t_n, S_n)_n$ in $[0,T)\times \mathbb R$ and a sequence $(q_n, p_n, A_n)_n$ in $\mathbb R^3$ such that we have $(q_n, p_n, A_n) \in \mathcal P^- w(t_n, \tilde y, S_n)$ for all $n\in \mathbb N^*$, and
        $$(q_n, p_n,A_n) \underset{n\rightarrow \infty}{\longrightarrow} (q,p,A).$$
    \end{enumerate}
\end{definition}

\begin{lemma}\label{app:equivalent_visc}
    \begin{enumerate}[label=(\roman*)]
        \item Let $\tilde u$ be a USC function on $\mfT\times Q \times \mathbb R$. Then $\tilde u$ is a subsolution to the HJB QVI \eqref{eq:hjb_qvi_tilde} on $[0,T) \times Q \times \mathbb R$ if and only if for all $(\tilde t, \tilde y, \tilde S) \in [0,T) \times Q \times \mathbb R$ and $(\tilde q, \tilde p, \tilde A) \in \bar {\mathcal P}^+u(\tilde t, \tilde y, \tilde S)$, we have
        \begin{align*}
        0 \ge \min\Bigg\{& \rho \tilde u (\tilde t,\tilde y,\tilde S)-\tilde q \, - \frac{1}{2} \sigma^2 \tilde A\\
        & - \mathds 1_{\{\tilde y +\xi \le \overline Y\}}\bar{\lambda}^b(\tilde y, \tilde S) \big[ \tilde \beta^b(\tilde t, \tilde y,\tilde S) + \tilde u(\tilde t, \tilde y +\xi, \tilde S) - \tilde u(\tilde t, \tilde y, \tilde S) \big] \, \nonumber \\
    & - \mathds 1_{\{\tilde y -\xi \ge \underline Y\}} \bar{\lambda}^a(\tilde y, \tilde S) \big[ \tilde \beta^a(\tilde t, \tilde y,\tilde S) + \tilde u(\tilde t, \tilde y - \xi, \tilde S) - \tilde u(\tilde t, \tilde y, \tilde S) \big]\,  , \, \tilde u(\tilde t,\tilde y,\tilde S) \Bigg\}.\nonumber
\end{align*}
        \item Let $\tilde w$ be a LSC function on $\mfT\times Q \times \mathbb R$. Then $\tilde u$ is a supersolution to the HJB QVI \eqref{eq:hjb_qvi_tilde} on $[0,T) \times Q \times \mathbb R$ if and only if for all $(\tilde t, \tilde y, \tilde S) \in [0,T) \times Q \times \mathbb R$ and $(\tilde q, \tilde p, \tilde A) \in \bar {\mathcal P}^-w(\tilde t, \tilde y, \tilde S)$, we have
        \begin{align*}
        0 \le \min\Bigg\{& \rho \tilde w (\tilde t,\tilde y,\tilde S)-\tilde q \, - \frac{1}{2} \sigma^2 \tilde A\\
        & - \mathds 1_{\{\tilde y +\xi \le \overline Y\}}\bar{\lambda}^b(\tilde y, \tilde S) \big[ \tilde \beta^b(\tilde t, \tilde y,\tilde S) + \tilde w(\tilde t, \tilde y +\xi, \tilde S) - \tilde w(\tilde t, \tilde y, \tilde S) \big] \, \nonumber \\
    & - \mathds 1_{\{\tilde y -\xi \ge \underline Y\}} \bar{\lambda}^a(\tilde y, \tilde S) \big[ \tilde \beta^a(\tilde t, \tilde y,\tilde S) + \tilde w(\tilde t, \tilde y - \xi, \tilde S) - \tilde w(\tilde t, \tilde y, \tilde S) \big]\,  , \, \tilde w(\tilde t,\tilde y,\tilde S) \Bigg\}.\nonumber
\end{align*}
    \end{enumerate}
\end{lemma}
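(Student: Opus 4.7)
The plan is to prove (i); part (ii) follows by the symmetric argument (replace USC by LSC, max by min, $\mathcal{P}^+$ by $\mathcal{P}^-$, and reverse the inequalities). The equivalence is the classical one between test-function and semijet formulations of viscosity solutions, adapted to the nonlocal jump structure in the $y$-variable and to the restricted test-function class $\mathcal{C}$.

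For the direction ``semijet implies test function'', let $\gamma \in \mathcal{C}$ with $(\tilde{u} - \gamma)$ attaining its global maximum $0$ at $(\tilde{t}, \tilde{y}, \tilde{S})$. A Taylor expansion of $\gamma$ in $(t, S)$ on the $\tilde{y}$-slice at $(\tilde{t}, \tilde{S})$, combined with the inequality $\tilde{u}(t, \tilde{y}, S) \le \gamma(t, \tilde{y}, S)$ with equality at the touching point, shows that $(\partial_t \gamma, \partial_S \gamma, \partial^2_{SS}\gamma)(\tilde{t}, \tilde{y}, \tilde{S}) \in \mathcal{P}^+ \tilde{u}(\tilde{t}, \tilde{y}, \tilde{S}) \subseteq \bar{\mathcal{P}}^+ \tilde{u}(\tilde{t}, \tilde{y}, \tilde{S})$. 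Apply the semijet hypothesis to this triple to obtain the subsolution inequality written in terms of $\tilde{u}(\tilde{t}, \tilde{y} \pm \xi, \tilde{S})$. Since $\gamma \ge \tilde{u}$ globally and $\gamma(\tilde{t}, \tilde{y}, \tilde{S}) = \tilde{u}(\tilde{t}, \tilde{y}, \tilde{S})$, one has $\gamma(\tilde{t}, \tilde{y} \pm \xi, \tilde{S}) - \gamma(\tilde{t}, \tilde{y}, \tilde{S}) \ge \tilde{u}(\tilde{t}, \tilde{y} \pm \xi, \tilde{S}) - \tilde{u}(\tilde{t}, \tilde{y}, \tilde{S})$, and the minus signs in front of the nonlocal operators turn this into the required ``$\le 0$'' for the test-function expression.

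For the direction ``test function implies semijet'', fix $(\tilde{q}, \tilde{p}, \tilde{A}) \in \bar{\mathcal{P}}^+ \tilde{u}(\tilde{t}, \tilde{y}, \tilde{S})$; by definition of the closure and a stability argument (continuity of $\bar{\lambda}^{a/b}$ and $\tilde{\beta}^{a/b}$, upper semicontinuity of $\tilde{u}$), it suffices to treat $(\tilde{q}, \tilde{p}, \tilde{A}) \in \mathcal{P}^+ \tilde{u}(\tilde{t}, \tilde{y}, \tilde{S})$ and pass to the limit. Exploiting the discreteness of $Q$, I build $\gamma \in \mathcal{C}$ slice by slice in $y$: on the $\tilde{y}$-slice, take a smooth local dominator of $\tilde{u}(\cdot, \tilde{y}, \cdot)$ matching $(\tilde{q}, \tilde{p}, \tilde{A})$ at $(\tilde{t}, \tilde{S})$, for instance the second-order Taylor polynomial plus a strictly positive $(t - \tilde{t})^2 + |S - \tilde{S}|^4$ correction; on each of the $(\tilde{y} \pm \xi)$-slices, take a smooth bounded function equal to $\tilde{u}(\tilde{t}, \tilde{y} \pm \xi, \tilde{S})$ at $(\tilde{t}, \tilde{S})$ and dominating $\tilde{u}$ on that slice; on all remaining slices, any smooth bounded dominator suffices. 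The resulting $\gamma$ has $(\tilde{u} - \gamma)$ attaining its global maximum $0$ at $(\tilde{t}, \tilde{y}, \tilde{S})$, so the test-function subsolution hypothesis applies; substituting the identities $\gamma(\tilde{t}, \tilde{y}, \tilde{S}) = \tilde{u}(\tilde{t}, \tilde{y}, \tilde{S})$ and $\gamma(\tilde{t}, \tilde{y} \pm \xi, \tilde{S}) = \tilde{u}(\tilde{t}, \tilde{y} \pm \xi, \tilde{S})$ yields exactly the semijet inequality.

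The principal technical obstacle is building a dominator $\gamma \in \mathcal{C}$ (bounded with bounded $\partial_S \gamma$) when $\tilde{u}$ has quadratic growth in $S$, as for $\tilde{u} = e^{\rho t} v^*$ with $v \in \Xi$: no bounded $\gamma$ can dominate such $\tilde{u}$ on all of $\mathbb{R}$. The resolution is a smooth cutoff: construct $\gamma$ on a compact neighborhood of $(\tilde{t}, \tilde{S})$ (where $\tilde{u}$ is bounded by USC) via the Taylor-plus-correction recipe above, then truncate smoothly to a large bounded constant outside. Only the pointwise values of $\gamma$ at $(\tilde{t}, \tilde{y}, \tilde{S})$ and $(\tilde{t}, \tilde{y} \pm \xi, \tilde{S})$ and the partial derivatives at $(\tilde{t}, \tilde{y}, \tilde{S})$ enter the subsolution condition, so modifications of $\gamma$ outside the touching region do not affect the conclusion; when unboundedness of $\tilde{u}$ on $\mathbb{R}$ prevents global dominance on the $\tilde{y}$-slice, one reduces to bounded subdomains in $S$ by an exhaustion argument, consistent with the role the semijet characterization plays in the subsequent comparison principle where penalty terms automatically localize the relevant maxima.
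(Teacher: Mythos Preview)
The paper does not prove this lemma; it simply states ``This is a classical result, and we refer the reader to Proposition 4.1 in \cite{fleming2006controlled} for a detailed proof.'' Your argument is the standard one underlying that reference: extract the semijet from a touching test function via Taylor expansion, and conversely build a smooth test function from a semijet element, handling the nonlocal jump terms through the global dominance $\gamma \ge \tilde u$ on the neighbouring $y$-slices. That structure is correct and is exactly what Fleming--Soner do in the purely local second-order setting.

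You have, however, put your finger on a genuine tension that the paper glosses over. The class $\mathcal{C}$ requires $\gamma$ and $\partial_S \gamma$ to be bounded, while the relevant functions (e.g.\ $e^{\rho t} v^*$ with $v \in \Xi$) may have quadratic growth in $S$. If $\tilde u$ really grows like $S^2$ on some slice, then for any bounded $\gamma$ the difference $\tilde u - \gamma$ is unbounded above and no global maximum exists; the test-function subsolution condition becomes vacuous at every point, and the ``only if'' direction in (i) fails as stated. Your proposed fix --- smooth cutoff and an ``exhaustion argument'' localising to bounded $S$-domains --- does not repair this, because the paper's definition of subsolution demands a global maximum over all of $[0,T)\times Q\times\mathbb{R}$, not a local one, so restricting to compacta changes the hypothesis you are allowed to invoke. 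This is not a flaw in your reasoning so much as a lacuna in the paper's setup: the boundedness constraint on $\mathcal{C}$ (imposed to make the stochastic integrals in Propositions \ref{v_viscosity_subsol}--\ref{v_viscosity_supersol} true martingales) is in tension with the semijet equivalence needed for the comparison principle. A clean resolution would be to enlarge $\mathcal{C}$ to test functions of polynomial growth (the It\^o argument still works via localisation with stopping times), or to restate the definition with local maxima; either change makes the equivalence genuinely classical and your construction go through without the hand-waving in the last paragraph.
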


This is a classical result, and we refer the reader to Proposition 4.1 in \cite{fleming2006controlled}  for a detailed proof. We can now state the comparison principle. 

\begin{proposition}\label{prop:comparison}
    Let $u \in \Xi$ (resp. $w \in \Xi$) be a USC subsolution (resp. LSC supersolution) to the HJB QVI \eqref{eq:hjb_qvi} with $w\ge u$ on $\{T\}\times Q \times \mathbb R$. Then we have $w\ge u$ on $\mfT\times Q \times \mathbb R$.
\end{proposition}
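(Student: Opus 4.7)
The plan is to adapt the doubling-of-variables technique of Crandall--Ishii--Lions to the present parabolic obstacle problem with discrete jumps in $y$ and an unbounded state variable $S$. Assume for contradiction that there exists $(t_0, y_0, S_0) \in [0,T) \times Q \times \mathbb R$ with $u(t_0, y_0, S_0) - w(t_0, y_0, S_0) > 0$, and work with the reweighted functions $\tilde u = e^{\rho t} u$, $\tilde w = e^{\rho t} w$, which, for any $\rho > 0$, are respectively sub- and supersolution of \eqref{eq:hjb_qvi_tilde} with terminal value $0$. The parameter $\rho$ will be chosen large at the end of the proof.

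First, introduce a penalty $\phi(S) = (1+S^2)^p$ with $p > 1$ so that $\phi$ strictly dominates the quadratic growth of $\tilde u - \tilde w$, and pick $\epsilon > 0$ small enough that
\begin{align*}
M_\epsilon := \sup_{(t,y,S) \in \mfT \times Q \times \mathbb R} \bigl(\tilde u(t,y,S) - \tilde w(t,y,S) - 2\epsilon \phi(S)\bigr) > 0;
\end{align*}
this supremum is attained thanks to the super-quadratic penalty and the finiteness of $Q$. Double the continuous variables via
\begin{align*}
\Psi_n^y(t,S,t',S') := \tilde u(t,y,S) - \tilde w(t',y,S') - \tfrac{n}{2}(t-t')^2 - \tfrac{n}{2}(S-S')^2 - \epsilon\phi(S) - \epsilon\phi(S'),
\end{align*}
and, for each $n \in \mathbb N^*$, let $(\hat t_n, \hat S_n, \hat t'_n, \hat S'_n, \hat y_n)$ be a maximizer over $(t,S,t',S',y)$. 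Standard arguments give, along a subsequence, $\hat y_n \equiv \hat y$, $(\hat t_n, \hat S_n), (\hat t'_n, \hat S'_n) \to (\hat t, \hat S)$, $n[(\hat t_n - \hat t'_n)^2 + (\hat S_n - \hat S'_n)^2] \to 0$, and $(\hat t, \hat y, \hat S)$ a global maximizer of $\tilde u - \tilde w - 2\epsilon \phi$ with value $M_\epsilon$. The terminal inequality, combined with $M_\epsilon > 0$, rules out $\hat t = T$ for $\epsilon$ small. Apply the parabolic Ishii lemma to produce
\begin{align*}
(q_n^+, p_n^+, A_n^+) \in \bar{\mathcal P}^+\tilde u(\hat t_n, \hat y, \hat S_n), \qquad (q_n^-, p_n^-, A_n^-) \in \bar{\mathcal P}^-\tilde w(\hat t'_n, \hat y, \hat S'_n),
\end{align*}
with $q_n^+ = q_n^-$ and $A_n^+ - A_n^- \le 2\epsilon\bigl(\phi''(\hat S_n) + \phi''(\hat S'_n)\bigr) + o(1)$. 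By Lemma~\ref{app:equivalent_visc} and the fact that $\tilde u(\hat t, \hat y, \hat S) > \tilde w(\hat t, \hat y, \hat S) \ge 0$ rules out the obstacle branch for $\tilde u$, write the differential/integral subsolution inequality for $\tilde u$ and supersolution inequality for $\tilde w$, and subtract them. The global maximum property of $(\hat t, \hat y, \hat S)$ gives
\begin{align*}
\tilde u(\hat t, \hat y \pm \xi, \hat S) - \tilde w(\hat t, \hat y \pm \xi, \hat S) \le \tilde u(\hat t, \hat y, \hat S) - \tilde w(\hat t, \hat y, \hat S),
\end{align*}
so the nonlocal contributions, weighted by the non-negative intensities $\bar\lambda^{a,b}(\hat y, \hat S)$, yield a non-positive term in the subtracted inequality; the source terms $\tilde\beta^{a,b}$ and the intensities evaluated at $(\hat t_n, \hat S_n)$ and $(\hat t'_n, \hat S'_n)$ cancel out as $n \to \infty$ by continuity. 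One is left with
\begin{align*}
\rho\,\bigl(M_\epsilon + 2\epsilon\phi(\hat S)\bigr) \le \sigma^2 \epsilon\, \phi''(\hat S) + o(1),
\end{align*}
which, upon sending $\epsilon \to 0$ after fixing $\rho$ larger than a constant depending only on the model parameters, yields the contradiction $\rho\, M \le 0$.

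The main technical hurdle is the interplay between the unbounded $S$-direction, the quadratic growth of $\tilde u, \tilde w$ inherited from $\Xi$, and the linear growth of the coefficients $\bar\lambda^{a,b}, \beta^{a,b}$: the penalty $\epsilon\phi$ must be strong enough to ensure the supremum is attained (hence $p > 1$) and to provide the $\epsilon$-dependent localization bound $\epsilon \phi(\hat S) \lesssim 1 + |\hat S|^2$, which yields $|\hat S| \lesssim \epsilon^{-1/(2(p-1))}$, yet weak enough that $\epsilon \sigma^2 \phi''(\hat S)$ remains controllable as $\epsilon \to 0$; choosing $p > 1$ close to $1$ resolves this trade-off. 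The finiteness of $Q$ is crucial: it removes any need for a penalization in $y$, and the nonlocal obstacles reduce to pointwise comparisons at neighbouring $y$-values that follow directly from the global maximum property. Continuity of $\bar\lambda^{a,b}$ and $\tilde\beta^{a,b}$ in $(t,S)$ handles the passage to the limit in $n$.
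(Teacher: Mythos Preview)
Your overall architecture is right: the $e^{\rho t}$ change of variable, doubling of variables, Ishii's lemma, ruling out the obstacle branch via $\tilde u > \tilde w \ge 0$, and handling the nonlocal jump terms by the global maximum over the finite set $Q$. However, the final step contains a genuine gap.

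You claim that sending $\epsilon \to 0$ yields $\rho M \le 0$, which requires $\epsilon\,\phi''(\hat S_\epsilon) \to 0$. But your own localization bound gives only $(1+\hat S_\epsilon^2)^{p-1} \lesssim \epsilon^{-1}$, and since $\phi''(S) \asymp (1+S^2)^{p-1}$ this yields at best $\epsilon\,\phi''(\hat S_\epsilon) = O(1)$, not $o(1)$; choosing $p$ close to $1$ does not change this scaling. Worse, the constant in that bound comes from the quadratic growth of $\tilde u = e^{\rho t}u$, hence is of order $e^{\rho T}C_0$. So what you actually obtain is $\rho M_\epsilon \lesssim \sigma^2 C_0\, e^{\rho T}$, and since $M_\epsilon$ is controlled only by $e^{\rho t_0}\delta_0$ with $t_0 < T$, choosing $\rho$ large makes the right-hand side blow up faster than the left. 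The argument is circular: the discount $\rho$ you need for strictness is the same $\rho$ that inflates the growth constant of $\tilde u, \tilde w$ and hence the size of $\hat S_\epsilon$.

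The paper's proof fixes this by making the penalty \emph{time}-dependent: it takes $\phi(t,S,R) = \varepsilon e^{-\mu t}(1 + S^4 + R^4)$, introducing a second free parameter $\mu$. Then $\partial_t\phi = -\mu\phi$ appears in the subtracted inequality, and one checks directly that
\[
\partial_t\phi + \tfrac12\sigma^2\big(\partial_{SS}\phi + \partial_{RR}\phi + 2\partial_{SR}\phi\big)
= \varepsilon e^{-\mu t}\big[-\mu(1+S^4+R^4) + 6\sigma^2(S^2+R^2)\big] < 0
\]
for all $(S,R)$ once $\mu$ exceeds a constant depending only on $\sigma$. This makes the right-hand side strictly negative regardless of where $\hat S$ lands, so $\rho$ can be any fixed positive number (say $\rho=1$) and the contradiction is immediate. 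Your purely spatial penalty cannot reproduce this uniform-in-$S$ negativity; adding a factor $e^{-\mu t}$ to your $\phi$ would repair the argument.
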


\begin{proof}
    The proof is technical and we decompose it in several steps.\\

    \textbf{Step 1: Change of variable.} \\

    We introduce the functions $\tilde u, \tilde w \in \Xi$ given by
    $$\tilde u (t,y,S) = e^{\rho t} u(t,y,S) \quad \text{and} \quad \tilde w (t,y,S) = e^{\rho t} w(t,y,S) \quad \forall \, (t,y,S) \in \mfT \times Q \times \mathbb R.$$

    Then $\tilde u$ and $\tilde w$ are respectively subsolution and supersolution to the HJB QVI \eqref{eq:hjb_qvi_tilde} on $[0,T) \times Q\times \mathbb R$ with $\tilde w \ge \tilde u$ on $\{T\}\times Q \times \mathbb R$. It suffices to prove that $\tilde w \ge \tilde u$ on $\mfT\times Q \times \mathbb R$.\\

    \textbf{Step 2: Doubling variables.} \\

    By contradiction, assume $\underset{\mfT\times Q \times \mathbb R}{\sup} \tilde u - \tilde w >0$. Then for $\varepsilon,\mu>0$ with $\varepsilon$ small enough, there exists $(\tilde t, \tilde y, \tilde S) \in \mfT\times Q \times \mathbb R$ such that
    \begin{align}\label{app:contrad0}
        0&< \tilde u (\tilde t, \tilde y, \tilde S) - \tilde w (\tilde t, \tilde y, \tilde S) - \phi(\tilde t, \tilde S, \tilde S)\\
        &= \underset{(t,y,S) \in [0,T)\times Q \times \mathbb R}{\sup} \tilde u(t,y,S) - \tilde w (t,y,S) - \phi (t,S,S),\nonumber
    \end{align}
    where
    $$\phi (t,S,R) = \varepsilon e^{-\mu t} \left(1 + S^4 + R^4 \right)\quad \forall \; (t,S,R) \in \mfT \times \mathbb R^2.$$
    The choice of $\phi$ adds some coercivity and allows to look for the supremum in a bounded set. Moreover, $\tilde t <T$ as $\tilde w\ge \tilde u$ on $\{T\}\times Q \times \mathbb R$.\\

    Then for all $n \in \mathbb N^*$, we can find $(t_n, S_n, R_n) \in \mfT \times \mathbb R^2$ such that 
    \begin{align}\label{app:contrad}
        0&< \tilde u (t_n, \tilde y, S_n) - \tilde w (t_n, \tilde y, R_n) - \phi(t_n, S_n, R_n) - n(S_n - R_n)^2 - \left( (t_n-\tilde t)^2 + (S_n - \tilde S)^4\right)\\
        &= \underset{(t,S,R) \in \mfT\times \mathbb R^2}{\sup} \tilde u (t,\tilde y, S) - \tilde w(t,\tilde y, R) - \phi(t,S,R) - n (S-R)^2 - \left( (t-\tilde t)^2 + (S - \tilde S)^4\right).\nonumber
    \end{align}

    It is clear the $|S_n - R_n| \underset{n\rightarrow \infty}{\longrightarrow} 0$, and up to a subsequence there exists $(\hat t, \hat S) \in \mfT \times \mathbb R$ such that 
    $$t_n  \underset{n\rightarrow \infty}{\longrightarrow} \hat t \qquad \text{and} \qquad S_n, R_n  \underset{n\rightarrow \infty}{\longrightarrow} \hat S.$$

    Moreover, we clearly have 
    \begin{align}\label{app:ineq0}
        \tilde u (\tilde t, \tilde y, \tilde S) & - \tilde w (\tilde t, \tilde y, \tilde S) - \phi(\tilde t, \tilde S, \tilde S)  \\
        \le &\; \tilde u (t_n, \tilde y, S_n) - \tilde w (t_n, \tilde y, R_n) - \phi(t_n, S_n, R_n) - n(S_n - R_n)^2 - \left( (t_n-\tilde t)^2 + (S_n - \tilde S)^4\right),\nonumber
    \end{align}
    so in particular
      \begin{align*}
         \tilde u (\tilde t, \tilde y, \tilde S) - \tilde w (\tilde t, \tilde y, \tilde S) - \phi(\tilde t, \tilde S, \tilde S) 
        \le \tilde u (t_n, \tilde y, S_n) - \tilde w (t_n, \tilde y, R_n) - \phi(t_n, S_n, R_n),
    \end{align*}
    and therefore
    \begin{align*}
         \tilde u (\tilde t, \tilde y, \tilde S) - \tilde w (\tilde t, \tilde y, \tilde S) - \phi(\tilde t, \tilde S, \tilde S)  
        \le  \underset{n\rightarrow \infty}{\lim \inf} \left( \tilde u (t_n, \tilde y, S_n) - \tilde w (t_n, \tilde y, R_n) \right)- \phi(\hat t, \hat S, \hat S).
    \end{align*}
    However, by definition,
    \begin{align*}
        \underset{n\rightarrow \infty}{\lim \sup}\left( \tilde u (t_n, \tilde y, S_n) - \tilde w (t_n, \tilde y, R_n) \right) \le \tilde u (\hat t, \tilde y, \hat S) - \tilde w (\hat t, \tilde y, \hat S),
    \end{align*}
so
 \begin{align*}
        \tilde u (\tilde t, \tilde y, \tilde S) - \tilde w (\tilde t, \tilde y, \tilde S) - \phi(\tilde t, \tilde S, \tilde S)  
        \le  \tilde u (\hat t, \tilde y, \hat S) - \tilde w (\hat t, \tilde y, \hat S) - \phi(\hat t, \hat S, \hat S).
    \end{align*}
    This implies by \eqref{app:contrad0} that
    \begin{align*}
       \underset{n\rightarrow \infty}{\lim}\tilde u (t_n, \tilde y, S_n) - \tilde w (t_n, \tilde y, R_n) - \phi(t_n, S_n, R_n) &= \tilde u (\hat t, \tilde y, \hat S) - \tilde w (\hat t, \tilde y, \hat S) - \phi(\hat t, \hat S, \hat S)\\ &= \tilde u (\tilde t, \tilde y, \tilde S) - \tilde w (\tilde t, \tilde y, \tilde S) - \phi(\tilde t, \tilde S, \tilde S) .
    \end{align*}
    Going back to \eqref{app:contrad0}, we get 
    \begin{align*}
        \tilde u (\tilde t, \tilde y, \tilde S) - \tilde w (\tilde t, \tilde y, \tilde S) - \phi(\tilde t, \tilde S, \tilde S)\le \tilde u (\tilde t, \tilde y, \tilde S) - \tilde w (\tilde t, \tilde y, \tilde S) - \phi(\tilde t, \tilde S, \tilde S) - \left( (\hat t-\tilde t)^2 + (\hat S - \tilde S)^4\right),
    \end{align*}
    which can only hold if $\hat t = \tilde t$ and $\hat S = \tilde S$, and we necessarily have 
    $$\underset{n\rightarrow \infty}{\lim}\tilde u (t_n, \tilde y, S_n) - \tilde w (t_n, \tilde y, R_n) = \tilde u (\tilde t, \tilde y, \tilde S) - \tilde w (\tilde t, \tilde y, \tilde S).$$

    \textbf{Step 3: Ishii's lemma.} \\

    For all $n\in \mathbb N^*$, let us define the function $\varphi_n$ by
    $$\varphi_n(t,S,R) = \phi(t,S,R) + n(S-R)^2 + \left( (t-\tilde t)^2 + (S-\tilde S)^2\right) \qquad \forall (t,S,R) \in \mfT \times \mathbb R^2.$$

    Then, by Ishii's lemma (see for instance \cite{fleming2006controlled} Theorem 6.1), we know that for $\eta>0$ there exist $(q^1_n, p^1_n, A^1_n) \in \bar{\mathcal P}^+ \tilde u (t_n, \tilde y, S_n)$ and $(q^2_n, p^2_n, A^2_n) \in \bar{\mathcal P}^- \tilde w (t_n, \tilde y, R_n)$ such that
    $$q^1_n - q^2_n = \frac{\partial}{\partial t} \varphi_n(t_n, S_n, R_n), \qquad \left(p^1_n, p^2_n \right) = \left( \frac{\partial}{\partial S} \varphi_n(t_n, S_n, R_n), -\frac{\partial}{\partial R} \varphi_n(t_n, S_n, R_n) \right),$$
    and 
    $$\begin{pmatrix}
        A^1_n & 0 \\ 0 & -A^2_n
    \end{pmatrix} \le H \varphi_n(t_n, S_n, R_n) + \eta \left(H \varphi_n(t_n, S_n, R_n) \right)^2,$$
    where $H \varphi_n(t, S, R)$ denotes the Hessian matrix of $\varphi_n$ with respect to $(S,R)$.\\

    \textbf{Step 4: Viscosity properties.} \\

    We can now apply Lemma \ref{app:equivalent_visc} to see that, on the one hand, 
    \begin{align*}
        0 \ge \min\Bigg\{& \rho \tilde u ( t_n,\tilde y,S_n)-q^1_n \, - \frac{1}{2} \sigma^2  A^1_n\\
        & - \mathds 1_{\{\tilde y +\xi \le \overline Y\}}\bar{\lambda}^b(\tilde y, S_n) \big[ \tilde \beta^b(t_n, \tilde y,S_n) + \tilde u(t_n, \tilde y +\xi, S_n) - \tilde u(t_n, \tilde y, S_n) \big] \, \nonumber \\
    & - \mathds 1_{\{\tilde y -\xi \ge \underline Y\}} \bar{\lambda}^a(\tilde y, S_n) \big[ \tilde \beta^a(t_n, \tilde y,S_n) + \tilde u(t_n, \tilde y - \xi, S_n) - \tilde u(t_n, \tilde y, S_n) \big]\,  , \, \tilde u(t_n,\tilde y,S_n) \Bigg\},\nonumber
\end{align*}
but by \eqref{app:contrad} and the fact that $\tilde w \in \Xi$, we know that $\tilde u(t_n, \tilde y, S_n) >0$, so necessarily
\begin{align}\label{app:first_ineq}
    0 \ge &\;\rho \tilde u ( t_n,\tilde y,S_n)-q^1_n \, - \frac{1}{2} \sigma^2  A^1_n\\
        & - \mathds 1_{\{\tilde y +\xi \le \overline Y\}}\bar{\lambda}^b(\tilde y, S_n) \big[ \tilde \beta^b(t_n, \tilde y,S_n) + \tilde u(t_n, \tilde y +\xi, S_n) - \tilde u(t_n, \tilde y, S_n) \big] \, \nonumber \\
    & - \mathds 1_{\{\tilde y -\xi \ge \underline Y\}} \bar{\lambda}^a(\tilde y, S_n) \big[ \tilde \beta^a(t_n, \tilde y,S_n) + \tilde u(t_n, \tilde y - \xi, S_n) - \tilde u(t_n, \tilde y, S_n)\big].\nonumber
\end{align}
On the other hand, still applying Lemma \ref{app:equivalent_visc}, we have 
\begin{align*}
        0 \le \min\Bigg\{& \rho \tilde w ( t_n,\tilde y,R_n)-q^2_n \, - \frac{1}{2} \sigma^2  A^2_n\\
        & - \mathds 1_{\{\tilde y +\xi \le \overline Y\}}\bar{\lambda}^b(\tilde y, R_n) \big[ \tilde \beta^b(t_n, \tilde y,R_n) + \tilde w(t_n, \tilde y +\xi, R_n) - \tilde w(t_n, \tilde y, R_n) \big] \, \nonumber \\
    & - \mathds 1_{\{\tilde y -\xi \ge \underline Y\}} \bar{\lambda}^a(\tilde y, R_n) \big[ \tilde \beta^a(t_n, \tilde y,R_n) + \tilde w(t_n, \tilde y - \xi, R_n) - \tilde w(t_n, \tilde y, R_n) \big]\,  , \, \tilde w(t_n,\tilde y,S_n) \Bigg\},\nonumber
\end{align*}
which implies 
\begin{align}\label{app:scd_ineq}
0\le &\; \rho \tilde w ( t_n,\tilde y,R_n)-q^2_n \, - \frac{1}{2} \sigma^2  A^2_n\\
        & - \mathds 1_{\{\tilde y +\xi \le \overline Y\}}\bar{\lambda}^b(\tilde y, R_n) \big[ \tilde \beta^b(t_n, \tilde y,R_n) + \tilde w(t_n, \tilde y +\xi, R_n) - \tilde w(t_n, \tilde y, R_n) \big] \, \nonumber \\
    & - \mathds 1_{\{\tilde y -\xi \ge \underline Y\}} \bar{\lambda}^a(\tilde y, R_n) \big[ \tilde \beta^a(t_n, \tilde y,R_n) + \tilde w(t_n, \tilde y - \xi, R_n) - \tilde w(t_n, \tilde y, R_n) \big].\nonumber    
\end{align}

From \eqref{app:first_ineq} and \eqref{app:scd_ineq}, we get 
\begin{align*}
    \rho \left( \tilde u ( t_n,\tilde y,S_n) - \tilde w ( t_n,\tilde y,R_n)\right) \le &\;q^1_n - q^2_n + \frac 12 \sigma^2 \left(A^1_n-A^2_n\right)\\
    & + \mathds 1_{\{\tilde y +\xi \le \overline Y\}}\bar{\lambda}^b(\tilde y, S_n) \big[ \tilde \beta^b(t_n, \tilde y,S_n) + \tilde u(t_n, \tilde y +\xi, S_n) - \tilde u(t_n, \tilde y, S_n) \big] \, \nonumber \\
    & + \mathds 1_{\{\tilde y -\xi \ge \underline Y\}} \bar{\lambda}^a(\tilde y, S_n) \big[ \tilde \beta^a(t_n, \tilde y,S_n) + \tilde u(t_n, \tilde y - \xi, S_n) - \tilde u(t_n, \tilde y, S_n)\big]\\
    & - \mathds 1_{\{\tilde y +\xi \le \overline Y\}}\bar{\lambda}^b(\tilde y, R_n) \big[ \tilde \beta^b(t_n, \tilde y,R_n) + \tilde w(t_n, \tilde y +\xi, R_n) - \tilde w(t_n, \tilde y, R_n) \big] \, \nonumber \\
    & - \mathds 1_{\{\tilde y -\xi \ge \underline Y\}} \bar{\lambda}^a(\tilde y, R_n) \big[ \tilde \beta^a(t_n, \tilde y,R_n) + \tilde w(t_n, \tilde y - \xi, R_n) - \tilde w(t_n, \tilde y, R_n) \big].
\end{align*}
Moreover, 
$$H \varphi_n(t_n, S_n, R_n) = 
\begin{pmatrix}
    \frac{\partial^2}{\partial S^2} \phi(t_n, S_n, R_n) +2n + 12(S_n-\tilde S)^2 & \frac{\partial^2}{\partial S \partial R} \phi(t_n, S_n, R_n)  - 2n\\
    \frac{\partial^2}{\partial S \partial R} \phi(t_n, S_n, R_n)  - 2n & \frac{\partial^2}{\partial R^2} \phi(t_n, S_n, R_n) + 2n
\end{pmatrix},$$
and we get
\begin{align*}
    \rho \left( \tilde u ( t_n,\tilde y,S_n) - \tilde w ( t_n,\tilde y,R_n)\right) \le &\;\frac{\partial}{\partial t}\phi(t_n, S_n, R_n) + 2(t_n-\tilde t)+ \eta C_n \\
    & + \frac 12 \sigma^2 \bigg(\frac{\partial^2}{\partial S^2} \phi(t_n, S_n, R_n) + \frac{\partial^2}{\partial R^2} \phi(t_n, S_n, R_n) \\
    & \qquad \qquad \qquad + 12(S_n-\tilde S)^2 + 2\frac{\partial^2}{\partial S \partial R} \phi(t_n, S_n, R_n)  \bigg) \\
    & + \mathds 1_{\{\tilde y +\xi \le \overline Y\}}\bar{\lambda}^b(\tilde y, S_n) \big[ \tilde \beta^b(t_n, \tilde y,S_n) + \tilde u(t_n, \tilde y +\xi, S_n) - \tilde u(t_n, \tilde y, S_n) \big] \, \nonumber \\
    & + \mathds 1_{\{\tilde y -\xi \ge \underline Y\}} \bar{\lambda}^a(\tilde y, S_n) \big[ \tilde \beta^a(t_n, \tilde y,S_n) + \tilde u(t_n, \tilde y - \xi, S_n) - \tilde u(t_n, \tilde y, S_n)\big]\\
    & - \mathds 1_{\{\tilde y +\xi \le \overline Y\}}\bar{\lambda}^b(\tilde y, R_n) \big[ \tilde \beta^b(t_n, \tilde y,R_n) + \tilde w(t_n, \tilde y +\xi, R_n) - \tilde w(t_n, \tilde y, R_n) \big] \, \nonumber \\
    & - \mathds 1_{\{\tilde y -\xi \ge \underline Y\}} \bar{\lambda}^a(\tilde y, R_n) \big[ \tilde \beta^a(t_n, \tilde y,R_n) + \tilde w(t_n, \tilde y - \xi, R_n) - \tilde w(t_n, \tilde y, R_n) \big],
\end{align*}
where $C_n$ does not depend on $\eta$. Notice also that, by  \eqref{app:contrad0}, we have
\begin{align*}
    \underset{n\rightarrow \infty}{\lim \sup}\; \tilde u (t_n, \tilde y+\xi, S_n) - \tilde w(t_n, \tilde y+\xi, R_n) \le \tilde u (\tilde t, \tilde y+\xi, \tilde S) - \tilde w(\tilde t, \tilde y+\xi, \tilde S) \le  \tilde u (\tilde t, \tilde y, \tilde S) - \tilde w(\tilde t, \tilde y, \tilde S),
\end{align*}
and similarly
\begin{align*}
    \underset{n\rightarrow \infty}{\lim \sup}\; \tilde u (t_n, \tilde y-\xi, S_n) - \tilde w(t_n, \tilde y-\xi, R_n) \le \tilde u (\tilde t, \tilde y-\xi, \tilde S) - \tilde w(\tilde t, \tilde y-\xi, \tilde S) \le  \tilde u (\tilde t, \tilde y, \tilde S) - \tilde w(\tilde t, \tilde y, \tilde S).
\end{align*}
Therefore, sending $\eta$ to 0 and using the facts that $\tilde u, \tilde w \in \Xi$ and the functions $\tilde \beta^b, \tilde \beta^a, \bar \lambda^b, \bar \lambda^a$ are continuous, we can find a sequence $(\zeta_n)_n$ such that $\zeta_n \longrightarrow 0$ as $n \rightarrow \infty$ and
\begin{align*}
    \rho \left( \tilde u ( t_n,\tilde y,S_n) - \tilde w ( t_n,\tilde y,R_n)\right) \le &\; \zeta_n +\frac{\partial}{\partial t}\phi(t_n, S_n, R_n)  \\
    & + \frac 12 \sigma^2 \bigg(\frac{\partial^2}{\partial S^2} \phi(t_n, S_n, R_n) + \frac{\partial^2}{\partial R^2} \phi(t_n, S_n, R_n)  + 2\frac{\partial^2}{\partial S \partial R} \phi(t_n, S_n, R_n)  \bigg).
\end{align*} 
Now sending $n\to\infty$ yields
\begin{align*}
    \rho \left( \tilde u ( \tilde t,\tilde y,\tilde S) - \tilde w ( \tilde t,\tilde y,\tilde S)\right) \le &\;\frac{\partial}{\partial t}\phi(\tilde t, \tilde S, \tilde S) + \frac 12 \sigma^2 \bigg(\frac{\partial^2}{\partial S^2} \phi(\tilde t, \tilde S, \tilde S) + \frac{\partial^2}{\partial R^2} \phi(\tilde t, \tilde S, \tilde S)  + 2\frac{\partial^2}{\partial S \partial R} \phi(\tilde t, \tilde S, \tilde S)  \bigg).
\end{align*}
But it is clear that for $\mu>0$ large enough, the right-hand side of this inequality is negative, which contradicts  \eqref{app:contrad0} as $\rho>0$. 
\end{proof}

We conclude this section with the following result.

\begin{corollary}
    The value function $v$ is continuous. Moreover, it is the only viscosity solution to the HJB QVI \eqref{eq:hjb_qvi} in $\Xi$ with terminal condition $v(T,y,S) = 0$ for $(y,S) \in Q\times \mathbb R$.
\end{corollary}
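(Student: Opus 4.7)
The plan is to obtain both assertions as essentially immediate consequences of the comparison principle in Proposition \ref{prop:comparison}, once it is combined with the viscosity subsolution and supersolution properties established in Propositions \ref{v_viscosity_subsol} and \ref{v_viscosity_supersol}, and with the identification of the terminal values in Proposition \ref{prop:terminal}.

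First, I would address continuity. By Proposition \ref{v_bounded} the value function $v$ lies in $\Xi$ and is in particular locally bounded, so its semicontinuous envelopes $v^*$ and $v_*$ are well-defined on $\mfT\times Q\times\mathbb R$, and they still satisfy the bounds $0\le v_*\le v\le v^*\le C_0(1+S^2)$, since these pointwise inequalities are preserved by $\liminf$ and $\limsup$. Hence $v^*$ and $v_*$ both belong to $\Xi$. From Propositions \ref{v_viscosity_subsol} and \ref{v_viscosity_supersol}, $v^*$ is a USC viscosity subsolution and $v_*$ is an LSC viscosity supersolution to the HJB QVI \eqref{eq:hjb_qvi} on $[0,T)\times Q\times\mathbb R$, and Proposition \ref{prop:terminal} gives $v^*(T,y,S)=v_*(T,y,S)=0$, so in particular $v_*\ge v^*$ on $\{T\}\times Q\times\mathbb R$. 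Proposition \ref{prop:comparison} then applies with $u=v^*$ and $w=v_*$ and yields $v_*\ge v^*$ on all of $\mfT\times Q\times\mathbb R$. Combined with the trivial inequality $v_*\le v\le v^*$, this forces $v_*=v=v^*$, which is precisely the continuity of $v$.

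For uniqueness, I would take any other viscosity solution $\tilde v\in\Xi$ to \eqref{eq:hjb_qvi} with $\tilde v(T,\cdot,\cdot)=0$ and apply the comparison principle twice. On the one hand, $\tilde v^*$ is a USC subsolution in $\Xi$ and $v_*$ is an LSC supersolution in $\Xi$, both vanishing at $T$, so $\tilde v^*\le v_*$ on $\mfT\times Q\times\mathbb R$. On the other hand, swapping roles, $v^*$ is a USC subsolution and $\tilde v_*$ an LSC supersolution, so $v^*\le\tilde v_*$ on $\mfT\times Q\times\mathbb R$. Chaining these inequalities with $v_*\le v\le v^*$ and $\tilde v_*\le\tilde v\le\tilde v^*$ yields $\tilde v=v$ everywhere, completing the proof.

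I do not foresee any genuine obstacle; the only bookkeeping to perform carefully is the verification that the semicontinuous envelopes of $v$ (and of any competing solution $\tilde v\in\Xi$) still satisfy the structural conditions defining $\Xi$ — non-negativity, boundedness in $y$, and quadratic growth in $S$ — which is required to invoke Proposition \ref{prop:comparison}. As noted above, these properties pass directly through the $\liminf$ and $\limsup$ operations because the bounds in the definition of $\Xi$ are uniform pointwise estimates, so the entire argument truly reduces to two applications of the comparison principle.
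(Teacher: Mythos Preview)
Your proposal is correct and follows essentially the same route as the paper: continuity from applying Proposition~\ref{prop:comparison} to the pair $(v^*,v_*)$ using Propositions~\ref{v_viscosity_subsol}, \ref{v_viscosity_supersol}, \ref{prop:terminal}, and uniqueness from two further applications of comparison. The only cosmetic difference is that the paper first deduces continuity of the competing solution $V$ and then compares $V$ with $v$ directly, whereas you compare the envelopes $\tilde v^*,\tilde v_*$ against $v_*,v^*$; the logical content is identical.
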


\begin{proof}
    We know that $v$ is in $\Xi$ and is a viscosity solution to the HJB QVI \eqref{eq:hjb_qvi}. In particular, $v^*$ is a viscosity subsolution to \eqref{eq:hjb_qvi} and $v_*$ is a viscosity supersolution to $\eqref{eq:hjb_qvi}$. We also know from Propositon \ref{prop:terminal} that $v^*(T,y,S) = v_*(T,y,S) = 0$ for all $(y,S) \in Q\times \mathbb R$.\\

    Hence $v^*$ and $v_*$ verify the assumptions of Proposition \ref{prop:comparison} and we obtain $v_* \ge v^*$ on $\mfT\times Q \times \mathbb R$. But by definition we also have $v_*\le v\le v^*$ on $\mfT\times Q \times \mathbb R$, so we can conclude that $v_*= v= v^*$, and in particular $v$ is continuous.\\

    Let us now assume that we have an other viscosity solution $V \in \Xi$ to the HJB QVI \eqref{eq:hjb_qvi} with terminal condition $V(T,y,S) = 0$ for all $(y,S) \in Q\times \mathbb R$. Using the same reasoning, $V$ is necessarily continuous.\\

    Because $V$ is a subsolution to \eqref{eq:hjb_qvi} and $v$ is a supersolution to \eqref{eq:hjb_qvi}, and as $v(T,y,S) = V(T,y,S) = 0$ for all $(y,S) \in Q\times \mathbb R$, we know by Proposition \ref{prop:comparison} that $V \le v $ on $\mfT\times Q \times \mathbb R$. But we also have that $V$ is a supersolution and $v$ a subsolution to \eqref{eq:hjb_qvi}, which gives us $V \ge v $ on $\mfT\times Q \times \mathbb R$. Therefore, we obtain $V=v$, which proves the result.    
\end{proof}

\bibliographystyle{apalike}
\bibliography{References}

\section*{Acknowledgment}

The authors would like to thank Bruno Bouchard and Olivier Guéant for their insightful comments. 

\section*{Statement on Potential Conflicts of Interest}

The second author is a PhD student at Université Paris Dauphine–PSL, funded by the Research Initiative ``Blockchain et Intelligence Artificielle pour les infrastructures de marché,'' under the aegis of the Institut Europlace de Finance, in partnership with LCH SA (an LSEG business). The authors bear sole responsibility for the content of this publication, which does not reflect the views or practices of LCH SA.

\section*{Statement on Data Availability}

The data that support the findings of this study are available from the authors upon reasonable request.

\section*{Statement on ChatGPT Usage}

ChatGPT was used solely for language refinement and editing. All content, ideas, and analyses are original and remain the sole responsibility of the authors.

\section*{Statement on open access}
For the purpose of open access, the authors have applied a CC BY public copyright licence to any author accepted manuscript version arising from this submission.

\end{document}